\newif\ifextpdf
\providecommand{\tabularnewline}{\\}
\providecommand{\algorithmname}{Algorithm}
\theoremstyle{plain}
\newtheorem{thm}{\protect\theoremname}
\theoremstyle{definition}
\newtheorem{defn}[thm]{\protect\definitionname}
\theoremstyle{remark}
\newtheorem{rem}[thm]{\protect\remarkname}
\theoremstyle{plain}
\newtheorem{lem}[thm]{\protect\lemmaname}
\theoremstyle{plain}
\newtheorem{cor}[thm]{\protect\corollaryname}
\theoremstyle{plain}
\newtheorem{fact}[thm]{\protect\factname}
\pgfplotsset{compat=newest} \pgfplotsset{plot coordinates/math
parser=false}
\setlist[description]{labelindent=0.25in,labelwidth=0.6in}
\providecommand{\corollaryname}{Corollary}
\providecommand{\definitionname}{Definition}
\providecommand{\factname}{Fact}
\providecommand{\lemmaname}{Lemma}
\providecommand{\remarkname}{Remark}
\providecommand{\theoremname}{Theorem}
\begin{document}

\title{Approaching Capacity at High Rates\\
with Iterative Hard-Decision Decoding}

\author{Yung-Yih Jian, Henry D. Pfister, and Krishna R. Narayanan\thanks{This work was done while the authors were with the Department of Electrical and Computer Engineering at Texas A\&M University. The material is based upon work supported by the National Science Foundation (NSF) under Grants No. 0747470 and No. 1320924. Any opinions, findings, conclusions, and recommendations expressed in this material are those of the authors and do not necessarily reflect the views of these sponsors. This research was presented in part at the 2012 International Symposium on Information Theory in Cambridge, Massachusetts~\cite{Jian-isit12} and the 2013 IEEE GLOBECOM conference in Atlanta, Georgia~\cite{Jian-globe13}.

Yung-Yih Jian is currently with Qualcomm Inc. in Santa Clara, CA (e-mail: yungyih.jian@gmail.com). Henry D. Pfister is currently an associate professor in the Department of Electrical and Computer Engineering at Duke University (email: henry.pfister@duke.edu). Krishna R. Narayanan is currently a full professor in the Department of Electrical and Computer Engineering at Texas A\&M University (email: krn@tamu.edu).}
}

\maketitle
\global\long\def\emap{\epsilon^{\mathrm{MAP}}}

\global\long\def\ebp{\epsilon^{\mathrm{BP}}}

\global\long\def\pbmap{p^{\mathrm{BMAP}}}

\global\long\def\pmp{p^{\mathrm{MP}}}

\global\long\def\phd{p^{\mathrm{HD}}}

\global\long\def\hmp{h^{\mathrm{MP}}}

\global\long\def\hbmap{h^{\mathrm{BMAP}}}

\global\long\def\xmap{x^{\mathrm{MAP}}}

\global\long\def\xbp{x^{\mathrm{BP}}}

\global\long\def\lbmap{\lambda^{\mathrm{BMAP}}}

\global\long\def\xmp{x^{\mathrm{MP}}}

\global\long\def\d{\mathrm{d}}
\global\long\def\hx{\hat{x}}
\global\long\def\hf{\hat{f}}
\global\long\def\ovU{\overline{U}}

\global\long\def\ovp{\overline{p}}

\global\long\def\itLambda{\mathit{\Lambda}}
\global\long\def\itTheta{\mathit{\Theta}}
\global\long\def\Cc{\mathcal{C}}

\global\long\def\hfn{\hat{f}_{n}}
\global\long\def\hpn{\hat{p}_{n}}

\global\long\def\hl{\hat{\lambda}}
\global\long\def\hr{\hat{\rho}}
\global\long\def\Uh{\hat{U}}
\global\long\def\ovl{\overline{\lambda}}
\global\long\def\ovr{\overline{\rho}}

\global\long\def\Dec{\mathsf{D}}
\global\long\def\nbr{\mathcal{N}}
\global\long\def\iDec{\hat{\mathsf{D}}}
\global\long\def\dmin{d_{\mathrm{min}}}

\global\long\def\A{\boldsymbol{A}}
\global\long\def\x{\boldsymbol{x}}
\global\long\def\y{\boldsymbol{y}}

\global\long\def\tf{\tilde{f}}
\global\long\def\tg{\tilde{g}}

\global\long\def\hVn{\hat{U_{n}}}
\global\long\def\hV{\hat{U}}
\global\long\def\hU{\hat{V}}

\global\long\def\vect#1{\bm{#1}}

\global\long\def\set#1{\mathcal{#1}}

\global\long\def\op#1{\mathtt{#1}}

\ifextpdf
\else
\input{./graphical_settings}
\fi

\begin{abstract}
A variety of low-density parity-check (LDPC) ensembles have now been observed to approach capacity with message-passing decoding. However, all of them use soft (i.e., non-binary) messages and a posteriori probability (APP) decoding of their component codes. In this paper, we show that one can approach capacity at high rates using iterative hard-decision decoding (HDD) of generalized product codes. Specifically, a class of spatially-coupled GLDPC codes with BCH component codes is considered, and it is observed that, in the high-rate regime, they can approach capacity under the proposed iterative HDD. These codes can be seen as generalized product codes and are closely related to braided block codes. An iterative HDD algorithm is proposed that enables one to analyze the performance of these codes via density evolution (DE).
\end{abstract}

\begin{IEEEkeywords}
GLDPC codes, density evolution, product codes, braided codes, syndrome decoding
\end{IEEEkeywords}

\section{Introduction}

In his groundbreaking 1948 paper, Shannon defined the capacity of a noisy channel as the largest information rate for which reliable communication is possible \cite{Shannon-bell48}. Since then, researchers have spent countless hours looking for ways to achieve this rate in practical systems. In the 1990s, the problem was essentially solved by the introduction of iterative soft decoding for turbo and low-density parity-check (LDPC) codes \cite{Berrou-icc93,Luby-it01,Richardson-it01*2}. Although the decoding complexity is significant, these new codes were adopted quickly in wireless communication systems where the data rates were not too large \cite{Bender-commag00,Douillard-brest00}. In contrast, complexity issues have slowed their adoption in very high-speed systems, such as those used in optical and wireline communication. 

Introduced by Gallager in 1960, LDPC codes are linear block codes defined by a sparse parity-check matrix \cite{Gallager-60}. Using the parity-check matrix, an $(N,K)$ LDPC code can be represented by a Tanner graph, which is a bipartite graph with $N$ bit nodes and $N-K$ check nodes. The check nodes in the Tanner graph of an LDPC code represent the constraint that the group of bit nodes connected to a check node should form a codeword in a single-parity check (SPC) code. In 1981, Tanner generalized LDPC codes by replacing the SPC constraint nodes with more general constraints \cite{Tanner-it81}. Particularly, the bit nodes connected to a check node are constrained to be codewords of $(n,k)$ linear block codes such as Hamming codes, Bose-Chaudhuri-Hocquengham (BCH) codes or Reed-Solomon codes. After their introduction by Tanner, generalized LDPC (GLDPC) codes remained largely unexplored until the work of Boutros \emph{et al.} \cite{Boutros-icc99} and Lentmaier and Zigangirov \cite{Lentmaier-comlett99}. 

GLDPC codes can have both large minimum distance and good iterative decoding thresholds \cite{Miladinovic-com08}. But, the per-iteration decoding complexity of belief-propagation (BP) decoding of GLDPC codes is typically much higher than LDPC codes since optimal soft-input soft-output (SISO) decoding has to be performed for each component block code. However, the number of iterations required for the decoding algorithm to converge can be substantially smaller. Recently, generalized product codes, \emph{i.e.,} GLDPC codes with degree-$2$ bits, have been widely considered in optical communication systems \cite{itug9751}. In \cite{Djordjevic-jlt05}, GLDPC codes were proposed for 40Gb/s optical transport networks and it was shown that these codes outperform turbo codes by about 1 dB at a rate of 0.80. As such, GLDPC codes can provide high coding gains. But, if the full BP decoder is used, then the decoding complexity is still prohibitively high for implementation in very high-speed systems. 

In this paper, we show that, by using iterative \emph{hard-decision decoding} (HDD) of generalized product codes with BCH component codes, one can approach the capacity of the binary symmetric channel (BSC) in the high-rate regime. We consider an ensemble of spatially-coupled GLDPC codes based on $t$-error correcting BCH codes. For the BSC, we show that the redundancy-threshold tradeoff of this ensemble, under iterative HDD, scales optimally in the high-rate regime. To the best of our knowledge, this is the first example of an iterative HDD system that can provably approach capacity.  It is interesting to note that iterative HDD of product codes was first proposed well before the recent revolution in iterative decoding but the performance gains were limited~\cite{Abramson-comtech68}. Iterative decoding of product codes became competitive only after the advent of iterative soft decoding based on the turbo principle~\cite{Lodge-icc93,Pyndiah-com98}. A modified iterative HDD for GLDPC codes was also proposed by Miladinovic and Fossorier in~\cite{Miladinovic-com08} and improved threshold performance was observed. 

Under the assumption that the component code decoder corrects all patterns of $t$ or fewer errors and leaves all other cases unchanged, the asymptotic noise threshold for product codes has been studied in~\cite{Schwartz-isit05,Justesen-itw07}. In~\cite{Schwartz-isit05}, Schwartz \emph{et al.} analyze the asymptotic block error probability for product codes using combinatorial arguments. By using random graph arguments, another asymptotic threshold analysis, based on the result of the existence of ``$k$-core'' in a random graph~\cite{Pittel-jctb96}, is proposed by Justesen \emph{et al.}~\cite{Justesen-itw07}. Finally, counting arguments are used in~\cite{Barg-it11} to analyze the iterative HDD of GLDPC codes for adversarial error patterns and, hence, somewhat lower thresholds are reported.

Convolutional LDPC codes (or spatially-coupled LDPC codes) were introduced in~\cite{Felstrom-it99} and later discovered to achieve the MAP threshold under iterative decoding~\cite{Lentmaier-it10,Kudekar-it11}. Spatially-coupled GLDPC codes were introduced and analyzed in~\cite{Lentmaier-isit10}. Our choice of ensemble was motivated by the generalized product codes now used in optical communications \cite{itug9751} and their similarity to braided block codes \cite{Truhachev-isit03,Feltstrom-it09}. In particular, we consider the iterative HDD of spatially-coupled generalized product codes with $t$-error correcting component codes. This is similar to other recent work on coding system for optical communication systems \cite{Justesen-commag10,Justesen-toc11,Smith-jlt12,Jian-globe13}. The main difference is that the proposed iterative HDD updates messages using only the extrinsic information. Therefore, HDD of our spatially-coupled GLDPC ensemble can be rigorously analyzed via density evolution (DE) even when miscorrection occurs. This type of analysis actually dates back to Tanner, who applied it to uncoupled GLDPC codes with Hamming component codes in~\cite{Tanner-it81}. This DE analysis also allows us to show that iterative HDD can approach capacity in the high-rate regime. Also, for generalized product codes, a practical implementation of the proposed iterative HDD is introduced.

It is worth noting that a number of recent papers consider interesting variations  of this work~\cite{Zhang-isit15,Truhachev-isit16,Hager-isit16,Hager-it17}.

\vspace{0mm}

\section{Ensembles and Decoding Algorithms\label{sec: Ensembles}}

In this section, various code ensembles and decoding algorithms are introduced. We first recall the GLDPC ensemble. Based on the GLDPC ensemble, the spatially-coupled GLDPC ensemble is introduced. Also, a modified iterative HDD algorithm for GLDPC codes is proposed in this section. Since the proposed iterative HDD updates hard-decision messages only from extrinsic hard-decision messages, the performance of the proposed iterative HDD can be analyzed by DE. An ideal iterative HDD algorithm is also discussed, and its DE is described for the purpose of comparing with the proposed iterative HDD.

\subsection{Ensembles}

Let $\mathcal{C}$ be an $(n,k,\dmin)$ binary linear code that can correct all error patterns of weight at most $t$ (i.e., $\dmin\geq2t+1$). For example, one might choose $\mathcal{C}$ to be a primitive BCH code with parameters $(2^{\nu}-1,2^{\nu}-\nu t-1,2t+1)$. Now, we consider a GLDPC ensemble where every bit node satisfies two code constraints defined by $\mathcal{C}$. 
\begin{defn}
Each element of the $\left(\mathcal{C},m\right)$ GLDPC ensemble is defined by a Tanner graph shown in Figure \ref{fig: GLDPC} and denoted by $\set G=\left(\set I\cup\set J,\set E\right)$. There are $N=\frac{mn}{2}$ degree-2 bit nodes in set $\set I$, and $m$ degree-$n$ code-constraint (or constraint) nodes defined by $\mathcal{C}$ in set $\set J$. A random element from the ensemble is constructed by using an uniform random permutation for the $mn$ edges from the bit nodes to the constraint nodes. From the construction of the code, one can show that the design rate of $(\Cc,m)$ ensemble is 
\[
R=\frac{N-m(n-k)}{N}=1-\frac{2(n-k)}{n}=2\frac{k}{n}-1.
\]

\begin{figure}[t]
\begin{center}
\ifextpdf
\includegraphics{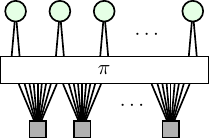}
\else
\input{./tikz_GLDPC}
\fi
\end{center}
\vspace{-4mm}\caption{A $(\protect\Cc,m)$ GLDPC ensemble, where $\pi$ is a random permutation.\label{fig: GLDPC}}
\end{figure}
\end{defn}
Now, we consider a spatially-coupled GLDPC ensemble where every bit node satisfies two code constraints defined by $\mathcal{C}$. Similar to the definition introduced in \cite{Kudekar-it11}, the spatially-coupled GLDPC ensemble $\left(\mathcal{C},m,L,w\right)$ is defined as follows.
\begin{defn}
\label{def: SCEnsemble} %
{} The Tanner graph of an element of the $\left(\mathcal{C},m,L,w\right)$ spatially-coupled GLDPC ensemble contains $L$ positions, $\{1,2,\dots,L\}$, of bit nodes and $L+w-1$ positions, $\{1,2,\dots,L+w-1\}$, of code-constraint nodes defined by $\set C$. Let $m$ be chosen such that $mn$ is divisible by both $2$ and $w$. At each position, there are $N=\frac{mn}{2}$ degree-$2$ bit nodes and $m$ degree-$n$ code-constraint nodes. A random element of the $\left(\mathcal{C},m,L,w\right)$ spatially-coupled GLDPC ensemble is constructed %
as follows. At each bit position and code-constraint position, the $mn$ sockets are partitioned into $w$ groups of $\frac{mn}{w}$ sockets via a uniform random permutation. Let $\set S_{i,j}^{(b)}$ and $\set S_{i,j}^{(c)}$ be, respectively, the $j$-th group at the $i$-th bit position and the $j$-th group at $i$-th code-constraint position, where $j\in\{0,1,\dots,w-1\}$. The Tanner graph is constructed by connecting $\set S_{i,j}^{(b)}$ to $\set S_{i+j,w-j-1}^{(c)}$ (\emph{i.e.}, by mapping the $\frac{mn}{w}$ edges between the two groups). An example of the $\left(\mathcal{C},m,L,w\right)$ spatially-coupled GLDPC ensemble with $w=3$ is shown in Figure \ref{fig: SC-GLDPC}.

\begin{figure*}[t]
\begin{center}
\ifextpdf
\includegraphics{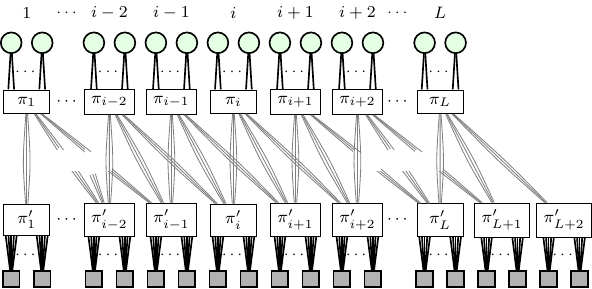}
\else
\input{./tikz_SC-GLDPC}
\fi
\end{center}
\vspace{-4mm}\caption{An example of $\left(\mathcal{C},m,L,w\right)$ spatially-coupled GLDPC ensemble, where $w=3$, and $(\pi_{i},\pi'_{i})$ are random permutations at position $i$ for bit nodes and constraint nodes, respectively. \label{fig: SC-GLDPC}}
\end{figure*}
\end{defn}
\begin{rem}
Since extra constraint nodes are required for spatial coupling, the design rate of the spatially-coupled ensemble is smaller than the design rate of the underlying ensemble \cite{Kudekar-it11}. According to the construction in Definition \ref{def: SCEnsemble}, $m(w-1)$ new constraint nodes are added after coupling. Thus, there are $NL$ bit nodes and $m(L+w-1)$ constraint nodes in the spatially-coupled code. The resulting design rate is at least 
\begin{align}
R_{SC} & \geq\frac{NL-m(L+w-1)(n-k)}{NL}\nonumber \\
 & =1-\frac{2(n-k)}{n}\left(1+\frac{w-1}{L}\right)\label{eq: SCR}\\
 & =R-(1-R)\frac{w-1}{L},\label{eq: SCR2}
\end{align}
where the second term in (\ref{eq: SCR2}) is the rate loss due to adding constraint nodes. One can see that the rate loss vanishes as $L\to\infty$. We note that the actual rate, which is defined as the ratio of the dimension of the code and the codeword length, may be slightly higher due to the implied shortening of the code constraints.
\end{rem}
\vspace{0mm}

\subsection{Iterative HDD with Ideal Component Decoders\label{subsec: IdealHDD}}

In this section, an iterative HDD with an ideal (i.e., genie aided) component-code decoder is introduced. This decoder corrects bits only when the number of error bits is less than or equal to the decoding radius $t$. In particular, the aid of a genie allows the ideal decoder to avoid miscorrection. To be explicit, we define $\iDec\colon\Cc\times\{0,1\}^{n}\rightarrow\{0,1\}^{n}$ as the operator of the ideal decoder. Given that a codeword $\vect c\in\Cc$ is transmitted, let $\vect e\in\{0,1\}^{n}$ be a binary error vector and $\vect v\triangleq\vect c\oplus\vect e$ be the received vector. Then, the output of the ideal decoder is 
\begin{align*}
\iDec(\vect c,\vect e) & \triangleq\begin{cases}
\vect c & \mbox{ if }d_{H}(\vect 0,\vect e)\leq t\\
\vect c\oplus\vect e & \mbox{ otherwise,}
\end{cases}
\end{align*}
where $d_{H}(\cdot,\cdot)$ is the Hamming distance between the two arguments. Also, the bit-level mapping implied by the ideal decoder, denoted by $\iDec_{i}\colon\Cc\times\{0,1\}^{n}\rightarrow\{0,1\}$, maps $(\vect c,\vect e)$ to the $i$-th bit of $\iDec(\vect c,\vect e)$. The decoder performance is independent of the transmitted codeword since the component decoder satisfies the symmetry condition, \emph{i.e.,} $\iDec(\vect c\oplus\vect c',\vect e)=\iDec(\vect c,\vect e)\oplus\vect c'$ for all $\vect c'\in\Cc$. In this case, the iterative HDD can be analyzed (e.g., using DE) under the assumption that the all-zero codeword is transmitted. The DE analysis for iterative HDD with ideal component decoders is discussed in Section \ref{subsec: IdealDE}. Under the all-zero codeword assumption, we can also define a simplified ideal decoding function $\iDec(\vect e)\triangleq\iDec(\vect 0,\vect e)$ that takes only one argument.

\begin{figure}[t]
\begin{center}
\ifextpdf
\includegraphics{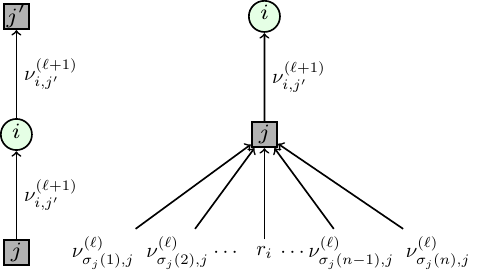}
\else
\input{./tikz_IterHDD}
\fi
\end{center}
\vspace{-4mm}

\caption{The proposed iterative HDD. At the $i$-th variable node, the input message from the $j$-th constraint node is forwarded to the $j'$-th constraint node. At the $j$-th constraint node, the input corresponding to the $i$-th variable node is replaced by $r_{i}$. \label{fig:IterHDD}}
\end{figure}

Now, we start to describe the ideal iterative HDD under the assumption that the all-zero codeword was transmitted. Decoding proceeds by passing binary messages along edges connecting variable nodes and constraint nodes. Let $r_{i}\in\{0,1\}$ denote the received channel value for the $i$-th variable node, and let $\nu_{i,j}^{(\ell)}\in\{0,1\}$ be the binary message from the $i$-th variable node to the $j$-th constraint node in the $\ell$-th iteration. For simplicity, we assume no bit appears twice in a constraint, and let $\sigma_{j}(k)$ be the index of the variable node connected to the $k$-th socket of the $j$-th constraint. Let $j'$ be the other neighbor of the $i$-th variable node, and $\sigma_{j}(k)=i$. Then, the iterative decoder is defined by the recursion 
\[
\nu_{i,j'}^{(\ell+1)}=\iDec_{k}\left(\bm{v}_{i,j}^{(\ell)}\right),
\]
where the candidate decoding vector for the $j$-th constraint node and $i$-th variable node is
\begin{equation}
\!\!\!\bm{v}_{i,j}^{(\ell)}\!\triangleq\!\left(\!\nu_{\sigma_{j}(1),j}^{(\ell)},\!\cdots\!,\nu_{\sigma_{j}(k-1),j}^{(\ell)},r_{i},\nu_{\sigma_{j}(k+1),j}^{(\ell)},\!\cdots\!,\nu_{\sigma_{j}(n),j}^{(\ell)}\right)\!.\!\!\label{eq: Candidate}
\end{equation}
Note that the $k$-th entry is replaced by $r_{i}$. It is important to note that the above decoder passes extrinsic messages and \emph{is not identical} to the conventional approach that simply iterates by exchanging the outputs of the component code decoders. In particular, replacing the $k$-th element by the received channel output enables rigorous DE analysis. For the special case of Hamming component codes, this algorithm simplifies and becomes identical to Algorithm A in~\cite{Tanner-it81}. An illustrative figure showing the messages of the proposed iterative HDD on the graph can be found in Figure \ref{fig:IterHDD}. At first glance, it may not be clear that this algorithm allows a DE analysis. To see this more clearly, one can picture a different constraint-node operation where decoding is done twice, once with the $k$-th entry in~\eqref{eq: Candidate} replaced by~0 and once with it replaced by~1. Then, the variable node chooses the correct value based on $r_{i}$. This idea is used to describe the low-complexity version of this decoder in Section~\ref{subsec:LowComplexEMP}. 

\begin{figure*}[t]
\begin{center}
\scalebox{0.8}{%
\ifextpdf
\includegraphics{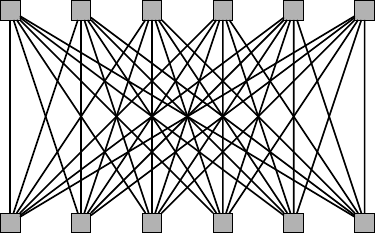}
\else
\begin{tikzpicture}[scale=1.2]
\tikzset{VertexStyle/.style ={fill=black!30,regular polygon,regular polygon sides=4,draw}}
\SetVertexMath
\SetVertexNoLabel
\grCompleteBipartite[RA=1,RB=1,RS=3,prefix=a,prefixx=b,Math=true]{6}{6}
\end{tikzpicture}\fi}
\hspace{10mm}
\scalebox{0.8}{%
\ifextpdf
\includegraphics{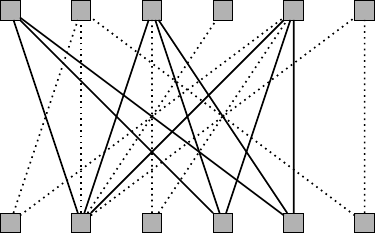}
\else
\begin{tikzpicture}[scale=1.2]
\tikzset{VertexStyle/.style ={fill=black!30,regular polygon,regular polygon sides=4,draw}}
\SetVertexMath
\SetVertexNoLabel
\grEmptyPath[Math,RA=1,RS=0,prefix=a]{6}   \grEmptyPath[Math,RA=1,RS=3,prefix=b]{6} 
\EdgeFromOneToSel{a}{b}{1}{0,2,4}
\EdgeFromOneToSel{a}{b}{3}{0,2,4}
\EdgeFromOneToSel{a}{b}{4}{0,2,4}   \EdgeFromOneToSel[style=dotted]{a}{b}{1}{1,3,5}   \EdgeFromOneToSel[style=dotted]{a}{b}{2}{2,4}   \EdgeFromOneToSel[style=dotted]{a}{b}{5}{1,5}   \EdgeFromOneToSel[style=dotted]{a}{b}{0}{1,4}  \end{tikzpicture}\fi}
\end{center}\vspace{-4mm}

\caption{On the left is the Tanner graph of a product code based on a $(6,3)$ binary code with minimum distance 3. The degree-2 bit nodes are suppressed and represented only by an edge. On the right is the ``error subgraph'' where each drawn edge (solid or dotted) denotes an erased bit. Since each code can correct $2$ erasures, decoding continues until all codes have either 0 erasures or $\geq3$ erasures. Thus, the $3$-core of this graph (denoted by solid edges) is equal to the stopping set that is found by iterative decoding. \label{fig:kcore}}
\end{figure*}

A \emph{stopping set} is an error pattern where the messages associated with every component code either have $0$ errors or greater than $t$ errors. For such a pattern, it is easy to verify that the conventional approach of running the ideal decoder for each component code results in no changes to the messages. Since each iteration of decoding can only reduce the number of errors, the final error pattern to which both decoders converge must be a stopping set. For the iterative HDD, we define the final error pattern as the set of bits where both component codes send a $1$ (i.e., error) message. This odd convention follows from the fact that stopping sets arise somewhat more naturally in the context of erasure channels and decoding. It also has the benefit that, with ideal component decoders, both the conventional approach and the above iterative HDD return the same final error pattern after sufficiently many iterations. As we will see later, this equivalence does not hold when the component code decoders introduce miscorrections.

This decoding problem is also very closely connected to a well-known greedy algorithm for finding the $k$-core in a graph~\cite{Pittel-jctb96}. The $k$-core is the largest induced subgraph where all vertices have degree at least $k$. The connection can be seen by considering an \emph{error graph} whose vertices are the code constraints where two vertices are connected if there is a bit in both code constraints and that bit is received error. One can obtain this error graph from the Tanner graph by deleting all variable nodes associated with correctly received bits and then collapsing the remaining degree-2 variables nodes into edges that connect two constraint nodes. Therefore, the degree of a constraint vertex is equal to the number of errors in its attached bits and the error graph represents all errors and their associations with component codes (e.g., see Figure~\ref{fig:kcore}). The greedy algorithm to find the $k$-core proceeds by removing any vertex of degree less than $k$ along with all of its edges (because this cannot possibly be part of the $k$-core). Likewise, conventional iterative decoding with ideal component decoders can be seen as correcting the errors in a component code that contains fewer than $t+1$ errors. Therefore, any stopping set found by iterative decoding without miscorrection is equivalent to the $(t+1)$-core of the error graph. 

\subsection{Iterative HDD with Bounded Distance Decoders\label{subsec: iterBDD}}

It is well-known that GLDPC codes perform well under iterative soft decoding \cite{Lodge-icc93,Pyndiah-com98}. The main drawback is that a posteriori probability (APP) decoding of the component codes can require significant computation. For this reason, we consider iterative HDD with bounded-distance decoding (BDD) of the component codes.%
{} Since the message update rule is the same as the rule introduced in Section \ref{subsec: IdealHDD}, the technique of DE can also be employed to analyze the performance of this algorithm. Moreover, a practical implementation of the iterative HDD algorithm is proposed in Section \ref{sec: LowComplex},

Likewise, we start by defining the bit-level mapping implied by BDD, denoted by $\Dec_{i}\colon\{0,1\}^{n}\to\{0,1\}$, which maps the received vector $\bm{v}\in\{0,1\}^{n}$ to the $i$-th decoded bit according to the rule \vspace{0mm}

\[
\Dec_{i}(\bm{v})\triangleq\begin{cases}
c_{i} & \mbox{if }\bm{c}\in\mathcal{C}\mbox{ satisfies }d_{H}(\bm{c},\bm{v})\leq t\\
v_{i} & \mbox{if }d_{H}(\bm{c},\bm{v})>t\mbox{ for all }\bm{c}\in\mathcal{C}.
\end{cases}
\]
It is easy to verify that this decoder satisfies the symmetry condition, \emph{i.e.,} $\Dec_{i}(\bm{v}\oplus\bm{c})=\Dec_{i}(\bm{v})\oplus c_{i}$ for all $\bm{c}\in\mathcal{C}$ and $i=1,\ldots,n$. %

We follow the same definition in Section \ref{subsec: IdealHDD}. Let $r_{i}\in\{0,1\}$ denote the received channel value for variable node $i$ and $\nu_{i,j}^{(\ell)}\in\{0,1\}$ be the binary message from the $i$-th variable node to the $j$-th constraint node in the $\ell$-th iteration. The iterative decoder is defined by the recursion 
\begin{equation}
\nu_{i,j'}^{(\ell+1)}=\Dec_{k}\left(\bm{v}_{i,j}^{(\ell)}\right),\label{eq:MessagePassingRule}
\end{equation}
where%
{} $\vect v_{i,j}^{(\ell)}$ is the candidate decoding vector for the $j$-th constraint node and the $i$-th variable node defined in (\ref{eq: Candidate}). The corresponding DE analysis is discussed in Section \ref{subsec: BDDDE}.

\section{Summary of Main Results}

Let BSC$(p)$ denote a BSC with error probability $p$. According to the channel coding theorem, a long random code can achieve reliable communication over a BSC$(p)$ if its code rate is less than $C(p)$, where $C(p)$ is the channel capacity of a BSC$(p)$. Now, consider a code ensemble of rate $R$ that has a noise threshold of $p^{*}$. Since $1-R$ is the normalized redundancy for the code, one can interpret the quantity $1-C(p)$ as the minimal normalized redundancy required for reliable communication. We evaluate the efficiency of the code ensemble by comparing its normalized redundancy with $1-C(p^{*})$. In this paper, we say that a code ensemble can approach capacity if the ratio $\frac{1-C(p^{*})}{1-R}$ can be made arbitrarily close to $1$. A formal statement is given by the following definition.
\begin{defn}
\label{def: EpsRedund}Let $C(p)$ be the capacity of a BSC$(p)$. For some $\epsilon>0$, a code ensemble with rate $R$ and threshold $p^{*}$ is called \emph{$\epsilon$-redundancy achieving} if 
\begin{align*}
\frac{1-C\left(p^{*}\right)}{1-R} & \geq1-\epsilon.
\end{align*}
\end{defn}
Using Definition \ref{def: EpsRedund}, the main result of this paper is the following theorem.
\begin{thm}
\label{thm: main}For any $\epsilon>0$, there exists a tuple $(t,n,L,w)$ such that iterative HDD of the $(\set C,m,L,w)$ GLDPC spatially-coupled ensemble is $\epsilon$-redundancy achieving when $\mathcal{C}$ is a $t$-error correcting BCH code of length $n$.
\end{thm}
\begin{IEEEproof}
See Section \ref{subsec: MainProof}.
\end{IEEEproof}
\begin{rem}
The proof of Theorem~\ref{thm: main} constructs iterative coding systems whose $\epsilon$-redundancy can be made arbitrarily small. However, as $\epsilon\to0$, the code rate approaches 1 and the system parameters satisfy $m\gg L\gg w\gg t$ and $t\to\infty$.
\end{rem}
In the rest of this paper, we prove Theorem \ref{thm: main} using the following steps. The DE analysis of the proposed iterative HDD algorithms is introduced in Section \ref{sec: DE}. The high-rate scaling limits of the DE updates, for both $(\Cc,m)$ and $(\Cc,m,L,w)$ GLDPC ensembles with BCH component codes, are described in Section \ref{sec:BCH}. The thresholds of the iterative HDD, for both $(\Cc,m)$ and $(\Cc,m,L,w)$ GLDPC ensembles, are analyzed in Section \ref{sec:Bounds}. Using the threshold results in Section \ref{sec:Bounds}, the proof of Theorem \ref{thm: main} is provided in Section \ref{sec: Approaching}.%

\section{Density Evolution \label{sec: DE}}

By the fact that any fixed-depth neighborhood of a randomly chosen vertex in the Tanner graph is a tree with high probability as $m\rightarrow\infty$ \cite[\S 3.8]{RU-2008}, and from the messages shown in Figure \ref{fig:IterHDD}, one can easily expand a depth-$m$ tree and show that the inputs to the tree are all from received channel values. Under the assumption that the channel values are independent and identically distributed random variables, the density of the messages passed on the edges can be precisely computed using a DE recursion. For HDD of the component codes, this DE can be written as a one-dimensional recursion. 

\subsection{Iterative HDD with Ideal Component Decoders\label{subsec: IdealDE}}

For a randomly chosen edge $(i,j')$ connecting a variable node $i$ and a constraint node $j'$, let $j\triangleq\nbr(i)\setminus j'$ be the other neighbor of the bit node $i$. One can see that the message passed on $(i,j')$ is an error only when the channel output of the $i$-th bit is an error and there are at least $t$ error inputs to the constraint node $j$. For a $n\in\mathbb{N}$, let $x^{(\ell)}$ be the error probability of a random chosen bit-to-constraint message in the $\ell$-th iteration. The DE recursion of the iterative HDD with ideal component decoders for the $\left(\mathcal{C},m\right)$ GLDPC ensemble is%
{} 
\begin{equation}
x^{(\ell+1)}=p\hat{f}_{n}\left(x^{(\ell)}\right),\label{eq: NomissDE}
\end{equation}
where 
\begin{align}
\hat{f}_{n}(x) & \triangleq\sum_{i=t}^{n-1}\binom{n-1}{i}x^{i}\big(1-x\big)^{n-i-1}.\label{eq: hfn}
\end{align}

Let the noise threshold of the iterative HDD with ideal component decoders be defined by 
\begin{align*}
\hpn^{*} & \triangleq\sup\left\{ p\in(0,1]\,\big|\,p\hfn(x)<x,\,x\in(0,p]\right\} .
\end{align*}
Similar to DE for LDPC codes on the BEC \cite[pp. 95--96]{RU-2008}, there is a compact characterization of the hard-decision decoding threshold $\hpn^{*}$.%
{} Since $p\hfn(x)$ is monotone in $p$, the threshold $\hpn^{*}$ can be obtained by 
\begin{equation}
\hpn^{*}=\inf_{x\in(0,1)}\frac{x}{\hf_{n}(x)}.\label{eq: BPthreshold}
\end{equation}

\begin{rem}
This type of analysis is also related to the threshold analysis for the $k$-core problem in~\cite{Pittel-jctb96}. Schwartz \emph{et al.} also perform a combinatorial analysis in~\cite{Schwartz-isit05} to determine the decoding threshold for asymptotically long product codes. Their conclusion is somewhat different from other reported results because they assume a finite number of decoding iterations and require that the block error rate vanishes. However, they treat the number of iterations explicitly and one can extract threshold estimates from~\cite[Cor.~2]{Schwartz-isit05} in the limit as the number decoding rounds tends to infinity. In this case, a little algebra shows that the threshold $c$-value is $c^{*}=(t!)^{1/t}$. If the number of decoding rounds is greater than $2\log\log n$ as $n\to\infty$, then their equations imply that the noise threshold is $p^{*}=c^{*}/n=(t!)^{1/t}/n$. However, their analysis does not allow the number of iterations to depend on $n$. Thus, this calculation only gives a lower bound on the correct threshold. 
\end{rem}
For the $(\Cc,m,L,w)$ ensemble, let $x_{i}^{(\ell)}$ be the average error probability of the hard-decision messages emitted by bit nodes at position $i$ in the $\ell$-th iteration. Assume that $x_{i}^{(\ell)}=0$ for all $i\notin\{1,2,\dots,L\}$ and $\ell\geq0$. According to the construction in Definition \ref{def: SCEnsemble}, the average error probability of the hard-decision inputs to code-constraint nodes at position $i$ is $y_{i}^{(\ell)}=\frac{1}{w}\sum_{j=0}^{w-1}x_{i-j}^{(\ell)}$. Then, the error probability of the hard-decision messages emitted by bit nodes at position $i$ in the $(\ell+1)$-th iteration is 
\begin{align}
x_{i}^{(\ell+1)} & =\frac{1}{w}\sum_{k=0}^{w-1}p\hf_{n}\left(y_{i+k}^{(\ell)}\right)\nonumber \\
 & =p\left(\frac{1}{w}\sum_{k=0}^{w-1}\hat{f}_{n}\left(\frac{1}{w}\sum_{j=0}^{w-1}x_{i-j+k}^{(\ell)}\right)\right).\label{eq: SCDE}
\end{align}

\begin{rem}
\label{rem: ConstraintInput} An LDPC ensemble whose DE results in spatial averaging, similar to~(\ref{eq: SCDE}), was introduced by Kudekar \emph{et al.} in~\cite{Kudekar-it11}. In this case,~(\ref{eq: SCDE}) tracks the average error probability of the output hard-decision messages from bit nodes at each position. One can also obtain the DE update of the average error probability of the input hard-decision messages to the code-constraint nodes at the position $i\in\{1,2,\dots,L+w-1\}$ by
\begin{align}
y_{i}^{(\ell+1)} & =\frac{1}{w}\sum_{j=\max\{i-L,0\}}^{\min\{i-1,w-1\}}x_{i-j}^{(\ell+1)}\nonumber \\
 & =\frac{1}{w}\sum_{j=\max\{i-L,0\}}^{\min\{i-1,w-1\}}p\left(\frac{1}{w}\sum_{k=0}^{w-1}\hf_{n}\left(y_{i-j+k}^{(\ell)}\right)\right).\label{eq: SCDE2}
\end{align}
In the following analysis, we use~(\ref{eq: SCDE2}) to find the noise threshold of the spatially-coupled system with iterative HDD because this is the update used in~\cite{Yedla-istc12,Yedla-it14}. The thresholds obtained from~(\ref{eq: SCDE}) and from~(\ref{eq: SCDE2}) are identical because, for any finite $w>0$, it is easy to verify that $x_{i}^{(\ell)}\rightarrow0$ for all $i\in\{1,2,\dots,L\}$ if and only if $y_{i}^{(\ell)}\rightarrow0$ for all $i\in\{0,1,\dots,L+w-1\}$, .
\end{rem}

\subsection{Iterative HDD with BDD\label{subsec: BDDDE}}

Since the component decoder is symmetric, it suffices to consider the case where the all-zero codeword is transmitted over a BSC with error probability $p$ \cite[pp. 188--191]{RU-2008}. Let $x^{(\ell)}$ be the error probability of the hard-decision messages passed from the variable nodes to the constraint nodes after $\ell$ iterations. For an arbitrary symmetric decoder, let $P_{n}(i)$ be the probability that a randomly chosen bit is decoded incorrectly when it is initially incorrect and there are $i$ random errors in the other $n-1$ inputs. Likewise, let $Q_{n}(i)$ be the probability that a randomly chosen bit is decoded incorrectly when it is initially correct and there are $i$ random errors in the other $n-1$ inputs. Then, for the $\left(\mathcal{C},m\right)$ GLDPC ensemble, the DE recursion implied by (\ref{eq:MessagePassingRule}) is defined by $x^{(0)}=p$, 
\begin{equation}
x^{(\ell+1)}=f_{n}\left(x^{(\ell)};p\right),\label{eq: fnDE}
\end{equation}
and (with $\overline{p}\triangleq1-p$) 
\begin{equation}
\!\!\!f_{n}(x;p)\!\triangleq\!\sum_{i=0}^{n-1}\!\binom{n\!-\!1}{i}x^{i}\big(1\!-\!x\big)^{n-i-1}\!\left(pP_{n}(i)\!+\!\overline{p}Q_{n}(i)\right)\!.\!\!\label{eq: fn}
\end{equation}

\begin{rem}
According to the definition of $P_{n}(i)$, it is clear that $P_{n}(i)=0$ for all $0\leq i\leq t-1$ since the total number of errors is less than or equal to $t$ and all errors are corrected by BDD. When $i\geq t$, $P_{n}(i)$ is the probability that both the input and the output of a randomly selected bit are errors. Given that the input of a randomly selected bit is an error, there are two cases where the output of the randomly selected bit is an error. The first case is that there is no codeword is within a distance of $t$ from the inputs. In this case, the error input of the randomly selected bit is passed to the output. The second case is that the decoder returns a codeword but the randomly selected bit is still an error. Since these cases are disjoint events, $P_{n}(i)$ is the sum of the probabilities of these two events. 

Similarly, it is easy to show that $Q_{n}(i)=0$ for $0\leq i\leq t$. Since the input of the randomly selected bit is always correct in this case, the only error event for $Q_{n}(i)$ is that a codeword with an error at the randomly selected bit is returned. Consider BDD with fixed decoding radius $t$, and a fixed $i\geq t$. It is easy to show that the number of error vectors with $i+1$ errors increases as $n$ increases. Since the decoding radius is fixed to $t$, the probability that a randomly selected error vector of weight $i+1$ falls in a $t$-ball centered at a codeword decreases as $n$ increases. Thus, $P_{n}(i)$ is increasing with respect to $n$. For an $(n,k,2t+1)$ binary primitive BCH code (or its $(n,k-1,2t+2)$ even-weight subcode), a rigorous proof showing $\lim_{n\to\infty}P_{n}(i)=1$ for $t\leq i\leq n-t-2$ is introduced in Lemma~\ref{lem:LimitOfPn(i)andnQn(i)}. \label{rem: P_n(i)Q_n(i)}
\end{rem}
For the iterative HDD with BDD described above, the quantities $P_{n}(i)$ and $Q_{n}(i)$ can be written in terms of the number of codewords of weight $l$ in $\set C$, denoted by $A_{l}$, \cite{Kim-com96}. Using the convention that $\binom{n}{k}=0$ if $n<0$, $k<0$, or $k>n$, we define 
\begin{equation}
l(i,\delta,j)\triangleq i-\delta+2j+1,\label{eq: l(i,d,j)}
\end{equation}
 
\[
\itTheta(n,i,\delta,j)\!\triangleq\!\binom{l(i,\delta,j)}{l(i,\delta,j)\!-\!j}\!\binom{n\!-\!l(i,\delta,j)\!-\!1}{\delta-1-j}\!\binom{n\!-\!1}{i}^{-1},
\]
and
\begin{align*}
\itLambda(n,i,\delta,j) & \!\triangleq\!\binom{l(i,\delta,j)\!-\!2}{l(i,\delta,j)\!-\!j\!-\!1}\!\binom{n\!-\!l(i,\delta,j)\!+\!1}{\delta-j}\!\binom{n\!-\!1}{i}^{-1}\!\!.
\end{align*}
Since all decoding regions are disjoint, one can compute %
\begin{align}
P_{n}(i) & =1-\sum_{\delta=1}^{t}\sum_{j=0}^{\delta-1}\frac{n-l(i,\delta,j)}{n}A_{l(i,\delta,j)}\itTheta(n,i,\delta,j)\label{eq: P(i)}
\end{align}
for $t\!\leq\!i\!\leq\!n\!-\!t\!-\!2$ and $P_{n}(i)\!=\!0$ for $0\!\leq\!i\!\leq\!t\!-\!1$. Similarly\footnote{The expression for $Q_{n}(i)$ in (\ref{eq: Q(i)}) corrects a small error in our previous work \cite[Eqn.~(4)]{Jian-isit12}.}, 
\begin{align}
Q_{n}(i) & =\sum_{\delta=1}^{t}\sum_{j=0}^{\delta}\frac{l(i,\delta,j)-1}{n}A_{l(i,\delta,j)-1}\itLambda(n,i,\delta,j)\label{eq: Q(i)}
\end{align}
for $t+1\leq i\leq n-t-1$ and $Q_{n}(i)=0$ for $0\leq i\leq t$. Note that, when the code contains the all-one codeword, $P_{n}(i)=1$ for $n-t-1\leq i\leq n-1$, and $Q_{n}(i)=1$ for $n-t\leq i\leq n-1$.

Let the noise threshold be defined by 
\begin{align}
p_{n}^{*} & \triangleq\sup\left\{ p\in(0,1]\,\big|\,f_{n}(x;p)<x,\,x\in(0,p]\right\} .\label{eq: pn}
\end{align}
According to the definition of $p_{n}^{*}$, one can show that $x^{(\ell)}\rightarrow0$ as $\ell\rightarrow\infty$ for all $p<p_{n}^{*}$. By rewriting $f_{n}(x;p)$ as $f_{n}(x;p)=p[f_{n}(x;1)-f_{n}(x;0)]+f_{n}(x;0)$, we know that $f_{n}(x;p)$ is monotone in $p$. However, it is not clear to us whether or not $f_{n}(x;p)$ is monotone in $x$. Therefore, the definition of $p_{n}^{*}$ does not imply that $\lim_{\ell\rightarrow\infty}x^{(\ell)}>0$ for all $p>p_{n}^{*}$. Define $x^{*}\triangleq\sup\{z\in[0,1]\mid f_{n}(x;0)\leq x\mbox{ for all }x\in(0,z]\}$. We can characterize $p_{n}^{*}$ similar to (\ref{eq: BPthreshold}) by 
\[
p_{n}^{*}=\inf_{x\in(0,x^{*})}\frac{x-f_{n}(x;0)}{f_{n}(x;1)-f_{n}(x;0)}.
\]
The idea is that each fixed-point value $x\in(0,x^{*})$ provides a upper bound on the threshold $p_{n}^{*}$ and that infimum of these lower bounds gives $p_{n}^{*}$ because there is some fixed-point value that is achieved at the threshold.
\begin{rem}
Since the operations at bit nodes and constraint nodes are both sub-optimal, one may observe that for a fixed tuple $(p,n,t)$, there exist some $x\in[0,1]$ such that $f_{n}(x;p)>p$. This implies that the average error probability of the messages emitted by bit nodes after one iteration will be worse than the error probability of the channel output. In this case, bit nodes can just send the received channel bits to their neighbors. With this modification, the resulting DE update equation is
\begin{align*}
x^{(\ell+1)} & =\min\{p,f_{n}(x^{(\ell)};p)\}.
\end{align*}
Let $q_{n}^{*}\triangleq\sup\left\{ p\in(0,1]\,\big|\,\min\left\{ f_{n}(x;p),p\right\} <x,\,x\in(0,p]\right\} $ be the noise threshold of the modified decoding algorithm. We claim that $p_{n}^{*}=q_{n}^{*}$ by the following argument. From the fact that $\min\{p,f_{n}(x;p)\}\leq f_{n}(x;p)$, we have $q_{n}^{*}\geq p_{n}^{*}$. Consider the case of $p>p_{n}^{*}$. From (\ref{eq: pn}), there exists some $x_{0}\in(0,p]$ such that $f_{n}(x_{0};p)\geq x_{0}$. Since $x_{0}\in(0,p]$, one can show that $\min\{p,f_{n}(x_{0};p)\}\geq x_{0}$ as well. Thus, we know $p>q_{n}^{*}$ from the definition of $q_{n}^{*}$. This implies that $q_{n}^{*}\leq p_{n}^{*}$, and therefore we conclude that $p_{n}^{*}=q_{n}^{*}$. Also, this change seems to have no significant effect on the number of iterations required to achieve a certain error rate.
\end{rem}

To derive the DE update equation of the $\left(\mathcal{C},m,L,w\right)$ spatially-coupled GLDPC ensemble, let $x_{i}^{(\ell)}$ be the average error probability of hard-decision messages emitted by bit nodes at position $i$ after the $\ell$-th iteration. According to Definition \ref{def: SCEnsemble}, the average error probability of input messages to a code-constraint node at position $i$ is $y_{i}^{(\ell)}=\frac{1}{w}\sum_{j=0}^{w-1}x_{i-j}^{(\ell)}$. It follows that $x_{i}^{(\ell+1)}=\frac{1}{w}\sum_{k=0}^{w-1}f_{n}(y_{i+k}^{(\ell)};p)$ for $i\in\{1,2,\dots,L\}$, where $f_{n}(x;p)$ is defined in (\ref{eq: fn}). We also set $x_{i}^{(\ell)}=0$ for $i\notin\{1,2,\dots,L\}$. Therefore, in the $(\ell+1)$-th iteration, the average error probability of the hard-decision messages emitted by bit nodes at the position $i\in\{1,2,\dots,L\}$ is given by \vspace{0mm}
\begin{equation}
x_{i}^{(\ell+1)}=\frac{1}{w}\sum_{k=0}^{w-1}f_{n}\left(\frac{1}{w}\sum_{j=0}^{w-1}x_{i-j+k}^{(\ell)};p\right).\label{eq: SCDE3}
\end{equation}
Similar to the discussion in Remark \ref{rem: ConstraintInput}, the DE update of the $\left(\mathcal{C},m,L,w\right)$ spatially-coupled GLDPC ensemble which tracks the average error probability of the input messages to the constraint nodes at the position $i\in\{1,2,\dots,L+w-1\}$ can also be written as
\begin{align*}
y_{i}^{(\ell+1)} & =\frac{1}{w}\sum_{j=\max\{i-L,0\}}^{\min\{i-1,w-1\}}x_{i-j}^{(\ell+1)}\\
 & =\frac{1}{w}\sum_{j=\max\{i-L,0\}}^{\min\{i-1,w-1\}}\frac{1}{w}\sum_{k=0}^{w-1}f_{n}\left(y_{i-j+k}^{(\ell)};p\right).
\end{align*}

\vspace{0mm}

\section{BCH Component Codes\label{sec:BCH}}

In the remainder of this paper, an $(n,k,2t+1)$ binary primitive BCH code (or its $(n,k-1,2t+2)$ even-weight subcode) will be used as the component code for both the $(\set C,m)$ GLDPC and $(\set C,m,L,w)$ spatially-coupled GLDPC ensembles. When the exact weight spectrum is known, one can compute $P_{n}(i)$ and $Q_{n}(i)$ using (\ref{eq: P(i)}) and (\ref{eq: Q(i)}), respectively. Otherwise, we use the asymptotically-tight binomial approximation 
\begin{align}
A_{l} & =\begin{cases}
2^{-\nu t}\binom{n}{l}\left(1+O\left(n^{-0.1}\right)\right) & \text{if }d\leq l\leq n-d,\\
1, & \text{if }l=0,\ l=n,\\
0, & \mbox{otherwise},
\end{cases}\label{eq: WeightDistrib}
\end{align}
for $n\geq n_{t}$, where $d=2t+1$, $n=2^{\nu}-1$ and $n_{t}$ is a constant depends on $t$ \cite{Sidelnokiv-ppi71}. %

For the $(n,k-1,2t+2)$ even-weight subcode of an $(n,k,2t+1)$ primitive BCH code, the number of codewords is denoted by $\tilde{A}_{l}$ where $\tilde{A}_{l}=A_{l}$ when $l$ is even and $\tilde{A}_{l}=0$ when $l$ is odd. Let $\tilde{P}_{n}(i)$ and $\tilde{Q}_{n}(i)$ be the miscorrection probabilities implied by $\tilde{A}_{l}$ for the even-weight subcode. Similar to $P_{n}(i)$ and $Q_{n}(i)$ in the $(n,k,2t+1)$ primitive BCH code, it can be shown that $\tilde{P}_{n}(i)=0$ for $0\leq i\leq t-1$ and $\tilde{Q}_{n}(i)=0$ for $0\leq i\leq t+1$. Then, the DE recursions for the $(\set C,m)$ GLDPC ensemble and the $(\set C,m,L,w)$ spatially-coupled GLDPC ensemble can be obtained from (\ref{eq: fnDE}) and (\ref{eq: SCDE3}), respectively. %

\subsection{Notation}

The following glossary of symbols is designed to help readers distinguish between closely related symbols:
\begin{description}
\item [{$p\hfn(x)$}] DE update for ideal decoding
\item [{$\hpn^{*}$}] DE threshold for ideal decoding 
\item [{$f_{n}(x;p)$}] DE update for BDD 
\item [{$p_{n}^{*}$}] DE threshold for BDD 
\item [{$\hat{\rho}^{*}$}] Poisson DE threshold for ideal decoding 
\item [{$f(\lambda^{(\ell)};\rho)$}] Poisson DE update for BDD
\item [{$\rho^{*}$}] Poisson DE threshold for BDD
\item [{$\hV_{n}(x;p)$}] potential function for ideal decoding
\item [{$\hpn^{**}$}] SC-DE threshold for ideal decoding
\item [{$\hV_{n}(x)$}] fixed-point potential for ideal decoding
\item [{$\hU(x;p)$}] potential function for combined ideal system
\item [{$\hV(\lambda;\rho)$}] potential for ideal Poisson DE 
\item [{$\hat{\rho}_{t}^{**}$}] SC-DE threshold for ideal Poisson DE 
\item [{$U(\lambda;\rho)$}] potential for Poisson DE with BDD
\item [{$\rho_{t}^{**}$}] SC-DE threshold for Poisson DE with BDD
\item [{$p_{n}^{**}$}] SC-DE threshold for BDD
\end{description}

\subsection{High-Rate Scaling Limit for Iterative HDD with Ideal Component Decoders}

In \cite{Justesen-itw07,Justesen-toc11}, Justesen \emph{et al. }analyze the asymptotic performance of long product codes under the assumption that the component decoders never miscorrect. These arguments can be applied for the decoding of both BSC and BEC outputs.%
{} By considering the decoding process as removing vertices of degree less or equal to $t$, they show that the process fails if the error graph contains $(t+1)$-core. The existence of a ``$k$-cores'' in a random graph has attracted considerable interest in graph theory \cite{Pittel-jctb96}. By employing the results in \cite{Pittel-jctb96}, Justesen \emph{et al.} characterize the evolution for the number of errors per constraint node as a recursion for the ``Poisson parameter''~\cite{Justesen-itw07,Justesen-toc11}. That recursion leads to a threshold, for successful decoding, on the average number of error bits attached to a code-constraint node. 

In terms of error rate, this threshold has the form $p=c/n$, where $n$ is the length of the component code. Thus, the error rate vanishes as the size of the product code increases. In this section, we consider the DE recursion for the iterative decoding of the GLDPC and SC-GLDPC ensembles when the BSC error rate scales like $c/n$. We refer to this as the \emph{high-rate scaling} regime.

Now, we will derive the limiting DE recursion in the high-rate scaling regime for the proposed iterative algorithm with ideal HDD. Then, we observe that the obtained high-rate scaling limit (\ref{eq: HighRateNoMiss}) has the same update equation as the recursion of Poisson parameter in \cite{Justesen-itw07,Justesen-toc11}.

For a fixed $\rho>0$, let $p\triangleq\frac{\rho}{n-1}$ scale with $n$ and $\lambda_{n}^{(\ell)}\triangleq(n-1)x^{(\ell)}$. The recursion (\ref{eq: NomissDE}) for $\lambda_{n}^{(\ell)}$ becomes 
\[
\lambda_{n}^{(\ell+1)}=(n-1)p\hfn\left(\frac{\lambda_{n}^{(\ell)}}{n-1}\right)=\rho\hfn\left(\frac{\lambda_{n}^{(\ell)}}{n-1}\right)
\]
starting from $\lambda_{n}^{(0)}=\rho$. For $\ell>0$, define $\lambda^{(\ell)}\triangleq\lim_{n\rightarrow\infty}\lambda_{n}^{(\ell)}$ and observe that the high-rate scaling limit of the recursion for the ideal component code decoder is
\begin{align*}
\lambda^{(\ell+1)} & =\hf\left(\lambda^{(\ell)};\rho\right)\triangleq\lim_{n\rightarrow\infty}\rho\hfn\left(\frac{\lambda_{n}^{(\ell)}}{n-1}\right).
\end{align*}
The existence of the limit can be established via induction on $\ell$. To see this, define the tail probability of the Poisson distribution with mean $\lambda$ by 
\begin{align*}
\phi(\lambda;k) & \triangleq\sum_{i=k+1}^{\infty}\frac{\lambda^{i}}{i!}e^{-\lambda}.
\end{align*}
Then, the Poisson theorem \cite[pp. 113]{Papoulis-2002} shows that $\lim_{n\rightarrow\infty}\rho\hfn(\frac{\lambda}{n-1})=\rho\phi\left(\lambda;t-1\right)$. Thus, the high-rate scaling limit of the recursion for the ideal component code decoder becomes 
\begin{align}
\lambda^{(\ell+1)} & =\rho\phi\left(\lambda^{(\ell)};t-1\right).\label{eq: HighRateNoMiss}
\end{align}
The scaled noise threshold $\hat{\rho}^{*}$ is the largest $\rho$ such that the iteration converges from $\rho$ to 0 and is defined by
\[
\hat{\rho}^{*}\triangleq\sup\left\{ \rho\in[0,\infty)\,\big|\,\rho\phi\left(\lambda;t-1\right)<\lambda,\,\lambda\in(0,\rho]\right\} .
\]
Since $\phi(\lambda;t-1)$ is increasing and upper bounded by 1, it follows that
\[
\hat{\rho}^{*}=\inf_{\lambda>0}\frac{\lambda}{\phi\left(\lambda;t-1\right)}
\]

\begin{rem}
This threshold condition is identical to the ones given in \cite{Pittel-jctb96} for the equivalent $k$-core problem. The connection to asymptotically long product codes with ideal bounded distance decoding was first made in~\cite{Justesen-itw07}.
\end{rem}
For the spatially-coupled GLDPC ensemble, let $\lambda_{i}^{(\ell)}$ with $i\in\{1,2,\dots,L\}$ be the average number of error messages emitted by bit nodes at position $i$ in the $\ell$-th iteration. We set $\lambda_{i}^{(0)}=\rho$ for all $i\in\{1,2,\dots,L\}$, and set $\lambda_{i}^{(\ell)}=0$ for all $i\notin\{1,2,\dots,L\}$ and $\ell\geq0$. Very similar same arguments show that the recursion for the spatially-coupled ensemble is 
\begin{align}
\lambda_{i}^{(\ell+1)} & =\frac{1}{w}\sum_{k=0}^{w-1}\hf\left(\frac{1}{w}\sum_{j=0}^{w-1}\lambda_{i-j+k}^{(\ell)};\rho\right)\nonumber \\
 & =\rho\left(\frac{1}{w}\sum_{k=0}^{w-1}\phi\left(\frac{1}{w}\sum_{j=0}^{w-1}\lambda_{i-j+k}^{(\ell)};t-1\right)\right).\label{eq: SCPoiDENoMiss}
\end{align}

\begin{rem}
We note that this vector update equation is the same as the vector update equation used for $Q$-coloring analysis (with $Q=t+1$) of spatially-coupled graphs in~\cite{Hassani-jsp13}. Therefore, the threshold bounds in Section~\ref{subsec:IdealIterativeHDD} also apply to the $Q$-coloring problem.
\end{rem}

\subsection{High-Rate Scaling Limit for Iterative HDD with BDD}

We have shown that the recursion using the random graph argument in \cite{Justesen-itw07} and \cite{Justesen-toc11} is the same as the DE analysis of ideal iterative HDD in the limit as $n\rightarrow\infty$. The main weakness of the random graph argument is that it is not applicable to decoders with miscorrection. In this section, the high-rate scaling limit of the recursion for our DE analysis as $n\to\infty$ is introduced. The main contribution is that our approach rigorously accounts for miscorrection. 

First, we introduce some notation and a few lemmas to simplify the development. Consider the Poisson distribution with mean $\lambda$. Let $\psi(\lambda;k)$ and $\varphi(\lambda;k)$ be, respectively, the tail probability for the even terms, and the tail probability for the odd terms. Then, we have 
\begin{align*}
\psi(\lambda;k) & \triangleq\frac{1+e^{-2\lambda}}{2}-\sum_{i=0}^{\left\lfloor k/2\right\rfloor }\frac{\lambda^{2i}}{(2i)!}e^{-\lambda},\\
\varphi(\lambda;k) & \triangleq\frac{1-e^{-2\lambda}}{2}-\sum_{i=0}^{\left\lfloor k/2\right\rfloor }\frac{\lambda^{(2i+1)}}{(2i+1)!}e^{-\lambda}.
\end{align*}

\begin{lem}
\label{lem:LimitOfPn(i)andnQn(i)} For the codes described above and $t\leq i\leq n-t-2$,%
{} the limit $\lim_{n\to\infty}P_{n}(i)=1$. Also, for the same code and $t+1\leq i\leq n-t-1$, the function $nQ_{n}(i)$ is bounded. If $\lfloor\sqrt{n}\rfloor>t+1$, then 
\begin{align}
nQ_{n}(i) & \leq\frac{1}{(t-1)!}+O\left(n^{-0.1}\right)\label{eq: finite_n}
\end{align}
 for all $t+1\leq i\leq\lfloor\sqrt{n}\rfloor$. Thus, for any fixed $i\geq t+1$, we have $\lim_{n\rightarrow\infty}nQ_{n}(i)=\frac{1}{(t-1)!}$.
\end{lem}
\begin{IEEEproof}
See Appendix \ref{app: pfLemLimOfPnnQn}.
\end{IEEEproof}

Consider the DE recursion (\ref{eq: fnDE}) for the $(\set C,m)$ GLDPC ensemble. For a fixed $\rho$, let $p\triangleq\frac{\rho}{n-1}$ scale with $n$ and $\lambda_{n}^{(\ell)}\triangleq(n-1)x^{(\ell)}$. From (\ref{eq: fnDE}) and (\ref{eq: fn}), the recursion for $\lambda_{n}^{(\ell)}$ equals\vspace{0mm}
\begin{align}
\lambda{}_{n}^{(\ell+1)} & =(n-1)f_{n}\left(\frac{\lambda_{n}^{(\ell)}}{n-1};\frac{\rho}{n-1}\right)\nonumber \\
 & =\sum_{i=t}^{n-1}\binom{n-1}{i}\!\left(\frac{\lambda_{n}^{(\ell)}}{n-1}\right)^{i}\!\left(1-\frac{\lambda_{n}^{(\ell)}}{n-1}\right)^{n-1-i}\nonumber \\
 & \qquad\times\left(\rho\left(P_{n}(i)-Q_{n}(i)\right)\!+\!(n-1)\,Q_{n}(i)\right),\label{eq: ScaleDE2}
\end{align}
with initial value $\lambda_{n}^{(0)}=\rho$ for all $n$.%

\begin{lem}
\label{lem: limEPn_limEQn}Let $X_{n}\sim\mbox{Bi}(n-1,\frac{\lambda_{n}}{n-1})$ be a sequence of binomial random variables associated with $n-1$ trials with success probability $\frac{\lambda_{n}}{n-1}$. If $\lambda_{n}\rightarrow\lambda<\infty$, then $\lim_{n\rightarrow\infty}E\left[P_{n}(X_{n})\right]=\phi\left(\lambda;t-1\right)$ and $\lim_{n\rightarrow\infty}E\left[Q_{n}(X_{n})\right]=0$.
\end{lem}
\begin{IEEEproof}
See Appendix \ref{app: pf_limEPn_limEQn}.
\end{IEEEproof}
\begin{lem}
\label{lem: limEnQn}Let $X_{n}\sim\mbox{Bi}(n-1,\frac{\lambda_{n}}{n-1})$ be a sequence of binomial random variables for $n-1$ trials with success probability $\frac{\lambda_{n}}{n-1}$. If $\lambda_{n}\rightarrow\lambda<\infty$, then 
\begin{align*}
\lim_{n\rightarrow\infty}E\left[nQ_{n}(X_{n})\right] & =\frac{\phi\left(\lambda;t\right)}{(t-1)!}.
\end{align*}
\end{lem}
\begin{IEEEproof}
See Appendix \ref{app: pf_limEnQn}.
\end{IEEEproof}

Using Lemma \ref{lem: limEPn_limEQn} and Lemma \ref{lem: limEnQn}, one can simplify the recursion for $\lambda^{(\ell)}\triangleq\lim_{n\rightarrow\infty}\lambda_{n}^{(\ell)}$.
\begin{lem}
For any fixed $\ell>0$, the limit $\lambda^{(\ell)}\triangleq\lim_{n\rightarrow\infty}\lambda_{n}^{(\ell)}$ exists, and the recursion for $\lambda^{(\ell)}$ is given by $\lambda^{(0)}=\rho$ and %
\begin{align}
\lambda^{(\ell+1)} & =f\left(\lambda^{(\ell)};\rho\right)\nonumber \\
 & \triangleq\rho\phi\left(\lambda^{(\ell)};t-1\right)+\frac{1}{(t-1)!}\phi\left(\lambda^{(\ell)};t\right).\label{eq: PoiDE}
\end{align}
\end{lem}
\begin{IEEEproof}
We prove the lemma by induction.%
{} The base case $\lambda^{(0)}=\rho$ holds by assumption. For the inductive step, suppose that $\lambda^{(\ell)}=\lim_{n\rightarrow\infty}\lambda_{n}^{(\ell)}$ exists. Let $X_{n}^{(\ell)}\sim\mathrm{Bi}(n-1,\frac{\lambda_{n}^{(\ell)}}{n-1})$ be a binomial random variable with parameters $n-1$ and $\frac{\lambda_{n}^{(\ell)}}{n-1}$. Then, the recursion (\ref{eq: ScaleDE2}) can be represented as $\lambda_{n}^{(\ell+1)}=E[\rho(P_{n}(X_{n}^{(\ell)})-Q_{n}(X_{n}^{(\ell)}))+(n-1)Q_{n}(X_{n}^{(\ell)})]$. Again, Lemma \ref{lem: limEPn_limEQn} and Lemma \ref{lem: limEnQn} imply that $\lim_{n\rightarrow\infty}E[P_{n}(X_{n}^{(\ell)})]$, $\lim_{n\rightarrow\infty}E[Q_{n}(X_{n}^{(\ell)})]$, and $\lim_{n\rightarrow\infty}E[nQ_{n}(X_{n}^{(\ell)})]$ exist. Thus, the limit of $\lambda_{n}^{(\ell+1)}$ exists as $n\rightarrow\infty$, and satisfies the recursion 
\begin{align*}
\lambda^{(\ell+1)} & \triangleq\lim_{n\rightarrow\infty}\lambda_{n}^{(\ell+1)}=\rho\phi\left(\lambda^{(\ell)};t\!-\!1\right)\!+\!\frac{1}{(t\!-\!1)!}\phi\!\left(\!\lambda^{(\ell)};t\right).
\end{align*}
This completes the mathematical induction.
\end{IEEEproof}
\begin{rem}
For any $n<\infty$, the quantity $\frac{n}{n-1}\rho$ is the scaled average number of initial error bits attached to a code-constraint node, and $\frac{n}{n-1}\lambda_{n}^{(\ell)}$ is the scaled average number of error messages passed to a code-constraint node after the $\ell$-th iteration. Since $\frac{n}{n-1}\lambda_{n}^{(\ell)}\rightarrow\lambda^{(\ell)}$, it follows that the recursion (\ref{eq: PoiDE}) tracks the evolution of the average number of error messages passed to a code-constraint node.
\end{rem}

The following lemma shows that the DE recursion for the GLDPC ensemble whose component code is the even-weight subcode of a BCH code can be obtained by modifying (\ref{eq: PoiDE}). 
\begin{lem}
\label{lem: subcodes}Consider the GLDPC ensemble whose component code is the even-weight subcode of a BCH code. If $t$ is even, then the recursion for $\lambda^{(\ell)}$ is $\lambda^{(\ell+1)}=f_{e}\left(\lambda^{(\ell)};\rho\right)$ with
\begin{align*}
f_{e}\left(\lambda^{(\ell)};\rho\right) & \triangleq\rho\phi\left(\lambda^{(\ell)};t-1\right)+\frac{1}{(t-1)!}\psi\left(\lambda^{(\ell)};t\right).
\end{align*}
If $t$ is odd, the recursion is $\lambda^{(\ell+1)}=f_{o}\left(\lambda^{(\ell)};\rho\right)$ with 
\begin{align*}
f_{o}\left(\lambda^{(\ell)};\rho\right) & \triangleq\rho\phi\left(\lambda^{(\ell)};t-1\right)+\frac{1}{(t-1)!}\varphi\left(\lambda^{(\ell)};t\right).
\end{align*}
\end{lem}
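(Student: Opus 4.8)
The plan is to substitute $\tilde P_n(i)$ and $\tilde Q_n(i)$ for $P_n(i),Q_n(i)$ in the scaled recursion (\ref{eq: ScaleDE2}), split the resulting sum into the three pieces coming from the terms $\rho\tilde P_n(i)$, $-\rho\tilde Q_n(i)$, and $(n-1)\tilde Q_n(i)$, and take $n\to\infty$ in each piece separately; this mirrors the derivation of (\ref{eq: PoiDE}) but carries the parity bookkeeping of the even-weight subcode. Throughout, with $\hat n=n-1$ and $\alpha_n=\lambda/\hat n$, the binomial weights $\binom{\hat n}{i}\alpha_n^i(1-\alpha_n)^{\hat n - i}$ converge pointwise in $i$ to the Poisson weights $\frac{\lambda^i}{i!}e^{-\lambda}$ and have uniformly (in $n$) small tails, so bounded, pointwise-convergent coefficient sequences may be passed through the sum.

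For the first piece, Lemma \ref{lem:LimitOfPn(i)} gives $\tilde P_n(i)\to1$ for every $i\ge t$ while $\tilde P_n(i)\le1$, so this piece converges to $\rho\sum_{i\ge t}\frac{\lambda^i}{i!}e^{-\lambda}=\rho\,\phi(\lambda;t-1)$. For the second piece, Lemma \ref{lem: LimitOfQ(i)} shows $\tilde Q_n(i)=O(1/n)$ uniformly on the bulk range $t+2\le i\le n-t-2$, so the corresponding partial sum is $O(1/n)$; on the edge range $i<t+2$ one has $\tilde Q_n(i)=0$, and on $i>n-t-2$ one has $\tilde Q_n(i)\le1$ but the binomial weights are super-exponentially small since $\alpha_n\to0$, so the $-\rho\tilde Q_n(i)$ piece tends to $0$.

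The heart of the argument is the third piece, $\sum_i \binom{\hat n}{i}\alpha_n^i(1-\alpha_n)^{\hat n-i}\,\hat n\,\tilde Q_n(i)$. By Lemma \ref{lem: LimitOfQ(i)}, $\hat n\,\tilde Q_n(i)\to\frac{1}{(t-1)!}$ when $i+t$ is even and $\hat n\,\tilde Q_n(i)\to0$ when $i+t$ is odd, uniformly on $t+2\le i\le n-t-2$; splitting the sum at some $\lfloor\beta n\rfloor$ with $\beta\in(0,1)$, applying the bulk asymptotics on $[t+2,\lfloor\beta n\rfloor]$ and bounding the tail on $[\lfloor\beta n\rfloor+1,\hat n]$ by $\hat n\,\tilde Q_n(i)\le\hat n$ against the super-exponentially small binomial weights, the limit equals $\frac{1}{(t-1)!}$ times the Poisson probability of $\{i\ge t+2:\ i\equiv t\pmod{2}\}$. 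Using $\sum_{j\text{ even}}\frac{\lambda^j}{j!}e^{-\lambda}=\frac{1+e^{-2\lambda}}{2}$ and $\sum_{j\text{ odd}}\frac{\lambda^j}{j!}e^{-\lambda}=\frac{1-e^{-2\lambda}}{2}$, this restricted tail is exactly $\psi(\lambda;t)$ when $t$ is even (the surviving indices being even $i\ge t+2$) and $\varphi(\lambda;t)$ when $t$ is odd (odd $i\ge t+2$). Adding the three limits and noting that $\lambda^{(0)}=\lim_n(n-1)x^{(0)}=\lim_n(n-1)p=\rho$ for $p=\rho/(n-1)$ yields the two claimed recursions.

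I expect the main obstacle to be making the interchange of limit and infinite sum fully rigorous: one must check that the bulk asymptotics of $\tilde Q_n(i)$ from Lemma \ref{lem: LimitOfQ(i)} hold with enough uniformity over a range $[t+2,\lfloor\beta n\rfloor]$, and that the complementary tail — where those asymptotics fail and $\tilde Q_n(i)$ can be as large as $1$ — is crushed by the super-exponential decay of $\binom{\hat n}{i}\alpha_n^i(1-\alpha_n)^{\hat n-i}$ once $\alpha_n=\lambda/(n-1)\to0$. A clean way to organize this is to isolate once and for all the limit $\lim_n\sum_i\binom{\hat n}{i}\alpha_n^i(1-\alpha_n)^{\hat n-i}\hat n\tilde Q_n(i)$ in closed form (as was done for $Q_n(i)$), and then simply invoke it here.
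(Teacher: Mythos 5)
Your argument is correct and takes essentially the same route as the paper (which omits the proof, but whose derivation of (\ref{eq: PoiDE}) and accompanying remarks indicate the identical plan): substitute $\tilde{P}_n(i),\tilde{Q}_n(i)$ into the scaled recursion (\ref{eq: ScaleDE2}), pass to the Poisson limit term by term using Lemmas \ref{lem:LimitOfPn(i)} and \ref{lem: LimitOfQ(i)}, and control the tail by splitting the sum at $\lfloor\beta n\rfloor$. Your parity bookkeeping---that only indices $i\equiv t\pmod 2$ with $i\geq t+2$ survive, yielding $\frac{1}{(t-1)!}\psi(\lambda;t)$ for even $t$ and $\frac{1}{(t-1)!}\varphi(\lambda;t)$ for odd $t$---matches the definitions of $\psi$ and $\varphi$ exactly.
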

\begin{IEEEproof}
See Appendix \ref{app: pf_subcodes}.
\end{IEEEproof}
For the spatially-coupled GLDPC ensemble,%
{} let $\lambda_{i}^{(\ell)}$ with $i\in\{1,2,\dots,L\}$ be the average number of error messages emitted by bit nodes at positions $i$ in the $\ell$-th iteration. We set $\lambda_{i}^{(0)}=\rho$ for all $i\in\{1,2,\dots,L\}$, and set $\lambda_{i}^{(\ell)}=0$ for all $i\notin\{1,2,\dots,L\}$ and $\ell\geq0$. Again, very similar arguments can be used to derive the DE for the spatially-coupled ensemble. Similar to (\ref{eq: SCDE3}), the resulting recursion is %
{} 
\begin{align}
\lambda_{i}^{(\ell+1)} & =\frac{1}{w}\sum_{k=0}^{w-1}f\left(\frac{1}{w}\sum_{j=0}^{w-1}\lambda_{i-j+k}^{(\ell)};\rho\right)\label{eq: SCPoiDE}
\end{align}
for $i\in\{1,2,\dots,L\}$. When the even-weight subcode of a BCH code is used as a component code in the spatially-coupled GLDPC ensemble, the recursion becomes
\begin{align*}
\lambda_{i}^{(\ell+1)} & =\begin{cases}
{\displaystyle \frac{1}{w}\sum_{k=0}^{w-1}f_{e}\left(\frac{1}{w}\sum_{j=0}^{w-1}\lambda_{i-j+k}^{(\ell)};\rho\right)}\vspace{1.5mm} & \mbox{ if }t\mbox{ is even,}\\
{\displaystyle \frac{1}{w}\sum_{k=0}^{w-1}f_{o}\left(\frac{1}{w}\sum_{j=0}^{w-1}\lambda_{i-j+k}^{(\ell)};\rho\right)} & \mbox{ if }t\mbox{ is odd.}
\end{cases}
\end{align*}

\vspace{0mm}

\section{Bounds on the Noise Threshold\label{sec:Bounds}}

In this section, we consider the noise thresholds of the decoding algorithms in Section \ref{sec: Ensembles} for the spatially-coupled ensemble when $L\gg w$ and $w\rightarrow\infty$. We employ the analysis proposed by Yedla \emph{et al.} \cite{Yedla-istc12,Yedla-it14} to compute these thresholds. Similarly, one could apply the results of~\cite{Kudekar-unpub13}.

Consider the recursion defined by the DE update equation of a decoding algorithm. Let $x$ be the average error probability of the messages emitted by bit nodes. For each channel parameter $p\in[0,1]$, the \emph{potential function}, denoted by $U(x;p)$, is proposed in \cite{Yedla-istc12,Yedla-it14}. If the recursion is a \emph{proper admissible system} (see \cite[Definition 31]{Yedla-it14}), then the \emph{coupled threshold} $p^{**}$ is defined as the supremum of the channel parameter $p$ such that the potential function is non-negative. The results in~\cite{Yedla-istc12,Yedla-it14} also show that the noise threshold of the decoding algorithm saturates to $p^{**}$ when the algorithm is applied to a spatially-coupled system with $L\gg w$ and $w\rightarrow\infty$. 

While the exact value of $p^{**}$ can be computed numerically for a given potential function, a simple expression is not available in terms of $t$ and $n$. Hence, we derive a suitable lower bound that is a simple function of $t$ and $n$.

\subsection{\label{subsec:IdealIterativeHDD} Iterative HDD with Ideal Component Decoders}

Suppose that one ignores the effect of miscorrection and considers the natural hard-decision peeling decoder for the $\left(\mathcal{C},m\right)$ ensemble based on BCH codes, then it is easy to see that at most $mt$ errors can be corrected using BDD. To achieve this upper bound, it must happen that each code corrects exactly $t$ errors. If some codes decode with fewer than $t$ errors, then there is an irreversible loss of error-correcting potential. Since there are $\frac{nm}{2}$ code bits, normalizing this number shows that the noise threshold is upper bounded by $\frac{2t}{n}$. In terms of the average number of errors in each code constraint, the threshold is upper bounded by $2t$ because each code involves $n$ bits. 

Before we delve into the analysis of the potential threshold for the iterative HDD with ideal component decoders, we first recall some definitions. The beta function, $B(a,b)$, and the normalized incomplete beta function, $I_{x}(a,b)$, \cite[\S 8.17]{Olver-2010} are given by 
\begin{align}
B(a,b) & \triangleq\int_{0}^{1}z^{a-1}(1-z)^{b-1}\mathrm{d}z\label{eq: Beta}
\end{align}
and
\begin{align}
I_{x}(a,b) & \triangleq\frac{1}{B(a,b)}\int_{0}^{x}z^{a-1}(1-z)^{b-1}\mathrm{d}z\triangleq\frac{B_{x}(a,b)}{B(a,b)}.\label{eq: IncompBeta}
\end{align}
When $a$ and $b$ are positive integers, these functions have the following well-known properties \cite[\S 8.17]{Olver-2010}: 
\begin{align}
B(a,b) & =\frac{(a-1)!(b-1)!}{(a+b-1)!},\nonumber \\
I_{x}(a,b) & =\sum_{k=a}^{a+b-1}\binom{a+b-1}{k}x^{k}(1-x)^{a+b-1-k},\label{eq: IandBinom}\\
B(a+1,b) & =\frac{a}{a+b}B(a,b),\label{eq: BandB}\\
I_{x}(a+1,b) & =I_{x}(a,b)-\frac{x^{a}(1-x)^{b}}{aB(a,b)}.\label{eq: IandI}
\end{align}
By differentiating both sides, one can also verify that
\begin{equation}
\int_{0}^{x}I_{z}(a,b)\mathrm{d}z=xI_{x}(a,b)-\frac{a}{a+b}I_{x}(a+1,b).\label{eq: intI}
\end{equation}

Since the DE update equation (\ref{eq: NomissDE}) for the iterative HDD without miscorrection is a scalar recursion. The following lemma enables us to apply the results in~\cite{Yedla-it14} to the current noise threshold analysis.
\begin{lem}
\label{lem: sas}%
Let the recursion (\ref{eq: NomissDE}) be represented as $x^{(\ell+1)}=\tf(\tg(x^{(\ell)});p)$ with $\tf(x;p)=px$ and $\tg(x)=\hfn(x)$. Then, for $t\geq2$ and $n\geq t+2$, the pair $(\tf,\tg)$ is a proper admissible system according to~\cite[Definition~31]{Yedla-it14} and unconditionally stable according to~\cite[Definition~34]{Yedla-it14}.
\end{lem}
\begin{IEEEproof}
It is easy to verify that $\tf(x;p)$ is strictly increasing in both $x$ and $p$. Using (\ref{eq: IandBinom}) and (\ref{eq: IncompBeta}), we can rewrite $\tg(x)$ as 
\begin{align}
\tg(x) & =\hfn(x)=\frac{1}{B(t,n-t)}\int_{0}^{x}z^{t-1}(1-z)^{n-t-1}\mathrm{d}z\label{eq: fn_eq_Ix}
\end{align}
and observe that $\tg(0)=0$. Since $\tg'(x)\triangleq\frac{\d\tg(x)}{\d x}=\frac{1}{B(t,n-t)}x^{t-1}(1-x)^{n-t-1}>0$ for all $x\in(0,1)$, $\tg(x)$ is strictly increasing in $x$. Also, the second derivative%

\begin{align*}
\frac{\d^{2}\tg(x)}{\d x^{2}} & =\frac{1}{B(t,n-t)}(t-1)x^{t-2}(1-x)^{n-t-1}\\
 & \qquad-\frac{1}{B(t,n-t)}(n-t-1)x^{t-1}(1-x)^{n-t-2}
\end{align*}

is a continuous function on $[0,1]$ for $t\geq2$ and $n\geq t+2$. The recursion is proper because $\frac{\d}{\d p}p\hfn(x)=\hfn(x)>0$ for $x\in(0,1]$. Finally, the recursion is unconditionally stable for $t\geq2$ because $\frac{\d}{\d x}p\hfn(x)\big|_{x=0}=p\hfn'(0)=0$. 
\end{IEEEproof}

Since the DE update (\ref{eq: NomissDE}) defines a proper admissible system for any integer $n>t$, the formulas for the potential function and the coupled threshold are given by~\cite[(16),Lemma~46(ii)]{Yedla-it14}.

\begin{defn}
\label{def: NoMissPotential}%
The \emph{potential function} associated with the recursion (\ref{eq: NomissDE}) is defined by 
\begin{align}
\hV_{n}(x;p) & \triangleq\int_{0}^{x}(z-p\hfn(z))\hfn'(z)\mathrm{d}z\nonumber \\
 & =x\hfn(x)-\int_{0}^{x}\hfn(z)\mathrm{d}z-\frac{1}{2}p\left(\hfn(x)\right)^{2}.\label{eq: potential_fun}
\end{align}
The \emph{coupled threshold} associated with (\ref{eq: SCDE2}) is defined to be
\begin{equation}
\hpn^{**}\triangleq\sup\left\{ p\in[0,1]\,\big|\,\min_{x\in[0,1]}\hV_{n}(x;p)\geq0\right\} .\label{eq: PotentialThreshold}
\end{equation}
This threshold for iterative HDD without miscorrection is achieved by $\left(\mathcal{C},m,L,w\right)$ spatially-coupled GLDPC ensembles in the limit where $m\gg L\gg w$ as $w\to\infty$. 
\end{defn}
\begin{rem}
\label{rem:SCproof} For $p<\hpn^{**}$, \cite[Lemma~46(ii)]{Yedla-it14}~implies that $\arg\min_{x\in[0,1]}\hV_{n}(x;p)=0$. Since the system is unconditionally stable, combining \cite[Proposition~10(iv)]{Yedla-it14} and \cite[Theorem~1]{Yedla-it14} implies that, for $p<\hpn^{**}$, there is a $w_{0}<\infty$ such that, for all $w>w_{0}$, the error probabilities $y_{i}^{(\ell)}$ for all $i\in\{1,2,\dots,L+w-1\}$ in~(\ref{eq: SCDE2}) converge to $0$ as $\ell\rightarrow\infty$. Furthermore, this implies the same result for the recursion~(\ref{eq: SCDE}) because $x_{i}^{(\ell)}\rightarrow0$ for all $i\in\{1,2,\dots,L\}$ if and only if $y_{i}^{(\ell)}\rightarrow0$ for all $i\in\{1,2,\dots,L+w-1\}$.
\end{rem}

\begin{rem}
From~(\ref{eq: potential_fun}), we see that $\hV_{n}(x;p)$ is decreasing in $p$. Therefore, one can obtain $\hpn^{**}$ by numerically computing 
\begin{align*}
\hpn^{**} & =\inf_{x\in(0,1]}\frac{x\hfn(x)-\int_{0}^{x}\hfn(z)\d z}{\frac{1}{2}\hfn^{2}(x)}.
\end{align*}
\end{rem}
\begin{rem}
Since the minimum in~\eqref{eq: PotentialThreshold} always occurs at a fixed-point of the recursion~\cite[Lemma~18]{Yedla-it14}, it is sufficient to consider the value of the potential at fixed points. For any $p\in(0,1]$, let $x$ be a non-zero fixed point of the recursion (\ref{eq: NomissDE}). Then, the fixed-point equation $x=p\hf_{n}(x)$ shows that any fixed-point $x$ defines an unique $p(x)\triangleq\frac{x}{\hfn(x)}$.
\end{rem}
\begin{defn}
\label{def: fixed-point potential} The \emph{fixed-point potential} for a proper admissible system is defined in~\cite[Def.~41]{Yedla-it14} to be 
\begin{align}
\hV_{n}(x) & \triangleq\hV_{n}(x;p(x)),\label{eq: fixed_point_pot}
\end{align}
though it is denoted there by $Q(x)$. This function represents the potential evaluated at the unique fixed-point defined by $x$.
\end{defn}
Combining Definitions~\ref{def: fixed-point potential} and~\ref{def: NoMissPotential}, we can compute
\begin{align}
\hV_{n}(x) & =x\hfn(x)-\int_{0}^{x}\hfn(z)\mathrm{d}z-\frac{1}{2}\frac{x}{\hfn(x)}\left(\hfn(x)\right)^{2}\nonumber \\
 & =\frac{1}{2}x\hfn(x)-\int_{0}^{x}\hfn(z)\mathrm{d}z\label{eq: fixed_point_pot_f}\\
 & =\frac{1}{2}xI_{x}(t,n\!-\!t)\!-\!xI_{x}(t,n\!-\!t)\!+\!\frac{t}{n}I_{x}(t\!+\!1,n\!-\!t)\nonumber \\
 & =\frac{-1}{2}xI_{x}(t,n\!-\!t)\!+\!\frac{t}{n}\!\left(\!I_{x}(t,n\!-\!t)\!-\!\frac{x^{t}(1\!-\!x)^{n-t}}{tB(t,n\!-\!t)}\right)\nonumber \\
 & =\frac{1}{2}\left(\frac{2t}{n}-x\right)I_{x}(t,n-t)-\frac{x^{t}(1-x)^{n-t}}{nB(t,n-t)},\label{eq: fixed_point_pot_I}
\end{align}
where the intermediate steps make use of~(\ref{eq: fn_eq_Ix}),~(\ref{eq: intI}), and~(\ref{eq: IandI}).
\begin{lem}
\label{lem: Vn_unique_root} For $t\geq2$ and $n\geq t+2$, the fixed-point potential $\hV_{n}(x)$ has a unique root, $\hat{x}_{n}^{**}$, satisfying $\hat{x}_{n}^{**}\in(0,1]$.
\end{lem}
\begin{IEEEproof}
See Appendix \ref{app: pf_Vnx_unique_root}.
\end{IEEEproof}
\begin{lem}
\label{lem: Vnxn=00003D0} Let $\hx_{n}^{**}$ be the unique value in $(0,1]$ satisfying $\hV_{n}(\hx_{n}^{**})=0$. Then, there exists a $t_{0}\geq3$ and a function $n_{0}(t)\geq t+2$ such that, for all $t\geq t_{0}$ and $n\geq n_{0}(t)$, we have $\frac{2t-2}{n}\leq\hat{x}_{n}^{**}\leq\frac{2t}{n}$ . 
\end{lem}
\begin{IEEEproof}
See Appendix \ref{app: pf_Vnx=00003D0}.
\end{IEEEproof}
\begin{rem}
The constant $t_{0}$ and function $n_{0}(t)$ are required for our rigorous proof that $\hV_{n}(\frac{2t-2}{n})>0$ when $t\geq t_{0}$ and $n\geq n_{0}(t)$. However, through the numerical evaluation, one can observe that $\hV_{n}(\frac{2t-2}{n})>0$ for all $t\geq2$ and $n>2t$. Thus, we conjecture that the lemma holds for this larger range.
\end{rem}
\begin{thm}
For $t\geq2$ and $n\geq t+2$, the coupled threshold associated with (\ref{eq: SCDE2}) is given by $\hpn^{**}=p(\hat{x}_{n}^{**})=\hat{x}_{n}^{**}/\hfn(\hat{x}_{n}^{**})$, where $\hat{x}_{n}^{**}$ is the unique root of $\hV_{n}(x)=0$ satisfying $\hat{x}_{n}^{**}\in(0,1]$. Moreover, there exists a $t_{0}\geq3$ and a function $n_{0}(t)\geq t+2$ such that, for all $t\geq t_{0}$ and $n\geq n_{0}(t)$, we have
\[
\hpn^{**}\geq\hat{x}_{n}^{**}\geq\frac{2t-2}{n}.
\]
 
\end{thm}
\begin{IEEEproof}
For $t\geq2$ and $n\geq t+2$, Lemma~\ref{lem: Vn_unique_root} shows that $\hV_{n}(x)=0$ has a unique root, $\hat{x}_{n}^{**}$, satisfying $\hat{x}_{n}^{**}\in(0,1]$. Since Lemma~\ref{lem: sas} shows that (\ref{eq: NomissDE}) is a proper admissible system that is unconditionally stable, we can apply \cite[Lemma~46(iii)]{Yedla-it14} to see that the coupled threshold satisfies $\hpn^{**}=p(\hat{x}_{n}^{**})$. The uniqueness of the root allows one to avoid computing the minimum in \cite[Lemma~46(iii)]{Yedla-it14}. From the definition of $p(x)$, we observe that
\[
p(\hat{x}_{n}^{**})=\frac{\hat{x}_{n}^{**}}{\hfn(\hat{x}_{n}^{**})}\geq\hat{x}_{n}^{**}.
\]
Finally, Lemma~\ref{lem: Vnxn=00003D0} shows that there exists a $t_{0}\geq3$ and a function $n_{0}(t)\geq t+2$ such that, for all $t\geq t_{0}$ and $n\geq n_{0}(t)$, we have $\hat{x}_{n}^{**}\geq\frac{2t-2}{n}$.
\end{IEEEproof}

\begin{rem}
 One may also view the update equation (\ref{eq: NomissDE}) as the recursion for the combined function pair $(p\hfn(x),x)$. Then, the DE update equation for the corresponding spatially-coupled system is 
\begin{align*}
y_{i}^{(\ell+1)} & =\frac{1}{w}\sum_{j=\max\{i-L,0\}}^{\min\{i-1,w-1\}}p\hf_{n}\left(\frac{1}{w}\sum_{k=0}^{w-1}y_{i-j+k}^{(\ell)}\right),
\end{align*}
where $y_{i}^{(\ell)}$ is the average error probability of the input messages to the $\tg$-nodes at the $i$-th position and $i\in\{1,2,\dots,L+w-1\}$. Using the proof of Lemma~\ref{lem: sas}, one can show that the combined system $(p\hfn(x),x)$ is also a proper admissible system. By the definition of the potential function in~\cite{Yedla-it14}, the potential function for the combined ideal system, $(p\hfn(x),x)$, is $\hU_{n}(x;p)\triangleq\int_{0}^{x}(z-p\hfn(z))\mathrm{d}z$. This is the potential function used in \cite{Jian-isit12}, but it is denoted there by $U_{n}(x;p)$. Let $\hU_{n}(x)\triangleq\hU_{n}(x;p(x))$ be the fixed-point potential of the $(p\hfn(x),x)$ system.  Since the fixed points of the $(px,\hfn(x))$ and $(p\hfn(x),x)$ systems are very closely related, one can show that their fixed-point potential functions must satisfy
\begin{align*}
\hV_{n}(x) & =\frac{2\hfn(x)}{x}\hU_{n}(x)
\end{align*}
for all $x\in(0,1]$ satisfying $x=p\hfn(x)$. This is similar to the half-iteration shift described in~\cite[\S II-D]{Yedla-it14} 
\end{rem}

For the recursion defined by the high-rate scaling limit, (\ref{eq: HighRateNoMiss}), it is easy to show that $\hf(\lambda;\rho)=\rho\phi(\lambda;t-1)$ is increasing in both $\lambda$ and $\rho$. Thus, the noise threshold for the recursion (\ref{eq: HighRateNoMiss}), in terms of the average number of errors per $n-1$ bits, exists and is denoted by $\hr_{t}^{*}$ \cite[\S 3.10 -- \S 3.11]{RU-2008}. Also, by the monotonicity property of $\hf(\lambda;\rho)$, the noise threshold for the recursion (\ref{eq: SCPoiDENoMiss}) exists as well. Using \cite{Yedla-istc12,Yedla-it14}, we define the potential function and the coupled threshold for the recursion (\ref{eq: HighRateNoMiss}), respectively, by
\begin{align*}
\hV(\lambda;\rho) & \triangleq\int_{0}^{\lambda}\left(z-\hat{f}\left(z;\rho\right)\right)\hat{f}'\left(z;\rho\right)\d z\\
 & =\int_{0}^{\lambda}\left(z-\rho\phi(z;t-1)\right)\phi'(z;t-1)\d z,
\end{align*}
and 
\begin{align}
\hr_{t}^{**} & \triangleq\sup\left\{ \rho\in[0,\infty)\,\big|\,\min_{\lambda\geq0}\hV\left(\lambda;\rho\right)\geq0\right\} .\label{eq: HighRatePotentialThreshold}
\end{align}
Using the same argument as in Remark~\ref{rem:SCproof}, one finds that the coupled threshold $\hr_{t}^{**}$ can also be achieved by applying to the $(\Cc,m,L,w)$ spatially-coupled ensemble in the limit where $m\gg L\gg n\gg w$ as $w\rightarrow\infty$.
\begin{cor}
\label{cor: high-rate_thresh_no_miscorr} For the high-rate scaling limit of the recursion (\ref{eq: HighRateNoMiss}), the coupled threshold (in terms of the average number of errors in a code constraint) satisfies $\hat{\rho}_{t}^{**}\geq2t-2$ for all $t\geq t_{0}$.
\end{cor}
\begin{IEEEproof}
Consider the DE recursion (\ref{eq: NomissDE}) for the $(\Cc,m)$ GLDPC ensemble. Let $\hpn^{**}$ be the coupled threshold as defined in (\ref{eq: PotentialThreshold}). Note that $\hpn^{**}$ can be achieved by the $(\Cc,m,L,w)$ spatially-coupled ensemble in the limit where $m\gg L\gg n\gg w$ as $w\rightarrow\infty$. For any fixed $t\geq t_{0}$ and $n\geq n_{0}(t)$, we define $\hat{\rho}_{n,t}^{**}\triangleq(n-1)\hpn^{**}$. From Lemma \ref{lem: Vnxn=00003D0}, we know that $\hat{\rho}_{n,t}^{**}\geq(2t-2)\frac{n-1}{n}$ for all $t\geq t_{0}$ and $n\geq n_{0}(t)$. Thus, we conclude that $\hat{\rho}_{t}^{**}\triangleq\lim_{n\rightarrow\infty}\hr_{n,t}^{**}\geq2t-2$ for all $t\geq t_{0}$.
\end{IEEEproof}

\subsection{Iterative HDD with BDD}

For the case of iterative HDD algorithm, it is not clear if the recursion (\ref{eq: fnDE}) defines an admissible system for every linear code (e.g., monotonicity could fail). However, we believe that is the case. Fortunately, one can show that in the high-rate scaling limit, the function $f(\lambda;\rho)$ in (\ref{eq: PoiDE}) is strictly increasing in both arguments for $\lambda,\rho>0$. Therefore, the noise threshold and the coupled threshold for the recursion (\ref{eq: PoiDE}), denoted by $\rho^{*}$ and $\rho^{**}$ respectively, exist as well. Using the setup in \cite{Yedla-it14} and the same argument as in Remark~\ref{rem:SCproof}, one finds that the potential function and the coupled threshold for the recursion (\ref{eq: PoiDE}) are, respectively, 
\begin{align*}
U(\lambda;\rho) & \triangleq\int_{0}^{\lambda}\left(z-f\left(z;\rho\right)\right)f'\left(z;\rho\right)\d z,
\end{align*}
and 
\begin{align*}
\rho_{t}^{**} & \triangleq\sup\left\{ \rho\in[0,\infty)\,\big|\,\min_{\lambda\geq0}U\left(\lambda;\rho\right)\geq0\right\} .
\end{align*}
We note that this potential function is different from the $U(\lambda;\rho)$ function defined in~\cite{Jian-isit12}, which is based on the alternative scalar system.

Using the threshold of iterative HDD without miscorrection, $\hat{\rho}_{t}^{**}$, one can obtain a lower bound of $\rho_{t}^{**}$ from the following lemma.
\begin{lem}
\label{lem: high-rate_thresh_w_miscorr} For the high-rate scaling limit, the coupled threshold of iterative HDD with miscorrection, $\rho_{t}^{**}$, satisfies $\rho_{t}^{**}\geq\hat{\rho}_{t}^{**}-\frac{1}{(t-1)!}$.
\end{lem}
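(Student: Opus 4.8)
The plan is to compare the two density-evolution recursions directly. With miscorrection, by~\eqref{eq: PoiDE}, the update map is
\[
f(\lambda;\rho)=\rho\,\phi(\lambda;t-1)+\tfrac{1}{(t-1)!}\,\phi(\lambda;t),
\]
whereas without miscorrection the $Q_n$ terms vanish and the update reduces to $\hat f(\lambda;\rho)=\rho\,\phi(\lambda;t-1)$. Hence for every $\lambda\ge 0$ and every $\rho\ge 0$,
\[
\hat f(\lambda;\rho)\;\le\; f(\lambda;\rho)\;\le\;\hat f(\lambda;\rho)+\tfrac{1}{(t-1)!},
\]
since $0\le\phi(\lambda;t)\le 1$. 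The upper bound $\rho_t^{**}\le\hat\rho_t^{**}$ is then almost immediate: miscorrection only increases the error-probability update, so any $\rho$ for which the true (with-miscorrection) recursion converges to zero is also a $\rho$ for which the idealized recursion converges, giving $\rho_t^{**}\le\hat\rho_t^{**}$. Concretely, I would run this through the potential-function framework of~\cite{Yedla-arxiv12}: the potential function with miscorrection is $U^{\mathrm{mc}}(\lambda;\rho)=\int_0^\lambda\big(z-f(z;\rho)\big)\,\mathrm dz$ and without is $U(\lambda;\rho)=\int_0^\lambda\big(z-\hat f(z;\rho)\big)\,\mathrm dz$; the displayed pointwise inequality on the update maps integrates to $U^{\mathrm{mc}}(\lambda;\rho)\le U(\lambda;\rho)$, so the set of $\rho$ keeping $\min_\lambda U^{\mathrm{mc}}\ge 0$ is contained in the set keeping $\min_\lambda U\ge 0$, i.e.\ $\rho_t^{**}\le\hat\rho_t^{**}$.

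For the lower bound, the key observation is that the extra miscorrection term is uniformly bounded by the constant $c_t\triangleq\frac{1}{(t-1)!}$, and this term enters the recursion as an additive offset that is \emph{independent of $\rho$}. So I would argue that decreasing $\rho$ by $c_t$ exactly compensates: if $\rho\le\hat\rho_t^{**}-c_t$, then $f(\lambda;\rho)=\rho\,\phi(\lambda;t-1)+c_t\,\phi(\lambda;t)\le \rho\,\phi(\lambda;t-1)+c_t\le(\rho+c_t)\,\phi(\lambda;t-1)\cdot\frac{\cdots}{\cdots}$ --- wait, that last step is not clean because $\phi(\lambda;t-1)$ can be small. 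The honest route is through the potential function again: $U^{\mathrm{mc}}(\lambda;\rho)=U(\lambda;\rho+c_t)+\big[\text{correction}\big]$ where the correction is $\int_0^\lambda\big(c_t\phi(z;t-1)-c_t\phi(z;t)\big)\mathrm dz=c_t\int_0^\lambda\frac{z^{t-1}}{(t-1)!}e^{-z}\,\mathrm dz\ge 0$, using $\phi(z;t-1)-\phi(z;t)=\frac{z^t}{t!}e^{-z}\ge 0$. Therefore $U^{\mathrm{mc}}(\lambda;\rho)\ge U(\lambda;\rho+c_t)$ for all $\lambda\ge 0$, which gives $\min_\lambda U^{\mathrm{mc}}(\lambda;\rho)\ge\min_\lambda U(\lambda;\rho+c_t)$. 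Taking $\rho=\hat\rho_t^{**}-c_t$ makes the right side nonnegative by definition of $\hat\rho_t^{**}$ in~\eqref{eq: high-rate_pot_thresh_wo_miscorr}, so $\rho_t^{**}\ge\hat\rho_t^{**}-c_t=\hat\rho_t^{**}-\frac{1}{(t-1)!}$.

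The steps, in order: (i) write down $f$ and $\hat f$ and establish the two pointwise sandwich inequalities on the update maps, together with the identity $\phi(z;t-1)-\phi(z;t)=\frac{z^t}{t!}e^{-z}$; (ii) translate to potential functions via the admissible-system setup already invoked in the excerpt, getting $U^{\mathrm{mc}}(\lambda;\rho)\le U(\lambda;\rho)$ and $U^{\mathrm{mc}}(\lambda;\rho)\ge U(\lambda;\rho+c_t)$; (iii) take suprema over $\rho$ keeping the respective minima nonnegative to read off both inequalities; (iv) confirm the boundary behavior ($U^{\mathrm{mc}}(0;\rho)=0$ and the relevant derivative signs) so that the $\min_\lambda\ge 0$ characterization of the threshold from~\cite{Yedla-arxiv12} genuinely applies to the with-miscorrection system. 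The main obstacle I anticipate is step~(iv): verifying that the with-miscorrection scalar system is still \emph{admissible} in the precise sense of~\cite{Yedla-arxiv12} (monotonicity of $f$ in both arguments, $f(0;\rho)$ behavior, Lipschitz continuity), so that the potential-threshold characterization $\rho_t^{**}=\sup\{\rho:\min_\lambda U^{\mathrm{mc}}(\lambda;\rho)\ge 0\}$ is valid; once that is in place, the comparison of potential functions is just integration of an elementary inequality. A secondary subtlety is that $\phi(\lambda;t)\le 1$ is used at $\lambda=\infty$-scale, but since $\lambda^{(0)}=\rho$ is finite and the recursion stays in a bounded region, this causes no trouble.
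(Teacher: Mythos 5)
Your argument is correct and follows exactly the route the paper's potential-function setup intends: sandwich the with-miscorrection update map via $\hat f(\lambda;\rho)\le f(\lambda;\rho)\le \hat f\left(\lambda;\rho+\tfrac{1}{(t-1)!}\right)$ (the second inequality using $\phi(\lambda;t)\le\phi(\lambda;t-1)$), integrate to compare the potentials, and read off both bounds from the $\sup$-characterization of the potential threshold. One small typo: the correction integrand should be $\tfrac{z^{t}}{t!}e^{-z}$ rather than $\tfrac{z^{t-1}}{(t-1)!}e^{-z}$, consistent with the identity $\phi(z;t-1)-\phi(z;t)=\tfrac{z^{t}}{t!}e^{-z}$ that you correctly state, and your step (iv) is easily discharged since $f(0;\rho)=0$ and $f$ is increasing in both arguments.
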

\begin{IEEEproof}
To show the lower bound of $\rho_{t}^{**}$, we introduce the recursion
\begin{align}
\ovl^{(\ell+1)} & =\left(\rho+\frac{1}{(t-1)!}\right)\phi\left(\ovl^{(\ell)};t-1\right).\label{eq: HighRateB}
\end{align}
Let $\ovr\triangleq\rho+\frac{1}{(t-1)!}$. From (\ref{eq: PoiDE}), one can show that 
\begin{align}
f\left(\lambda;\rho\right) & \leq\rho\phi(\lambda;t-1)+\frac{1}{(t-1)!}\phi(\lambda;t-1)\nonumber \\
 & =\ovr\phi(\lambda;t-1),\label{eq: rhobarRec}
\end{align}
and thus 
\begin{align*}
\ovl^{(\ell+1)} & \geq f\left(\ovl^{(\ell)};\rho\right).
\end{align*}
Therefore, we know that $\lambda^{(\ell)}$ in the recursion (\ref{eq: PoiDE}) satisfies $\lambda^{(\ell)}\leq\ovl^{(\ell)}$ when the initial values $\lambda^{(0)}=\ovl^{(0)}$. 

By rewriting (\ref{eq: HighRateB}) as $\ovl^{(\ell+1)}=\ovr\phi(\ovl^{(\ell)};t-1)$, the recursion (\ref{eq: HighRateB}) can be considered as a $\left(\ovr x,\phi(x;t-1)\right)$ system. The update equation of the spatially-coupled $\left(\ovr x,\phi(x;t-1)\right)$ system is
\begin{align}
\ovl_{i}^{(\ell+1)} & =\ovr\left(\frac{1}{w}\sum_{k=0}^{w-1}\phi\left(\frac{1}{w}\sum_{j=0}^{w-1}\ovl_{i-j+k}^{(\ell)};t-1\right)\right),\label{eq: SC_rhobar}
\end{align}
where $\ovl_{i}^{(\ell)}$ is the average number of error messages emitted by bit nodes at position $i$ in the $\ell$-th iteration and $i\in\{1,2,\dots,L\}$. From (\ref{eq: rhobarRec}), we know 
\begin{align*}
\ovl_{i}^{(\ell+1)} & =\frac{1}{w}\sum_{k=0}^{w-1}\ovr\phi\left(\frac{1}{w}\sum_{j=0}^{w-1}\ovl_{i-j+k}^{(\ell)};t-1\right)\\
 & \geq\frac{1}{w}\sum_{k=0}^{w-1}f\left(\frac{1}{w}\sum_{j=0}^{w-1}\ovl_{i-j+k}^{(\ell)};\rho\right).
\end{align*}
With the same $w$ and the same initial value, $\lambda_{i}^{(0)}=\ovl_{i}^{(0)}$ for $i\in\{1,2,\dots,L\}$, $\lambda_{i}^{(\ell)}$ in (\ref{eq: SCPoiDE}) is upper bounded by $\ovl_{i}^{(\ell)}$ for $i\in\{1,2,\dots,L\}$ and $\ell\geq0$.

Denote the coupled threshold for the recursion (\ref{eq: rhobarRec}) by $\ovr^{**}$. We know that $\ovr^{**}=\hr^{**}$, where $\hr^{**}$ is defined in (\ref{eq: HighRatePotentialThreshold}). Using \cite{Yedla-it14} and the argument in Remark~\ref{rem:SCproof}, we know that, for each $\rho<\ovr^{**}-\frac{1}{(t-1)!}=\hat{\rho}_{t}^{**}-\frac{1}{(t-1)!}$, there exists a $w_{\rho}>0$ such that all $w>w_{\rho}$, $\ovl_{i}^{(\ell)}\rightarrow0$ as $\ell\rightarrow\infty$ for all $i\in\{1,2,\dots,L\}$. By the fact that $\lambda_{i}^{(\ell)}\leq\ovl_{i}^{(\ell)}$ for all $i\in\{1,2,\dots,L\}$, we know that $\lambda_{i}^{(\ell)}\rightarrow0$ as well. This implies that $\rho_{t}^{**}\geq\hat{\rho}_{t}^{**}-\frac{1}{(t-1)!}$.
\end{IEEEproof}

\section{Approaching Capacity \label{sec: Approaching}}

In this section, we show that the proposed iterative HDD for the spatially-coupled ensemble can approach the capacity in the high-rate regime. 

\subsection{A Sequence of Ensembles with Vanishing Redundancy}

Consider a sequence of ensembles, for $\nu=1,2,\ldots$, with component-code lengths $n_{\nu}\triangleq2^{\nu}-1$, rates $R_{\nu}=1-\frac{2\nu t}{n_{\nu}}$ and noise thresholds $p_{\nu}^{*}=\frac{2t}{n_{\nu}}$. The following lemma shows that, for any $\epsilon>0$, the ensembles in this sequence are eventually $\epsilon$-redundancy achieving on the BSC. That is, for any $\epsilon>0$, there exists a $\nu_{0}\in\mathbb{N}$ such that, for all $\nu\geq\nu_{0}$, one has 
\begin{align*}
\frac{1-C(2tn_{\nu}^{-1})}{2t\nu n_{\nu}^{-1}} & \geq1-\epsilon.
\end{align*}

\begin{lem}
\label{lem:HighRateCap} Consider a sequence of BSCs with error probability $2tn_{\nu}^{-1}$ for a fixed $t$ and $\nu\in\mathbb{N}$. Then, the ratio of $1-C\left(2tn_{\nu}^{-1}\right)$ to $2t\nu n_{\nu}^{-1}$ goes to $1$ as $\nu\rightarrow\infty$. That is,\vspace{0mm} 
\begin{align}
\lim_{\nu\rightarrow\infty}\frac{1-C\left(2tn_{\nu}^{-1}\right)}{2t\nu n_{\nu}^{-1}} & =1.\label{eq: CapacityRatioLimit}
\end{align}
\end{lem}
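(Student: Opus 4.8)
The plan is to reduce \eqref{eq: CapacityRatioLimit} to the standard small-argument behavior of the binary entropy function. Write $h(p) = -p\log_2 p - (1-p)\log_2(1-p)$, so that the capacity of $\mathrm{BSC}(p)$ is $C(p) = 1 - h(p)$ and hence $1 - C(2tn_\nu^{-1}) = h(p_\nu)$ with $p_\nu = 2tn_\nu^{-1}$; since $n_\nu = 2^\nu - 1 \to \infty$ we have $p_\nu \to 0^+$. Also $2t\nu n_\nu^{-1} = \nu p_\nu$, so the assertion is exactly that $h(p_\nu)/(\nu p_\nu) \to 1$.

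First I would record the expansion of $h$ at the origin. From $-\ln(1-p) = p + p^2/2 + \cdots$ one obtains $-(1-p)\log_2(1-p) = \frac{p}{\ln 2} + O(p^2)$, so that
\[ h(p) = p\log_2\!\tfrac1p + \tfrac{p}{\ln 2} + O(p^2) \qquad (p\to 0^+). \]
The first term $p\log_2(1/p)$ dominates and the remaining contribution is $O(p)$.

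Next I would substitute $p = p_\nu$, using that $\log_2 n_\nu = \log_2(2^\nu - 1) = \nu + \log_2(1-2^{-\nu}) = \nu + o(1)$, whence $\log_2(1/p_\nu) = \log_2 n_\nu - \log_2(2t) = \nu - \log_2(2t) + o(1)$. Plugging this into the expansion yields
\[ h(p_\nu) = \nu p_\nu + p_\nu\Bigl(\tfrac{1}{\ln 2} - \log_2(2t) + o(1)\Bigr) + O(p_\nu^2), \]
and dividing by $\nu p_\nu = 2t\nu n_\nu^{-1}$ gives
\[ \frac{1 - C(2tn_\nu^{-1})}{2t\nu n_\nu^{-1}} = 1 + \frac{\tfrac{1}{\ln 2} - \log_2(2t) + o(1)}{\nu} + O\!\left(\frac{p_\nu}{\nu}\right). \]
Since $t$ is held fixed, the bracketed constant is bounded, so both correction terms vanish as $\nu\to\infty$ and the limit is $1$.

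There is no genuine obstacle here; the only point that needs care is that the "constant" corrections $\log_2(2t)$ and $1/\ln 2$ are $o(\nu)$, which is immediate because $t$ does not grow with $\nu$ (this is precisely where the high-rate regime, rather than a fixed-rate one, is used). The same final display also gives the effective statement preceding the lemma: for $\epsilon>0$ it suffices to pick $V$ large enough that $\bigl(\,|\tfrac{1}{\ln 2} - \log_2(2t)| + 1\,\bigr)/V$ dominates the residual error, so that the ratio is at least $1-\epsilon$ for all $\nu\ge V$.
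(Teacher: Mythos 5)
Your proof is correct and follows essentially the same route as the paper's: both expand the binary entropy at small argument, identify $p\log_2(1/p)$ as the dominant term (the paper packages the correction as $\log_2(2t/e)/\log_2 n_\nu$, which is exactly your $(\log_2(2t)-\tfrac{1}{\ln 2})/\nu$ up to the $o(1)$ in $\log_2 n_\nu=\nu+o(1)$), and conclude that the relative corrections vanish as $\nu\to\infty$ for fixed $t$. No discrepancies worth noting.
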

\begin{IEEEproof}
Recall that $C(p)=1-H(p)$, where $H(p)=-p\log_{2}(p)-(1-p)\log_{2}(1-p)$ is the binary entropy function. The numerator of the LHS of (\ref{eq: CapacityRatioLimit}) can be written as 
\begin{equation}
H\left(\frac{2t}{n_{\nu}}\right)=\frac{2t\log_{2}n_{\nu}}{n_{\nu}}\left(1-\frac{\log_{2}\left(\frac{2t}{e}\right)}{\log_{2}n_{\nu}}-O\left(n_{\nu}^{-1}\right)\right)\!.\label{eq: Entropy}
\end{equation}
By substituting (\ref{eq: Entropy}) into the LHS of (\ref{eq: CapacityRatioLimit}), we have
\begin{align*}
\frac{1-C\left(2tn_{\nu}^{-1}\right)}{2t\nu n_{\nu}^{-1}} & =\frac{2tn_{\nu}^{-1}\log_{2}\left(n_{\nu}\right)}{2t\nu n_{\nu}^{-1}}\left(1-O\left(\nu^{-1}\right)\right).
\end{align*}
Then, the limit in~(\ref{eq: CapacityRatioLimit}) follows because $\log_{2}(n_{\nu})=\!\nu\!+\!o(1).$ 
\end{IEEEproof}

\subsection{Noise Threshold of Iterative HDD when $n<\infty$}

In the following discussion, we use $(n_{\nu},k,2t+1)$ binary primitive BCH codes as component codes, where $n_{\nu}=2^{\nu}-1$. Let the noise threshold of iterative HDD with BDD for the $(\Cc,m,L,w)$ spatially-coupled GLDPC ensemble, denoted by $p_{n_{\nu}}^{**}$, be as defined in (\ref{eq: SCBDDthreshold}). We first show that $p_{n_{\nu}}^{**}$ satisfies $(n_{\nu}-1)p_{n_{\nu}}^{**}\geq2t-2-\frac{1}{(t-1)!}-\epsilon$ for some $t>0$, $\epsilon>0$ and $L\gg w>0$. Then, we show that the $(\Cc,m,L,w)$ spatially-coupled GLDPC code is $\epsilon$-redundancy achieving when $n\gg t\gg1$ in Theorem \ref{thm: main}.
\begin{lem}
\label{lem: EnQnFinite_n}For any $0\leq\lambda<\infty$ and $0<t<\infty$, let $n$ satisfy $\lfloor\sqrt{n}\rfloor>\max\{e\lambda,\lambda+1+t\}$, and let $X_{n}\sim\mbox{Bi}(n-1,\frac{\lambda}{n-1})$ be a binomial random variable with the mean $\lambda$. Then, %
{} 
\begin{align*}
E\left[nQ_{n}(X_{n})\right] & \!\leq\!\left(\!\frac{1}{(t\!-\!1)!}\!+\!O\left(n^{-0.1}\right)\!\right)I_{\frac{\lambda}{n-1}}(t\!+\!1,n\!-\!t\!-\!1).
\end{align*}
\end{lem}
\begin{IEEEproof}
See Appendix \ref{app: pf_Finite_n_Rec}.
\end{IEEEproof}

\begin{lem}
\label{lem: Finite_n_RecUB}Given a pair $(t,\rho)$ with $t>0$ and $\rho\geq2$, and for each $\epsilon>0$, there exists a $n_{1}>0$ such that, for all $n\geq n_{1}$, 
\begin{align*}
(n\!-\!1)f_{n}\!\left(\frac{\lambda}{n\!-\!1};\frac{\rho}{n\!-\!1}\right)\! & \leq\!\left(\!\rho\!+\!\frac{1}{(t\!-\!1)!}\!+\!\epsilon\!\right)\phi\left(\lambda;t\!-\!1\right),
\end{align*}
for all $0\leq\lambda\leq3\rho$, where $f_{n}(x;p)$ is defined in (\ref{eq: fn}).
\end{lem}
\begin{IEEEproof}
From (\ref{eq: ScaleDE2}), we know 
\begin{align*}
 & (n-1)f_{n}\left(\frac{\lambda}{n-1};\frac{\rho}{n-1}\right)\\
 & \quad\leq\rho I_{\frac{\lambda}{n-1}}\left(t,n-t\right)\\
 & \qquad\quad+\sum_{i=t+1}^{n-1}\!\!\binom{n\!-\!1}{i}\!\!\left(\frac{\lambda}{n\!-\!1}\right)^{i}\!\!\left(\!1\!-\!\frac{\lambda}{n\!-\!1}\right)^{n-1-i}\!nQ_{n}(i)\\
 & \quad\overset{\mbox{(a)}}{\leq}\rho I_{\frac{\lambda}{n-1}}\left(t,n-t\right)\\
 & \qquad\quad+\left(\frac{1}{(t-1)!}+O\left(n^{-0.1}\right)\right)I_{\frac{\lambda}{n-1}}(t+1,n-t-1)\\
 & \quad\leq\left(\rho+\frac{1}{(t-1)!}+O\left(n^{-0.1}\right)\right)I_{\frac{\lambda}{n-1}}\left(t,n-t\right).
\end{align*}
where the inequality (a) is obtained by applying Lemma \ref{lem: EnQnFinite_n}. From \cite{Hwang-tpa11}, we know that $I_{\frac{\lambda}{n}}(t,n-t+1)$ converges to $\phi(t-1,\lambda)$ uniformly for $\lambda\in[0,3\rho]$. For any $\epsilon>0$, there exists a $n_{1}>0$ such that $I_{\frac{\lambda}{n-1}}\left(t,n-t\right)\leq\phi(\lambda;t-1)+\frac{\epsilon}{2}(\rho+\frac{1}{(t-1)!}+\epsilon)^{-1}\phi(\lambda;t-1)$ for all $0\leq\lambda\leq3\rho$, and $O(n^{-0.1})\leq\frac{\epsilon}{2}$. Then, whenever $n\geq n_{1}$, 
\begin{align*}
 & (n-1)f_{n}\left(\frac{\lambda}{n-1};\frac{\rho}{n-1}\right)\\
 & \quad\leq\left(\rho+\frac{1}{(t-1)!}+\frac{\epsilon}{2}\right)\\
 & \qquad\quad\times\left(1+\frac{\epsilon}{2}\left(\rho+\frac{1}{(t-1)!}+\epsilon\right)^{-1}\right)\phi(\lambda;t-1)\\
 & \quad\leq\left(\rho+\frac{1}{(t-1)!}+\epsilon\right)\phi(\lambda;t-1).
\end{align*}
\end{IEEEproof}
For the spatially-coupled $(\Cc,m,L,w)$ GLDPC codes with iterative BDD, let $\x^{(\ell)}(\x^{(0)};p)\in[0,1]^{L}$ be the vector of error probabilities after $\ell$ iterations of (\ref{eq: SCDE3}) with the initial error probability vector $\x^{(0)}\in[0,1]^{L}$, where the $i$-th element, denoted by $x_{i}^{(\ell)}(\x^{(0)};p)$, is the average error probability of the messages emitted by bit nodes at the $i$-th position. We define the noise threshold for the recursion (\ref{eq: SCDE3}) by
\begin{align}
p_{n}^{**} & \triangleq\sup\left\{ \!p\!\in\!(0,\!1]\,\big|\,\lim_{\ell\rightarrow0}\x^{(\ell)}(p'\boldsymbol{1};p')\!=\!\boldsymbol{0},\,p'\!\in\![0,p]\!\right\} \!.\!\label{eq: SCBDDthreshold}
\end{align}

\begin{cor}
\label{cor: Finite_n_Threshold}Given a pair $(t,\rho)$ with $t>0$ and $\rho\geq2$, and for each $\epsilon>0$, there exists a $n_{1}>0$ such that, whenever $n_{1}\leq n<\infty$, the noise threshold $p_{n}^{**}$ satisfies $(n-1)p_{n}^{**}\geq2t-2-\frac{1}{(t-1)!}-\epsilon$ when $m\gg L\gg w$ and $w\rightarrow\infty$.
\end{cor}
\begin{IEEEproof}
According to Lemma \ref{lem: Finite_n_RecUB}, we consider the following recursion 
\begin{equation}
\ovl^{(\ell+1)}=\ovr\phi(\ovl^{(\ell)};t-1),\label{eq: Finite_n_RecUB}
\end{equation}
where 
\begin{equation}
\ovr\triangleq(n-1)p+\frac{1}{(t-1)!}+\epsilon.\label{eq: ovrDef}
\end{equation}

Since the recursion (\ref{eq: Finite_n_RecUB}) can also be considered as a $\left(\ovr x,\phi(x;t-1)\right)$ admissible system, and the DE update equation of the spatially-coupled system is shown in (\ref{eq: SC_rhobar}). For each $\epsilon>0$, let $n_{1}$ be selected according to the proof of Lemma \ref{lem: Finite_n_RecUB}. Then, we know that, when $n>n_{1}$,
\begin{align}
 & \frac{\ovl_{i}^{(\ell+1)}}{n-1}=\frac{1}{n-1}\ovr\left(\frac{1}{w}\sum_{k=0}^{w-1}\phi\left(\frac{1}{w}\sum_{j=0}^{w-1}\ovl_{i-j+k}^{(\ell)};t-1\right)\right)\nonumber \\
 & \quad\geq\frac{1}{n-1}\left(\frac{1}{w}\sum_{k=0}^{w-1}(n-1)f_{n}\left(\frac{1}{w}\sum_{j=0}^{w-1}\frac{\ovl_{i-j+k}^{(\ell)}}{n-1};\frac{\rho}{n-1}\right)\right)\nonumber \\
 & \quad=\frac{1}{w}\sum_{k=0}^{w-1}f_{n}\left(\frac{1}{w}\sum_{j=0}^{w-1}\frac{\ovl_{i-j+k}^{(\ell)}}{n-1};\frac{\rho}{n-1}\right).\label{eq: SCUB}
\end{align}
Note that the RHS of the last equality in (\ref{eq: SCUB}) is the update equation (\ref{eq: SCDE3}) with $x_{i}^{(\ell)}=\frac{\ovl_{i}^{(\ell)}}{n-1}$ and $p=\frac{\rho}{n-1}$. When $(n-1)x_{i}^{(0)}=\ovl_{i}^{(0)}=\lambda$ for all $i\in\{1,2,\dots,L\}$ and $p=\frac{\rho}{n-1}$, one can show that $\frac{\ovl_{i}^{(\ell+1)}}{n-1}\geq x_{i}^{(\ell)}$ for all $\ell\geq0$ by induction. 

Let $\ovr^{**}$ be the coupled threshold of the recursion (\ref{eq: Finite_n_RecUB}). For each $\ovr<\ovr^{**}$, there exists a $w_{\ovr}>0$ such that, for all $w>w_{\ovr}$, we have $\ovl_{i}^{(\ell)}\rightarrow0$ for all $i\in\{1,2,\dots,L\}$. Since $\frac{\ovl_{i}^{(\ell+1)}}{n-1}\geq x_{i}^{(\ell)}$, we also know that $x_{i}^{(\ell)}\rightarrow0$ whenever $p$ satisfies 
\begin{align*}
(n-1)p+\frac{1}{(t-1)!}+\epsilon & <\ovr^{**}.
\end{align*}
Thus, the coupled threshold $p_{n}^{**}$ is lower bounded by 
\begin{align*}
p_{n}^{**} & \geq\frac{1}{n-1}\left(\ovr^{**}-\frac{1}{(t-1)!}-\epsilon\right).
\end{align*}
Since $\ovr^{**}=\hr^{**}\geq2t-2$, we conclude that $(n-1)p_{n}^{**}\geq2t-2-\frac{1}{(t-1)!}-\epsilon$.
\end{IEEEproof}

\subsection{Proof of the Main Theorem\label{subsec: MainProof}}

Now, we present the proof of the main theorem.
\begin{IEEEproof}[Proof of Theorem \ref{thm: main}.]
We prove the theorem by showing the existence of a tuple $(t,n,L,w)$ such that, for a given $\epsilon\in(0,1)$, the $(\set C,\infty,L,w)$ GLDPC spatially-coupled ensemble with the proposed iterative HDD algorithm is $\epsilon$-redundancy achieving. 

First, let $t\geq\max\{\frac{8}{\epsilon},t_{0}\}$, where $t_{0}$ is defined in Lemma \ref{lem: Vnxn=00003D0}. Then, select a $\nu_{1}>0$ and define $n_{\nu_{1}}=2^{\nu_{1}}-1$ such that $H(\frac{2t}{n_{\nu_{1}}})\geq2t\nu n_{\nu_{1}}^{-1}(1-\frac{\epsilon}{4})\log_{2}n_{\nu_{1}}$ and $n_{\nu_{1}}\geq n_{0}(t)$. From the threshold of the high-rate scaling limit introduced in Lemma \ref{lem: high-rate_thresh_w_miscorr}, we know that the noise threshold of the spatially-coupled system is around $2t$. Thus, we can consider the channel noise in terms of the average number of errors per code $\rho\in[0,3t]$. By Lemma \ref{lem: Finite_n_RecUB}, there exists a $n_{1}>0$ such that $(n-1)f_{n}\left(\frac{\lambda}{n-1};\frac{\rho}{n-1}\right)\leq\left(\rho+\frac{1}{(t-1)!}+\frac{1}{2}\right)\phi\left(\lambda;t-1\right)$ for all $0\leq\lambda\leq3\rho$. Let $\nu\geq\lceil\max\{\log_{2}n_{t},\log_{2}n_{0},\log_{2}n_{1},\nu_{1}\}\rceil$ and $n=2^{\nu}-1$. From Corollary \ref{cor: Finite_n_Threshold}, we know the noise threshold of the spatially-coupled recursion of (\ref{eq: ScaleDE2}) satisfies $(n-1)p^{**}\geq2t-2-\frac{1}{(t-1)!}-\frac{1}{2}$. By selecting $p=(2t-4)(n-1)^{-1}$, we know that there exists a $0<w<\infty$ such that the spatially-coupled recursion of (\ref{eq: ScaleDE2}) converges to $0$ as $\ell\rightarrow\infty$. After determining $w$, we select $L$ such that $L\geq2(w-1)\epsilon^{-1}$. From (\ref{eq: SCR}), the rate of the spatially-coupled code satisfies $R\geq1-\frac{2t\nu}{n_{\nu}}(1+\frac{\epsilon}{2})$. Finally, we conclude the proof by showing that the $(\set C,m,L,w)$ spatially-coupled ensemble is $\epsilon$-redundancy achieving because 
\begin{align*}
\frac{1-C\left(p\right)}{1-R} & \geq\frac{\left(2t-4\right)\nu n_{\nu}^{-1}\left(1-\frac{\epsilon}{4}\right)}{2t\nu n_{\nu}^{-1}\left(1+\frac{\epsilon}{2}\right)}\\
 & =\frac{\left(2t-4\right)\left(1-\frac{\epsilon}{4}\right)}{2t\left(1+\frac{\epsilon}{2}\right)}\geq\frac{\left(1-\frac{\epsilon}{4}\right)^{2}}{\left(1+\frac{\epsilon}{2}\right)}\geq1-\epsilon.
\end{align*}
\end{IEEEproof}
\vspace{0mm}

\section{Practical Implementation of Iterative HDD\label{sec: LowComplex}}

In this section, we describe the practical implementation of the iterative HDD described in Section \ref{sec: Ensembles}. We highlight the difference between conventional decoding, which we call intrinsic message passing (IMP), and the proposed approach in Section \ref{sec: Ensembles}, which we call extrinsic message passing (EMP). In EMP algorithms, messages passed on edges in the Tanner graph are computed only from their extrinsic information. For certain random ensembles, this enables analysis via density evolution. We emphasize that this is different than the conventional iterative HDD rule typically used by product codes. In contrast to Section \ref{subsec: iterBDD}, this section introduces the EMP algorithm in a message-passing fashion to make it clear that the EMP uses only the extrinsic information.

Let $\boldsymbol{r}$ be the vector of channel output bits, $\nu_{i,j}^{(\ell)}\in\{0,1\}$ be the messages passed from the $i$-th bit node to the $j$-th constraint node in the $\ell$-th iteration, and $\mu_{i,j}^{(\ell)}\in\{0,1\}$ be the messages passed from the $j$-th constraint node to the $i$-th bit node in the $\ell$-th iteration. We assume the constraint nodes define an $(n,k,d_{\min})$ component code $\mathcal{C}$ with $d_{\min}=2t+1$. Let $\vect v\in\{0,1\}^{n}$ be an input vector to a constraint node, and let $\mathsf{D}:\{0,1\}^{n}\rightarrow\mathcal{C}\cup\{\texttt{fail}\}$ be the operator of bounded distance decoding (BDD) with decoding radius $t$ defined by 
\begin{align*}
\Dec(\vect v) & =\begin{cases}
\vect c & \mbox{ if }\vect c\in\set C\ \mbox{and}\ d_{H}(\vect v,\vect c)\leq t\\
\texttt{fail} & \mbox{ otherwise.}
\end{cases}
\end{align*}

\subsection{Intrinsic Message Passing }

In this section, we recall the IMP algorithm to highlight the difference with EMP. For a bit node $i$ and a constraint node $j$, let $\nbr(i)=\{j,j'\}$ be the constraint-node neighbors of $i$. Let $\boldsymbol{\nu}_{j}^{(\ell)}\triangleq(\nu_{\sigma_{j}(1),j}^{(\ell)},\dots,\nu_{\sigma_{j}(n),j}^{(\ell)})$ be the collection of the all input messages to the $j$-th constraint node in the $\ell$-th iteration, where $\sigma_{j}(k)\in\set I$ is defined in Section \ref{sec: Ensembles}. Let $\boldsymbol{w}_{j}^{(\ell)}\triangleq(w_{1,j}^{(\ell)},\dots,w_{n,j}^{(\ell)})=\mathsf{D}(\boldsymbol{\nu}_{j}^{(\ell)})$ be the output of the BDD decoder applied to the input messages at the $j$-th constraint node. Then, the message-passing rules, for each constraint node $j$, are
\begin{align}
\nu_{i,j}^{(\ell+1)} & =\mu_{i,j'}^{(\ell)}\label{eq: IterHDD1}\\
\mu_{\sigma_{j}(k),j}^{(\ell)} & =\begin{cases}
w_{k,j}^{(\ell)} & \mbox{ if }\mathsf{D}\big(\boldsymbol{\nu}_{j}^{(\ell)}\big)\neq\texttt{fail}\vspace{1ex}\\
\nu_{\sigma_{j}(k),j}^{(\ell)} & \mbox{ otherwise.}
\end{cases}\label{eq: IterHDD2}
\end{align}
The iteration starts by initializing $\nu_{i,j}^{(0)}=r_{i}$ for each bit node $i$ and all $j\in N(i)$. From (\ref{eq: IterHDD1}) and (\ref{eq: IterHDD2}), one can see that the IMP algorithm only uses channel outputs at the beginning of the iterations, and then, exchanges the output of the constraint nodes in the subsequent iterations. The IMP is the conventional approach used for the iterative HDD of product codes.

For a standard product code, the IMP decoder is essentially identical to the conventional decoder. A key point is that there are actually two distinct conventional decoders: one that starts by decoding the rows and one that starts by decoding the columns. The IMP decoder computes both of these answers simultaneously. Due to the Tanner graph structure, the messages associated with decoding the rows first are independent of the messages associated with decoding the columns first. Thus, while there are indeed two messages for each variable that may not be equal, they do not interact and are associated with these two possible first steps. Structurally, this happens because Tanner graph of a product code is actually tripartite (i.e., row and column code constraints are separated by a layer of variable nodes). 

\begin{table*}[tbh]
\caption{\vspace{0.5mm} The comparison between the proposed EMP implementation and the iterative HDD introduced in Section \ref{subsec: iterBDD}.\label{tab:EquivalentTable}}

\centering{}%
\begin{tabular}{|c|c|c|c|c|c|}
\hline 
$\left|\mathcal{K}\right|$ & $k$ & $r_{\sigma_{j}(k)}$ & $\nu_{\sigma_{j}(k),j'}^{(\ell+1)}$ & $\hat{\mu}_{\sigma_{j}(k),j}^{(\ell)}$ & $\hat{\nu}_{\sigma_{j}(k),j'}^{(\ell+1)}$\tabularnewline
\hline 
\hline 
$|\mathcal{K}|<t$ & $k\in\{1,2,\dots,n\}$ & 0 & 0 & $(0,0)$ & 0\tabularnewline
\hline 
$|\mathcal{K}|<t$ & $k\in\{1,2,\dots,n\}$ & 1 & 0 & $(0,0)$ & 0\tabularnewline
\hline 
$|\mathcal{K}|=t$ & $k\in\mathcal{K}$ & 0 & 0 & $(0,0)$ & 0\tabularnewline
\hline 
$|\mathcal{K}|=t$ & $k\in\mathcal{K}$ & 1 & 0 & $(0,0)$ & 0\tabularnewline
\hline 
$|\mathcal{K}|=t$ & $k\notin\mathcal{K}$ & 0 & 0 & $(0,\texttt{fail})$ & 0\tabularnewline
\hline 
$|\mathcal{K}|=t$ & $k\notin\mathcal{K}$ & 1 & 1 & $(0,\texttt{fail})$ & 1\tabularnewline
\hline 
$|\mathcal{K}|=t+1$ & $k\in\mathcal{K}$ & 0 & 0 & $(0,\texttt{fail})$ & 0\tabularnewline
\hline 
$|\mathcal{K}|=t+1$ & $k\in\mathcal{K}$ & 1 & 1 & $(0,\texttt{fail})$ & 1\tabularnewline
\hline 
$|\mathcal{K}|=t+1$ & $k\notin\mathcal{K}$ & 0 & 0 & $(\texttt{fail},\texttt{fail})$ & 0\tabularnewline
\hline 
$|\mathcal{K}|=t+1$ & $k\notin\mathcal{K}$ & 1 & 1 & $(\texttt{fail},\texttt{fail})$ & 1\tabularnewline
\hline 
$|\mathcal{K}|>t+1$ & $k\in\{1,2,\dots,n\}$ & 0 & 0 & $(\texttt{fail},\texttt{fail})$ & 0\tabularnewline
\hline 
$|\mathcal{K}|>t+1$ & $k\in\{1,2,\dots,n\}$ & 1 & 1 & $(\texttt{fail},\texttt{fail})$ & 1\tabularnewline
\hline 
\end{tabular}
\end{table*}

\subsection{Extrinsic Message Passing \label{subsec:LowComplexEMP}}

In the IMP message-passing rule (\ref{eq: IterHDD2}), the computation of the output message $\mu_{i,j}^{(\ell)}$ passed from $j$ to $i$ uses the input message $\nu_{i,j}^{(\ell)}$. This violates the principle of using only extrinsic information in message-passing rules. The decoding algorithm proposed in Section \ref{sec: Ensembles} can rectify this problem. We note that the messages in the EMP decoder are denoted by $\hat{\nu}_{i,j}^{(\ell)}$ and $\hat{\mu}_{i,j}^{(\ell)}$ to distinguish them from the IMP decoder.

Let $\hat{\nu}_{i,j}^{(\ell)}\in\{0,1\}$ be the message passed by the EMP algorithm from the $i$-th bit node to the $j$-th constraint node and let $\hat{\boldsymbol{\nu}}_{j}^{(\ell)}\triangleq(\hat{\nu}_{\sigma_{j}(1),j}^{(\ell)},\dots,\hat{\nu}_{\sigma_{j}(n),j}^{(\ell)})$ be the collection of all input messages to the $j$-th constraint node in the $\ell$-th iteration. To compute the EMP message $\hat{\mu}_{i,j}^{(\ell)}\triangleq\big(\hat{\mu}_{i,j,0}^{(\ell)},\hat{\mu}_{i,j,1}^{(\ell)}\big)$ from the $j$-th constraint node to the $i$-th bit node, BDD is performed twice. In the $\ell$-th iteration, similar to the arrangement of the candidate decoding vector in (\ref{eq: Candidate}), we first define 
\[
\hat{\bm{\nu}}_{j,k,0}^{(\ell)}\triangleq(\hat{\nu}_{\sigma_{j}(1),j}^{(\ell)},\dots,\hat{\nu}_{\sigma_{j}(k-1),j}^{(\ell)},0,\hat{\nu}_{\sigma_{j}(k+1),j}^{(\ell)},\dots,\hat{\nu}_{\sigma_{j}(n),j}^{(\ell)})
\]
and 
\[
\hat{\bm{\nu}}_{j,k,1}^{(\ell)}\triangleq(\hat{\nu}_{\sigma_{j}(1),j}^{(\ell)},\dots,\hat{\nu}_{\sigma_{j}(k-1),j}^{(\ell)},1,\hat{\nu}_{\sigma_{j}(k+1),j}^{(\ell)},\dots,\hat{\nu}_{\sigma_{j}(n),j}^{(\ell)}),
\]
and then compute $\hat{\boldsymbol{w}}_{j,k,0}^{(\ell)}=\mathsf{D}(\hat{\boldsymbol{\nu}}_{j,k,0}^{(\ell)})$ and $\hat{\boldsymbol{w}}_{j,k,1}^{(\ell)}=\mathsf{D}(\hat{\boldsymbol{\nu}}_{j,k,1}^{(\ell)})$, respectively. Based on $\hat{\boldsymbol{w}}_{j,k,0}^{(\ell)}$ and $\hat{\boldsymbol{w}}_{j,k,1}^{(\ell)}$ computed at the $j$-th constraint node, the messages $\hat{\mu}_{\sigma_{j}(k),j,0}^{(\ell)}$ and $\hat{\mu}_{\sigma_{j}(k),j,1}^{(\ell)}$ is assigned, respectively, by 
\[
\hat{\mu}_{\sigma_{j}(k),j,0}^{(\ell)}=\begin{cases}
\left[\hat{\boldsymbol{w}}_{j,k,0}^{(\ell)}\right]_{k} & \mbox{ if }\mathsf{D}\big(\hat{\bm{\nu}}_{j,k,0}^{(\ell)}\big)\neq\texttt{fail}\\
\texttt{fail} & \mbox{ otherwise,}
\end{cases}
\]
and
\[
\hat{\mu}_{\sigma_{j}(k),j,1}^{(\ell)}=\begin{cases}
\left[\hat{\boldsymbol{w}}_{j,k,1}^{(\ell)}\right]_{k} & \mbox{ if }\mathsf{D}\big(\hat{\bm{\nu}}_{j,k,1}^{(\ell)}\big)\neq\texttt{fail}\\
\texttt{fail} & \mbox{ otherwise.}
\end{cases}
\]
One can see that the message $\hat{\mu}_{i,j}^{(\ell)}$ will be in the set $\{(0,0),(1,1),(0,1),(0,\texttt{fail}),(\texttt{fail},1),(\texttt{fail},\texttt{fail})\}$. We recall that $\nbr(i)=\{j,j'\}$. The message-passing rule for the $i$-th bit node is
\begin{align}
\hat{\nu}_{i,j'}^{(\ell+1)} & \triangleq\begin{cases}
0 & \mbox{if }\hat{\mu}_{i,j}^{(\ell)}=(0,0)\vspace{1ex}\\
1 & \mbox{if }\hat{\mu}_{i,j}^{(\ell)}=(1,1)\\
r_{i} & \mbox{otherwise}.
\end{cases}\label{eq: mphdd2}
\end{align}
The iteration is initialized by setting $\hat{\nu}_{i,j}^{(0)}=r_{i}$ for each bit node $i$ and all $j\in\nbr(i)$. 

We introduce Table \ref{tab:EquivalentTable} to show the equivalence of the iterative HDD algorithm in Section \ref{subsec: iterBDD} and the proposed EMP algorithm. For the $j$-th constraint node, recall that $\hat{\boldsymbol{\nu}}_{j}^{(\ell)}$ is the vector of all input messages to the $j$-th constraint node in the $\ell$-th iteration. Let $\mathcal{K}\triangleq\{k\in\left\{ 1,2,\dots,n\right\} :\hat{\nu}_{\sigma_{j}(k),j}^{(\ell)}=1\}$ be the indices of the sockets that have input message 1. Since BDD satisfies the symmetry property, we assume that the decoder returns an all-zero codeword when the number of error is less than or equal to $t$. Thus, the combinations of the number of input errors, $|\mathcal{K}|$, and the socket of the outgoing message, $k$, listed in the table can cover all possible results of a component code decoder. Note that $\nu_{\sigma_{j}(k),j'}^{(\ell+1)}$ is the iterative HDD message defined in (\ref{eq:MessagePassingRule}), and $\hat{\nu}_{\sigma_{j}(k),j'}^{(\ell+1)}$ is defined in (\ref{eq: mphdd2}). It is clear from Table \ref{tab:EquivalentTable} that the messages of these two algorithms are identical in all cases.

By replacing the $k$-th element of $\hat{\boldsymbol{\nu}}_{j}^{(\ell)}$ with both $0$ and $1$, the computed output $\hat{\mu}_{\sigma_{j}(k),j}^{(\ell)}$ remains independent of the incoming message $\hat{\nu}_{\sigma_{j}(k),j}^{(\ell)}$ on that edge. Therefore, only extrinsic information is used in the computation of the output message on the $(\sigma_{j}(k),j)$ edge from the $j$-th constraint node. The output message from a bit node depends only on the channel observation and the input from the other edge. Therefore, this defines an extrinsic message-passing algorithm with hard-decision messages.

\subsection{Low-Complexity EMP Algorithm}

\begin{algorithm}[b]
Iteration $\ell$: For each constraint node $j$,
\begin{itemize}
\item Compute $\boldsymbol{w}=\mathsf{D}(\hat{\boldsymbol{\nu}}_{j}^{(\ell)})$.
\item For $k=1,\dots,n$,

\begin{itemize}
\item if $d_{H}(\hat{\boldsymbol{\nu}}_{j}^{(\ell)},\boldsymbol{w})>t$, then $\hat{\nu}_{\sigma_{j}(k),j'}^{(\ell+1)}=r_{i_{k}}$
\item elseif $d_{H}(\hat{\boldsymbol{\nu}}_{j}^{(\ell)},\boldsymbol{w})<t$, then $\hat{\nu}_{\sigma_{j}(k),j'}^{(\ell+1)}=w_{k}$
\item else $\hat{\nu}_{\sigma_{j}(k),j'}^{(\ell+1)}=\left(\big(1-\hat{\nu}_{\sigma_{j}(k),j}\big)\big(r_{\sigma_{j}(k)}\,\texttt{OR}\;w_{k}\big)\right)$\\
\hspace*{56mm}$\texttt{OR}\,(r_{\sigma_{j}(k)}w_{k}).$
\end{itemize}
\caption{The low-complexity EMP algorithm \label{alg: lowCplx}}
\end{itemize}
\end{algorithm}

As described above, the EMP algorithm needs to run the BDD algorithm $2n$ times to compute the output messages from a single constraint node. The primary purpose of that description was to demonstrate that the algorithm is indeed an EMP algorithm. Now, we show that exactly the same outputs can be computed with a single decode and some post processing. In the $\ell$-th iteration, let $\boldsymbol{w}\triangleq\mathsf{D}(\hat{\boldsymbol{\nu}}_{j}^{(\ell)})$ be the output of the BDD at the $j$-th constraint node with $\hat{\boldsymbol{\nu}}_{j}^{(\ell)}$ as an input. In this case, we will see that one can calculate $\hat{\boldsymbol{\nu}}_{j}^{(\ell+1)}$ directly from $\hat{\boldsymbol{\nu}}_{j}^{(\ell)}$. In this section, the $\hat{\mu}_{\sigma_{j}(k),j}^{(\ell)}$ messages are used only to explain the correctness of the simplified algorithm. Consider the following facts.
\begin{fact}
If $\boldsymbol{w}=\textup{\texttt{fail}}$, then at least one element of $\hat{\mu}_{\sigma_{j}(k),j}^{(\ell)}$ will be a $\textup{\texttt{fail}}$ for all $k=1,2,\dots,n$. By (\ref{eq: mphdd2}), one can show that $\hat{\nu}_{\sigma_{j}(k),j'}^{(\ell+1)}=r_{\sigma_{j}(k)}$ for all $k=1,2,\dots,n$. \label{fct: 1}
\end{fact}

\begin{fact}
If $\boldsymbol{w}\neq\textup{\texttt{fail}}$ and $d_{H}(\hat{\boldsymbol{\nu}}_{j}^{(\ell)},\boldsymbol{w})<t$, then one can show that $\hat{\mu}_{\sigma_{j}(k),j,0}^{(\ell)}=\hat{\mu}_{\sigma_{j}(k),j,1}^{(\ell)}$ for all $k=1,2,\dots,n$. Thus, we have\textup{ $\hat{\nu}_{\sigma_{j}(k),j'}^{(\ell+1)}=\hat{\mu}_{\sigma_{j}(k),j,0}^{(\ell)}$ }for all $k=1,2,\dots,n$.\label{fct: 2}
\end{fact}

\begin{fact}
If $\boldsymbol{w}\neq\textup{\texttt{fail}}$ and $d_{H}(\hat{\boldsymbol{\nu}}_{j}^{(\ell)},\boldsymbol{w})=t$, then first suppose that $w_{k}=\hat{\nu}_{\sigma_{j}(k),j}^{(\ell)}$ for some $k$. One can see that $\hat{\mu}_{\sigma_{j}(k),j}^{(\ell)}$ must not be $(0,0)$ or $(1,1)$. Thus, we know $\hat{\nu}_{\sigma_{j}(k),j'}^{(\ell+1)}=r_{\sigma_{j}(k)}$. On the other hand, suppose that $w_{k}\neq\hat{\nu}_{\sigma_{j}(k),j}^{(\ell)}$. One can show that $\hat{\mu}_{\sigma_{j}(k),j,0}^{(\ell)}=\hat{\mu}_{\sigma_{j}(k),j,1}^{(\ell)}=w_{k}$. Therefore, $\hat{\nu}_{\sigma_{j}(k),j'}^{(\ell+1)}=w_{k}$.\label{fct: 3}
\end{fact}
Using these facts, we define the low-complexity EMP decoder in Algorithm \ref{alg: lowCplx}. Since the distance $d_{H}(\hat{\boldsymbol{\nu}}_{j}^{(\ell)},\boldsymbol{w})$ is automatically obtained while performing BDD, the additional computation for the EMP algorithm is just a little Boolean logic. Still, the natural data-flow implementation of the conventional decoder is very simple and the EMP algorithm does require additional control logic and memory.

\begin{table}[t]
\begin{centering}
\caption{\vspace{0.5mm} The possible values of $\nu_{i,j'}^{(\ell+1)}$ with input vectors $\mu_{i,j'}^{(\ell-1)}$ when $\bm{c}=\protect\Dec(\bm{\nu}_{j,k,0}^{(\ell)})$ and $\bm{c}'=\protect\Dec(\bm{\nu}_{j,k,1}^{(\ell)})$ are codewords, where $\bm{\nu}_{j,k,0}^{(\ell)}$ and $\bm{\nu}_{j,k,1}^{(\ell)}$ are defined in (\ref{eq: vjk0}) and (\ref{eq: vjk1}), respectively. \label{tab: DependentExample}}
\par\end{centering}
\centering{}%
\begin{tabular}{|c|c|c|c|}
\hline 
$\mu_{i,j'}^{(\ell-1)}$ & $\nu_{i,j}^{(\ell)}$ & $\mu_{i,j}^{(\ell)}$ & $\nu_{i,j'}^{(\ell+1)}$\tabularnewline
\hline 
\hline 
$(0,1)$ & $0$ & $(0,1)$ & $0$\tabularnewline
\hline 
$(1,1)$ & $1$ & $(1,1)$ & $1$\tabularnewline
\hline 
\end{tabular}
\end{table}

\begin{rem}
While preparing this extended version of our earlier work \cite{Jian-isit12}, we discovered that Miladinovic and Fossorier also proposed an iterative HDD algorithm for general product codes \cite{Miladinovic-com08}. We briefly describe their algorithm as follows. For an edge $(i,j)$ connecting the bit node $i$ and the constraint node $j$, let $j'=\nbr(i)\setminus j$, $i=\sigma_{j}(k)$, and $\bm{\nu}_{j}^{(\ell)}\triangleq(\nu_{\sigma_{j}(1),j}^{(\ell)},\nu_{\sigma_{j}(2),j}^{(\ell)}\dots,\nu_{\sigma_{j}(n),j}^{(\ell)})$. Note that the $k$-th element of $\bm{\nu}_{j}^{(\ell)}$ is $\nu_{i,j}^{(\ell)}$. The message passed by the constraint node $j$, denoted by $\mu_{i,j}^{(\ell)}$, consists of two elements $\mu_{i,j}^{(\ell)}\triangleq(\Dec_{k}(\bm{\nu}_{j}^{(\ell)}),s^{(\ell)}),$ where $s^{(\ell)}=1$ if the decoding at the $j$-th constraint node has succeeded; otherwise, $s^{(\ell)}=0$. At the $i$-th bit node, the message $\nu_{i,j'}^{(\ell+1)}$ is updated by 
\begin{align}
\nu_{i,j'}^{(\ell+1)} & =\left(1-s^{(\ell)}\right)r_{i}+s\Dec_{k}\left(\bm{\nu}_{j}^{(\ell)}\right).\label{eq: TCOM08}
\end{align}
One can see that the proposed algorithm is similar to the iterative HDD algorithm proposed. However, the outputs of the two iterative HDD algorithm are different when $t=\frac{d_{\min}-1}{2}$, $\bm{c}=\Dec(\bm{\nu}_{j}^{(\ell)})$, $d_{H}(\bm{c},\bm{\nu}_{j}^{(\ell)})=t$, and $c_{k}=\nu_{i,j}^{(\ell)}\neq r_{i}$. For the proposed iterative HDD and the vector $\bm{v}_{i,j}^{(\ell)}$ defined in (\ref{eq: Candidate}), we know that $\Dec(\bm{v}_{i,j}^{(\ell)})$ will be $r_{i}$, but $(1-s)r_{i}+s\Dec_{k}(\bm{\nu}_{j}^{(\ell)})=c_{k}$. Moreover, we notice that $\nu_{i,j'}^{(\ell+1)}$ in the update equation (\ref{eq: TCOM08}) will depend on $\mu_{i,j'}^{(\ell-1)}$. In the $\ell$-th iteration, we define two vectors 
\begin{align}
\!\!\!\!\bm{\nu}_{j,k,0}^{(\ell)} & \!\!\triangleq\!\left(\!\nu_{\sigma_{j}(1),j}^{(\ell)},\!\cdots\!,\nu_{\sigma_{j}(k\!-\!1),j}^{(\ell)},0,\nu_{\sigma_{j}(k\!+\!1),j}^{(\ell)},\!\cdots\!,\nu_{\sigma_{j}(n),j}^{(\ell)}\!\right)\!\!\label{eq: vjk0}
\end{align}
and 
\begin{align}
\!\!\!\!\bm{\nu}_{j,k,1}^{(\ell)} & \!\!\triangleq\!\left(\!\nu_{\sigma_{j}(1),j}^{(\ell)},\!\cdots\!,\nu_{\sigma_{j}(k\!-\!1),j}^{(\ell)},1,\nu_{\sigma_{j}(k\!+\!1),j}^{(\ell)},\!\cdots\!,\nu_{\sigma_{j}(n),j}^{(\ell)}\!\right)\!.\!\!\!\label{eq: vjk1}
\end{align}
Suppose that the decoder outputs $\bm{c}=\Dec(\bm{\nu}_{j,k,0}^{(\ell)})$ and $\bm{c}'=\Dec(\bm{\nu}_{j,k,1}^{(\ell)})$ are both codewords. It is clear that $d_{H}(\bm{c},\bm{\nu}_{j,k,0}^{(\ell)})=d_{H}(\bm{c}',\bm{\nu}_{j,k,1}^{(\ell)})=t$, $\bm{c}_{k}=0$, and $\bm{c}'_{k}=1$. Also, we know $s=1$ for the decoding of both vectors. The possible values of $\nu_{i,j'}^{(\ell+1)}$ are listed in Table \ref{tab: DependentExample}. Since the values of $\nu_{i,j'}^{(\ell+1)}$ for different $\mu_{i,j'}^{(\ell-1)}$ are different, we observe that $\nu_{i,j'}^{(\ell+1)}$ is not independent of $\mu_{i,j'}^{(\ell-1)}$. Thus, it is not clear if their DE analysis can be rigorously justified.
\end{rem}

\section{Numerical Results and Comparison}

In the following numerical results, the iterative HDD threshold of $(\set C,m,L,w)$ spatially-coupled GLDPC ensemble with $L=1025$, and $w=16$ are considered. In Table \ref{tab: Table}, the thresholds of the ensembles are shown in terms of the average number of error bits attached to a code-constraint node. Let $p_{n,t}^{*}$ be the iterative HDD threshold of the spatially-coupled GLDPC ensemble based on a $(n,k,2t+1)$ binary primitive BCH component code, and $\tilde{p}_{n,t}^{*}$ be the iterative HDD threshold of the spatially-coupled GLDPC ensemble based on the $(n,k-1,2t+2)$ even-weight subcode. Then, we define $a_{n,t}^{*}\triangleq np_{n,t}^{*}$ and $\tilde{a}_{n,t}^{*}\triangleq n\tilde{p}_{n,t}^{*}$ to be the thresholds in terms of the average number of error bits attached to a component code. In the high-rate scaling limit, we let $\rho_{t}^{*}$ and $\tilde{\rho}_{t}^{*}$ denote the iterative HDD thresholds of the ensembles based on primitive BCH component codes and their even-weight subcodes, respectively. Moreover, the threshold of HDD without miscorrection, $\hat{\rho}_{t}^{*}$, is shown in Table \ref{tab: Table} along with the coupled threshold, $\hat{\rho}_{t}^{**}$, of iterative HDD without miscorrection from (\ref{eq: HighRateNoMiss}).

From Table \ref{tab: Table}, one can observe that the thresholds ($\rho_{t}^{*}$, $\tilde{\rho}_{t}^{*}$ and $\hat{\rho}_{t}^{*}$ ) of the spatially-coupled ensemble with primitive BCH component codes or the even-weight subcodes approach to $2t$ as $t$ increases. This verifies the results predicted by Lemma~\ref{cor: high-rate_thresh_no_miscorr} and the vanishing impact of miscorrection predicted by Lemma~\ref{lem: high-rate_thresh_w_miscorr}. 

\begin{table}[t]
\caption{\vspace{0.5mm} \label{tab: Table} Iterative HDD thresholds of $(\protect\set C,m,1025,16)$ spatially-coupled GLDPC ensemble with binary primitive BCH codes}

\centering{}%
\begin{tabular}{c|ccccc}
$t$ & 3 & 4 & 5 & 6 & 7\tabularnewline
\hline 
$a_{255,t}^{*}$ & 5.432 & 7.701 & 9.818 & 11.86 & 13.87\tabularnewline
$a_{511,t}^{*}$ & 5.417 & 7.665 & 9.811 & 11.86 & 13.85\tabularnewline
$a_{1023,t}^{*}$ & 5.401 & 7.693 & 9.821 & 11.87 & 13.88\tabularnewline
$\rho_{t}^{*}$ & 5.390 & 7.688 & 9.822 & 11.91 & 13.93\tabularnewline
\hline 
$\tilde{a}_{255,t}^{*}$ & 5.610 & 7.752 & 9.843 & 11.88 & 13.87\tabularnewline
$\tilde{a}_{511,t}^{*}$ & 5.570 & 7.767 & 9.811 & 11.86 & 13.85\tabularnewline
$\tilde{a}_{1023,t}^{*}$ & 5.606 & 7.765 & 9.841 & 11.88 & 13.88\tabularnewline
$\tilde{\rho}_{t}^{*}$ & 5.605 & 7.761 & 9.840 & 11.91 & 13.93\tabularnewline
\hline 
$\hat{\rho}_{t}^{*}$ & 5.735 & 7.813 & 9.855 & 11.91 & 13.93\tabularnewline
$\hat{\rho}_{t}^{**}$ & 5.754 & 7.843 & 9.896 & 11.93 & 13.95\tabularnewline
\hline 
\end{tabular}
\end{table}

\vspace{0mm}

\section{Conclusion}

\vspace{0mm}

The iterative HDD of GLDPC ensembles, based on on $t$-error correcting block codes, is analyzed with and without spatial coupling. Using DE analysis, noise thresholds are computed for a variety of component codes and decoding assumptions. In particular, the case of binary primitive BCH component-codes is considered along with their even-weight subcodes. For these codes, the miscorrection probability is characterized and included in the DE analysis. Scaled DE recursions are also computed for the high-rate limit. When miscorrection is neglected, the resulting recursion for the basic ensemble matches the results of \cite{Justesen-commag10,Justesen-toc11}. It is also proven that iterative HDD threshold of the spatially-coupled GLDPC ensemble can approach capacity in high-rate regime. Finally, numerical results are presented that both verify the theoretical results and demonstrate the effectiveness of these codes for high-speed communication systems.

\vspace{0.0mm}

\appendices{}

\section{Proof of Lemma \ref{lem:LimitOfPn(i)andnQn(i)}\label{app: pfLemLimOfPnnQn}}
\begin{IEEEproof}
For a fixed $i\geq t$, one can use (\ref{eq: WeightDistrib}) to rewrite the last term of (\ref{eq: P(i)}) as
\[
\left(1+O\left(n^{-0.1}\right)\right)\sum_{\delta=1}^{t}\sum_{j=0}^{\delta-1}F(n,i,\delta,j),
\]
where 
\begin{align*}
 & F(n,i,\delta,j)\triangleq\frac{n-l(i,\delta,j)}{n}2^{-mt}\binom{n}{l(i,\delta,j)}\\
 & \qquad\times\binom{l(i,\delta,j)}{l(i,\delta,j)-j}\binom{n-l(i,\delta,j)-1}{\delta-1-j}\binom{n-1}{i}^{-1}.
\end{align*}
Thus, $P_{n}(i)$ is given by 
\begin{align}
P_{n}(i) & =1-\left(1+O\left(n^{-0.1}\right)\right)\sum_{\delta=1}^{t}\sum_{j=0}^{\delta-1}F(n,i,\delta,j).\label{eq: Pn(i)byF}
\end{align}
By the fact that $2^{m}=n+1,$ it is easy to verify that
\begin{align}
 & F(n,i,\delta,j)\nonumber \\
 & \quad=\frac{n-l(i,\delta,j)}{n}\frac{1}{(n+1)^{t}}\frac{n!}{(n-l(i,\delta,j))!l(i,\delta,j)!}\nonumber \\
 & \qquad\times\frac{l(i,\delta,j)!}{(l(i,\delta,j)-j)!j!}\frac{(n-l(i,\delta,j)-1)!}{(n-l(i,\delta,j)-\delta+j)!(\delta-1-j)!}\nonumber \\
 & \qquad\qquad\times\frac{i!(n-1-i)!}{(n-1)!}\nonumber \\
 & \quad=\frac{1}{(n+1)^{t}}\frac{i!}{(l(i,\delta,j)-j)!}\nonumber \\
 & \qquad\qquad\times\frac{(n-1-i)!}{(n-l(i,\delta,j)-\delta+j)!}\frac{1}{j!(\delta-1-j)!}\label{eq: tmp1}\\
 & \quad=\frac{1}{(n+1)^{t}}\frac{i!}{(i-\delta+j+1)!}\nonumber \\
 & \qquad\qquad\times\frac{(n-1-i)!}{(n-i-j-1)!}\frac{1}{j!(\delta-1-j)!},\label{eq: ExpandF}
\end{align}
where (\ref{eq: ExpandF}) is obtained by substituting (\ref{eq: l(i,d,j)}) into (\ref{eq: tmp1}). When $j<\delta-1$, (\ref{eq: ExpandF}) can be written as\\
\begin{equation}
\!\!\!\!\!\!\frac{1}{j!(\delta\!-\!1\!-\!j)!(n\!+\!\!1)^{t-\delta+1}}\!\!\left(\prod_{k=0}^{\delta-j-2}\!\!\frac{i\!-\!k}{n\!+\!1}\!\!\right)\!\!\!\left(\prod_{k'=0}^{j-1}\!\!\frac{n\!-\!1\!-\!i\!-\!k'}{n\!+\!1}\!\right)\!\!.\!\!\!\label{eq: ExpandF-1}
\end{equation}
On the other hand, when $j=\delta-1$, (\ref{eq: ExpandF}) becomes 
\begin{equation}
\frac{1}{(n+1)^{t-\delta+1}}\frac{1}{(\delta-1)!}\left(\prod_{k'=0}^{\delta-2}\frac{n-1-i-k'}{n+1}\right).\label{eq: ExpandF-2}
\end{equation}
Substituting (\ref{eq: ExpandF-1}) and (\ref{eq: ExpandF-2}) into (\ref{eq: Pn(i)byF}), we have 
\begin{align*}
 & P_{n}(i)=1-\left(1+O\left(n^{-0.1}\right)\right)\\
 & \qquad\times\left(\sum_{j=0}^{t-1}F(n,i,t,j)+\sum_{\delta=1}^{t-1}\sum_{j=0}^{\delta-1}F(n,i,\delta,j)\right)\\
 & \;\;=1-\left(1+O\left(n^{-0.1}\right)\right)(n+1)^{-1}\\
 & \qquad\times\left(\frac{1}{(t-1)!}\prod_{k'=0}^{t-2}\frac{n-1-i-k'}{n+1}+\sum_{j=0}^{t-2}\frac{1}{j!(t-1-j)!}\right.\\
 & \qquad\left.\times\left(\prod_{k=0}^{t-j-2}\frac{i-k}{n+1}\right)\!\left(\prod_{k'=0}^{j-1}\frac{n-1-i-k'}{n+1}\right)\!\right)\!+\!O\left(n^{-2}\right)\\
 & \ \;>1-\left(1+O\left(n^{-0.1}\right)\right)n^{-1}\\
 & \qquad\quad\times\left(\sum_{j=0}^{t-1}\frac{1}{j!(t-j-1)!}\right)+O\left(n^{-2}\right).
\end{align*}
Since $\sum_{j=0}^{t-1}\frac{1}{j!(t-j-1)!}\leq2$, it follows that $\lim_{n\rightarrow\infty}P_{n}(i)=1$.

For the analysis of $nQ_{n}(i)$, we also define 
\begin{align}
K & (n,i,\delta,j)\triangleq\frac{l(i,\delta,j)-1}{n}2^{-mt}\binom{n}{l(i,\delta,j)-1}\nonumber \\
 & \quad\;\times\binom{l(i,\delta,j)-2}{l(i,\delta,j)\!-\!j\!-\!1}\!\!\binom{n\!-\!l(i,\delta,j)\!+\!1}{\delta-1}\!\!\binom{n\!-\!1}{i}^{\!-1}\label{eq: K}
\end{align}
so that
\begin{align}
nQ_{n}(i) & =\sum_{\delta=1}^{t}\sum_{j=0}^{\delta}\left(1+O\left(n^{-0.1}\right)\right)nK(n,i,\delta,j).\label{eq: nQn(i)}
\end{align}
Now, we will show that $nQ_{n}(i)$ is bounded by a constant independent of $n$ for all $t+1\leq i\leq n-t-1$. Using a simplification similar to the one used above for (\ref{eq: ExpandF}), one finds that %
\begin{align}
 & nK(n,i,\delta,j)\nonumber \\
 & \ =\frac{n}{(n+1)^{t}}\frac{i!}{(i-\delta+j)!}\frac{(n-1-i)!}{(n-i-j)!}\frac{1}{(j-1)!(\delta-j)!}.\!\!\!\label{eq: ExpandednK}
\end{align}
When $j<\delta$, the RHS of (\ref{eq: ExpandednK}) can be simplified by
\begin{align}
 & \frac{1}{(j-1)!(\delta-j)!}\frac{n}{(n+1)^{t-\delta+1}}\nonumber \\
 & \quad\times\left(\prod_{k=0}^{\delta-j-1}\frac{i-k}{n+1}\right)\left(\prod_{k'=0}^{j-2}\frac{n-1-i-k'}{n+1}\right).\label{eq: ExpandednK-1}
\end{align}
On the other hand, when $j=\delta$, (\ref{eq: ExpandednK}) becomes 
\begin{equation}
\frac{1}{(\delta-1)!}\frac{n}{(n+1)^{t-\delta+1}}\left(\prod_{k'=0}^{\delta-2}\frac{n-1-i-k'}{n+1}\right).\label{eq: ExpandednK-2}
\end{equation}
In both cases, it is easy to verify that (\ref{eq: ExpandednK-1}) and (\ref{eq: ExpandednK-2}) are upper bounded by $1$. Therefore, $nQ_{n}(i)$ is bounded by a constant for all $t+1\leq i\leq n-t-1$.%

To show (\ref{eq: finite_n}), we first introduce some upper bounds on $nK(n,i,\delta,j)$. When $j$ and $\delta$ satisfy $\delta<t$ and $0\leq j\leq\delta$, both (\ref{eq: ExpandednK-1}) and (\ref{eq: ExpandednK-2}) imply that $nK(n,i,\delta,j)<\frac{n}{n^{t-\delta+1}}\leq\frac{1}{n}$ for all $t+1\leq i\leq n-t-1$.%
{} Also, by substituting $\delta=t$ into (\ref{eq: ExpandednK-2}), we have $nK(n,i,t,t)<\frac{1}{(t-1)!}$ for all $t+1\leq i\leq n-t-1$. When $\delta=t$ and $j<\delta$, we consider (\ref{eq: ExpandednK-1}) on the interval $t+1\leq i\leq\lfloor\sqrt{n}\rfloor$. Analyzing (\ref{eq: ExpandednK-1}), we see that
\begin{align*}
nK(n,i,t,j) & \leq\frac{\lfloor\sqrt{n}\rfloor}{n+1}<\frac{1}{\sqrt{n}}.
\end{align*}
From (\ref{eq: nQn(i)}), $nQ_{n}(i)$ for $t+1\leq i\leq\lfloor\sqrt{n}\rfloor$ can be upper bounded by
\begin{align}
 & nQ_{n}(i)=\sum_{\delta=1}^{t}\sum_{j=0}^{\delta}\left(1+O\left(n^{-0.1}\right)\right)nK(n,i,\delta,j)\nonumber \\
 & \quad\leq\left(1+O\left(n^{-0.1}\right)\right)\left(\sum_{\delta=1}^{t-1}\sum_{j=0}^{\delta}\frac{1}{n}+\sum_{j=0}^{t-1}\frac{1}{\sqrt{n}}+\frac{1}{(t-1)!}\right)\nonumber \\
 & \quad=\frac{1}{(t-1)!}\left(1+O\left(n^{-0.1}\right)\right).\label{eq: nQnFiniteUB}
\end{align}
Thus, for fixed $i\geq t+1$, we conclude that $\lim_{n\rightarrow\infty}nQ_{n}(i)=\frac{1}{(t-1)!}$. 
\end{IEEEproof}

\section{Proof of Lemma \ref{lem: limEPn_limEQn}\label{app: pf_limEPn_limEQn}}
\begin{IEEEproof}
Since $P_{n}(i)=0$ for $0\leq i\leq t-1$ and $P_{n}(i)<1$ for all $i\geq t$, we know that 
\[
1-E\left[P_{n}(X_{n})\right]\geq\sum_{i=0}^{t-1}\binom{n\!-\!1}{i}\!\left(\frac{\lambda_{n}}{n\!-\!1}\right)^{i}\!\left(\!1\!-\!\frac{\lambda_{n}}{n\!-\!1}\!\right)^{n-1-i},
\]
and Poisson convergence implies $\lim_{n\rightarrow\infty}E\left[P_{n}(X_{n})\right]\leq\phi(\lambda;t-1).$ With the convention that $\binom{n}{k}=0$ if $k>n$, we can fix $T$ and write
\begin{align*}
E\left[P_{n}(X_{n})\right] & \geq\!\sum_{i=t}^{T}\!\binom{n\!-\!1}{i}\!\left(\frac{\lambda_{n}}{n\!-\!1}\right)^{i}\!\left(\!1\!-\!\frac{\lambda_{n}}{n\!-\!1}\!\right)^{n-1-i}\!\!P_{n}(i).
\end{align*}
Again, Poisson convergence implies
\begin{align*}
 & \lim_{n\rightarrow\infty}E\left[P_{n}(X_{n})\right]\\
 & \quad\geq\lim_{n\rightarrow\infty}\!\sum_{i=t}^{T}\!\binom{n\!-\!1}{i}\!\left(\frac{\lambda_{n}}{n\!-\!1}\right)^{i}\!\left(\!1\!-\!\frac{\lambda_{n}}{n\!-\!1}\!\right)^{n-1-i}\!\!P_{n}(i)\\
 & \quad=\phi(\lambda;t-1)-\phi(\lambda;T).
\end{align*}
Since $T$ is arbitrary and by Markov's inequality $\phi(\lambda;T)\leq\frac{\lambda}{T+1}$, it follows that $\lim_{n\rightarrow\infty}E\left[P_{n}(X_{n})\right]=\phi(\lambda;t-1)$.

Since $Q_{n}(i)=0$ for $i\leq t$ and $Q_{n}(i)\leq1$ for all $i\geq t+1$, we can write
\begin{align*}
 & E\left[Q_{n}(X_{n})\right]\\
 & \quad\leq\sum_{i=t+1}^{T}\binom{n-1}{i}\left(\frac{\lambda_{n}}{n-1}\right)^{i}\left(1-\frac{\lambda_{n}}{n-1}\right)^{n-1-i}Q_{n}(i)\\
 & \quad\qquad+\sum_{i=T+1}^{\infty}\binom{n-1}{i}\left(\frac{\lambda_{n}}{n-1}\right)^{i}\left(1-\frac{\lambda_{n}}{n-1}\right)^{n-1-i}.
\end{align*}
Using (\ref{eq: nQnFiniteUB}), we see that $\lim_{n\rightarrow\infty}Q_{n}(i)=0$ for any fixed $i$. Thus, the previous equation shows that $\lim_{n\rightarrow\infty}E\left[Q_{n}(X_{n})\right]\leq\phi(\lambda;T)$. Since $Q_{n}(i)$ is non-negative and $T$ can be chosen arbitrarily large, this implies that $\lim_{n\rightarrow\infty}E\left[Q_{n}(X_{n})\right]=0$.
\end{IEEEproof}

\section{Proof of Lemma \ref{lem: limEnQn}\label{app: pf_limEnQn}}
\begin{IEEEproof}
By using the convention that $\binom{n}{k}=0$ when ever $k>n$, for any $T\geq t$, we can rewrite $E\left[nQ_{n}(X_{n})\right]$ as
\begin{align*}
 & E\left[nQ_{n}(X_{n})\right]\\
 & \quad=\!\sum_{i=t+1}^{T}\!\binom{n\!-\!1}{i}\!\left(\frac{\lambda_{n}}{n\!-\!1}\right)^{i}\!\left(\!1\!-\!\frac{\lambda_{n}}{n\!-\!1}\!\right)^{n-1-i}\!\!nQ_{n}(i)\\
 & \qquad+\!\sum_{i=T+1}^{\infty}\!\binom{n\!-\!1}{i}\!\left(\frac{\lambda_{n}}{n\!-\!1}\right)^{i}\!\left(\!1\!-\!\frac{\lambda_{n}}{n\!-\!1}\!\right)^{n-1-i}\!\!nQ_{n}(i).
\end{align*}
Then, by the Poisson theorem \cite[pp. 113]{Papoulis-2002} and Lemma \ref{lem:LimitOfPn(i)andnQn(i)} we know
\begin{align*}
 & \lim_{n\rightarrow\infty}E\left[nQ_{n}(X_{n})\right]\\
 & \quad\geq\lim_{n\rightarrow\infty}\sum_{i=t+1}^{T}\!\!\binom{n\!-\!1}{i}\left(\frac{\lambda_{n}}{n\!-\!1}\right)^{\!i}\!\!\left(\!1\!-\!\frac{\lambda_{n}}{n\!-\!1}\!\right)^{\!n-1-i}\!nQ_{n}(i)\\
 & \quad=\frac{1}{(t-1)!}\left(\phi(\lambda;t)-\phi(\lambda;T)\right).
\end{align*}
Since $nQ_{n}(i)$ is bounded for $t+1\leq i\leq n-2-t$ according to Lemma \ref{lem:LimitOfPn(i)andnQn(i)}, there exists a constant $0<C<\infty$ independent of $n$ such that $nQ_{n}(i)\leq C$ for $t+1\leq i\leq n-2-t$. Also, we know $nQ_{n}(i)=0$ for $0\leq i\leq t$ and $nQ_{n}(i)=n$ for $n-t-1\leq i\leq n-1$. Thus, $E\left[nQ_{n}(X_{n})\right]$ can be upper bounded by 
\begin{align}
 & E\left[nQ_{n}(X_{n})\right]\nonumber \\
 & \quad\leq\sum_{i=t+1}^{T}\!\binom{n-1}{i}\!\left(\frac{\lambda_{n}}{n-1}\right)^{i}\!\left(1-\frac{\lambda_{n}}{n-1}\right)^{n-1-i}nQ_{n}(i)\nonumber \\
 & \qquad+C\sum_{i=T+1}^{\infty}\!\binom{n-1}{i}\!\left(\frac{\lambda_{n}}{n-1}\right)^{i}\!\left(1-\frac{\lambda_{n}}{n-1}\right)^{n-1-i}\nonumber \\
 & \qquad\quad+\!n\!\!\sum_{i=n-t-1}^{n-1}\!\!\binom{n\!-\!1}{i}\!\left(\frac{\lambda_{n}}{n\!-\!1}\right)^{\!i}\!\left(\!1\!-\!\frac{\lambda_{n}}{n\!-\!1}\!\right)^{\!n-1-i}\!.\!\label{eq: EnQnUB-1}
\end{align}
By the Chernoff bound of the binomial distribution ${\rm Bi}(n-1,\frac{\lambda_{n}}{n-1})$, the last term of (\ref{eq: EnQnUB-1}) can be upper bounded by

\begin{align}
 & n\!\!\!\sum_{i=n-t-1}^{n-1}\!\!\binom{n-1}{i}\!\left(\frac{\lambda_{n}}{n-1}\right)^{i}\!\!\left(1-\frac{\lambda_{n}}{n-1}\right)^{n-1-i}\nonumber \\
 & \qquad\qquad\leq\!n\left(\frac{e\lambda_{n}}{n-1-t}\right)^{n-t-1}\!\!e^{-\lambda_{n}}.\label{eq: BinomTailBound1}
\end{align}
Since $\lambda_{n}\rightarrow\lambda$ and $\lambda<\infty$, we know that there exists a $N_{0}>0$ and a $\epsilon>0$ such that $\lambda_{n}\leq\lambda+\epsilon$ whenever $n>N_{0}$. Thus, we have 
\begin{align*}
0 & \leq\lim_{n\rightarrow\infty}n\left(\frac{e\lambda_{n}}{n-1-t}\right)^{n-t-1}e^{-\lambda_{n}}\\
 & \leq e^{-\lambda}\lim_{n\rightarrow\infty}n\left(\frac{e(\lambda+\epsilon)}{n-1-t}\right)^{n-t-1}=0,
\end{align*}
and 
\begin{align*}
 & \lim_{n\rightarrow\infty}E\left[nQ_{n}(X_{n})\right]\\
 & \quad\leq\lim_{n\rightarrow\infty}\!\sum_{i=t+1}^{T}\!\!\binom{n\!-\!1}{i}\!\!\left(\frac{\lambda_{n}}{n\!-\!1}\right)^{\!i}\!\!\left(\!1\!-\!\frac{\lambda_{n}}{n\!-\!1}\!\right)^{\!n-1-i}\!\!nQ_{n}(i)\\
 & \qquad+C\!\lim_{n\rightarrow\infty}\!\sum_{i=T+1}^{\infty}\!\!\binom{n\!-\!1}{i}\!\!\left(\frac{\lambda_{n}}{n\!-\!1}\right)^{\!i}\!\!\left(\!1\!-\!\frac{\lambda_{n}}{n\!-\!1}\!\right)^{\!n-1-i}\\
 & \quad=\frac{1}{(t-1)!}\phi(\lambda;t)+\left(C-\frac{1}{(t-1)!}\right)\phi(\lambda;T).
\end{align*}
Therefore, we have 
\begin{align*}
 & \frac{1}{(t-1)!}\phi(\lambda;t)-\frac{1}{(t-1)!}\phi(\lambda;T)\\
 & \quad\leq\lim_{n\rightarrow\infty}E\left[nQ_{n}(X_{n})\right]\\
 & \quad\leq\frac{1}{(t-1)!}\phi(\lambda;t)+\left(C-\frac{1}{(t-1)!}\right)\phi(\lambda;T).
\end{align*}
Since $C$ is independent of $n$ and $T$ is arbitrary, we know $\lim_{n\rightarrow\infty}E\left[nQ_{n}(X_{n})\right]=\frac{1}{(t-1)!}\phi(\lambda;t)$. 
\end{IEEEproof}

\section{Proof of Lemma \ref{lem: subcodes}\label{app: pf_subcodes}}
\begin{IEEEproof}
To show the lemma, it suffices to show that $\lim_{n\rightarrow\infty}\tilde{P}_{n}(i)=1$ for $i\geq t$, $\lim_{n\rightarrow\infty}E[\tilde{Q}_{n}(X_{n})]=0$, and 
\begin{align}
\lim_{n\rightarrow\infty}E\left[n\tilde{Q}_{n}(X_{n})\right] & =\begin{cases}
\frac{1}{(t-1)!}\psi\left(\lambda;t\right) & \mbox{ if \ \ensuremath{t\ }is even},\\
\frac{1}{(t-1)!}\varphi\left(\lambda;t\right) & \mbox{ if \ensuremath{\ t\ }is odd}.
\end{cases}\label{eq: tildeQ}
\end{align}
Let $\set L_{1}(i,\delta)\triangleq\{j\in[0,\delta]\mid l(i,\delta,j)=0\ \mbox{mod}\ 2\}$ be the set of all $j$ such that $l(i,\delta,j)$ is an even number. Since $A_{\ell}=0$ for all odd $\ell$, we have
\begin{align*}
\tilde{P}_{n}(i) & =1-\sum_{\delta=1}^{t}\sum_{j\in\set L_{1}(i,\delta)}\frac{n-l(i,\delta,j)}{n}A_{l(i,\delta,j)}\itTheta(n,i,\delta,j)\\
 & \geq P_{n}(i).
\end{align*}
Likewise, we define $\set L_{2}(i,\delta)\triangleq\{j\in[0,\delta]\mid(l(i,\delta,j)-1)=0\ \mbox{mod}\ 2\}$ be the set of all $j$ such that $l(i,\delta,j)-1$ is an even number. Then, 
\begin{align*}
\tilde{Q}_{n}(i) & =\sum_{\delta=1}^{t}\sum_{j\in\set L_{2}(i,\delta)}\frac{l(i,\delta,j)-1}{n}A_{l(i,\delta,j)-1}\itLambda(n,i,\delta,j)\\
 & \leq Q_{n}(i).
\end{align*}
From Lemma \ref{lem:LimitOfPn(i)andnQn(i)} and Lemma \ref{lem: limEPn_limEQn}, we immediately have $\lim_{n\rightarrow\infty}\tilde{P}_{n}(i)=1$ for $i\geq t$, and $\lim_{n\rightarrow\infty}E[\tilde{Q}_{n}(X_{n})]=0$. 

From (\ref{eq: K}) and (\ref{eq: nQn(i)}, we have 
\begin{align*}
n\tilde{Q}_{n}(i) & =\sum_{\delta=1}^{t}\sum_{j\in\set L_{2}(i,\delta)}\left(1+O\left(n^{-0.1}\right)\right)nK(n,i,\delta,j),
\end{align*}
When $t$ is even, and $i$ is odd, one can show that $\set L_{2}(i,t)=\emptyset$. From (\ref{eq: ExpandednK-2}), we know 
\begin{align*}
n\tilde{Q}_{n}(i) & \leq\sum_{\delta=1}^{t-1}\sum_{j\in0}^{\delta}\left(1+O\left(n^{-0.1}\right)\right)nK(n,i,\delta,j)\\
 & =O\left(n^{-1}\right).
\end{align*}
When both $t$ and $i$ are even numbers, one can show $t\in\set L_{2}(i,t)$. By the same argument in the proof of Lemma \ref{lem:LimitOfPn(i)andnQn(i)}, we know $n\tilde{Q}_{n}(i)$ is upper bounded by a constant for all even $i$ with $i\geq t+1$, and for all even $t+1\leq i\leq\sqrt{n}$, $n\tilde{Q}_{n}(i)=\frac{1}{(t-1)!}(1+O(n^{-0.1}))$. Let $\mathbb{N}_{e}$ be the set of even natural numbers, and $\mathbb{N}_{o}$ be the set of odd nature numbers. Then, when $t$ is even, we have 
\begin{align*}
 & E\left[n\tilde{Q}_{n}(X_{n})\right]\\
 & \quad=\!\!\sum_{i\in\mathbb{N}_{e},i\geq t+2}\!\!\binom{n\!-\!1}{i}\!\!\left(\frac{\lambda_{n}}{n\!-\!1}\right)^{\!i}\!\!\left(\!1\!-\!\frac{\lambda_{n}}{n\!-\!1}\!\right)^{\!n-1-i}\!\!n\tilde{Q}_{n}(i)\\
 & \qquad+\!\!\sum_{i\in\mathbb{N}_{o},i\geq t+1}\!\!\binom{n\!-\!1}{i}\!\!\left(\frac{\lambda_{n}}{n\!-\!1}\right)^{\!i}\!\!\left(\!1\!-\!\frac{\lambda_{n}}{n\!-\!1}\!\right)^{\!n-1-i}\!\!n\tilde{Q}_{n}(i)\\
 & \quad=\!\!\sum_{i\in\mathbb{N}_{e},i\geq t+2}\!\!\binom{n\!-\!1}{i}\!\!\left(\frac{\lambda_{n}}{n\!-\!1}\right)^{\!i}\!\!\left(\!1\!-\!\frac{\lambda_{n}}{n\!-\!1}\!\right)^{\!n-1-i}\!\!n\tilde{Q}_{n}(i)\\
 & \qquad+O\left(\frac{1}{n}\right).
\end{align*}
By the same calculation in the proof of Lemma \ref{lem: limEnQn}, we have 
\begin{align*}
 & \lim_{n\rightarrow\infty}E\left[n\tilde{Q}_{n}(X_{n})\right]\\
 & \quad=\frac{1}{(t-1)!}\sum_{i\in\mathbb{N}_{e},i\geq t+2}\frac{\lambda^{i}e^{-\lambda}}{i!}\\
 & \quad=\frac{1}{(t-1)!}\left(\sum_{i\in\mathbb{N}_{e},i\geq t+2}\frac{\lambda^{i}e^{-\lambda}}{i!}-\sum_{i'=0}^{\frac{t}{2}}\frac{\lambda^{2i}e^{-\lambda}}{(2i)!}\right)\\
 & \quad\overset{\mbox{(a)}}{=}\frac{1}{(t-1)!}\left(\frac{\left(1-e^{-2\lambda}\right)}{2}-\sum_{i=0}^{\frac{t}{2}}\frac{\lambda^{2i}e^{-\lambda}}{(2i)!}\right),
\end{align*}
where (a) follows from the fact that $\sum_{i\in\mathbb{N}_{e},i\geq t+2}\frac{\lambda^{i}e^{-\lambda}}{i!}=\frac{1}{2}(1-e^{-2\lambda})$. Thus, we have verified (\ref{eq: tildeQ}) for even $t$.

When $t$ is a odd number, we know $\set L_{2}(i,t)=\emptyset$ for all even $i$. Then, one can have $n\tilde{Q}_{n}(i)<\frac{1}{n}$ for all even $i\geq t+1$, and $n\tilde{Q}_{n}(i)=\frac{1}{(t-1)!}(1+O(n^{-0.1}))$ for all odd $t+1\leq i\leq\sqrt{n}$. Applying the same argument to the case where $t$ is even, we get
\begin{align*}
 & E\left[n\tilde{Q}_{n}(X_{n})\right]\\
 & \quad=\!\!\sum_{i\in\mathbb{N}_{o},i\geq t+2}\!\!\binom{n\!-\!1}{i}\!\!\left(\frac{\lambda_{n}}{n\!-\!1}\right)^{\!i}\!\!\left(\!1\!-\!\frac{\lambda_{n}}{n\!-\!1}\right)^{\!n-1-i}\!\!n\tilde{Q}_{n}(i)\\
 & \qquad+\!\!\sum_{i\in\mathbb{N}_{e},i\geq t+1}\!\!\binom{n\!-\!1}{i}\!\left(\frac{\lambda_{n}}{n\!-\!1}\right)^{\!i}\!\left(\!1\!-\!\frac{\lambda_{n}}{n\!-\!1}\right)^{\!n-1-i}\!\!n\tilde{Q}_{n}(i)\\
 & \quad=\!\!\sum_{i\in\mathbb{N}_{o},i\geq t+2}\!\!\binom{n\!-\!1}{i}\!\!\left(\frac{\lambda_{n}}{n\!-\!1}\right)^{\!i}\!\!\left(\!1\!-\!\frac{\lambda_{n}}{n\!-\!1}\!\right)^{\!n-1-i}\!\!n\tilde{Q}_{n}(i)\\
 & \qquad+O\left(\frac{1}{n}\right).
\end{align*}
Then, 
\begin{align*}
 & \lim_{n\rightarrow\infty}E\left[n\tilde{Q}_{n}(X_{n})\right]\\
 & \quad=\frac{1}{(t-1)!}\sum_{i\in\mathbb{N}_{o},i\geq t+2}\frac{\lambda^{i}e^{-\lambda}}{i!}\\
 & \quad=\frac{1}{(t-1)!}\left(\frac{\left(1+e^{-2\lambda}\right)}{2}-\sum_{i=0}^{\frac{t-1}{2}}\frac{\lambda^{2i+1}e^{-\lambda}}{(2i+1)!}\right).
\end{align*}
This verifies (\ref{eq: tildeQ}) for odd $t$ and completes the proof the lemma.
\end{IEEEproof}

\section{Proof of Lemma \ref{lem: Vn_unique_root} \label{app: pf_Vnx_unique_root}}
\begin{IEEEproof}
From~(\ref{eq: fixed_point_pot_f}) and (\ref{eq: fn_eq_Ix}), we can see that 
\begin{align*}
\hV_{n}'(x) & =-\frac{1}{2}\hfn(z)+\frac{1}{2}x\hfn'(x)\\
 & =-\frac{1}{2B(t,n-t)}\Big(\int_{0}^{x}z^{t-1}(1-z)^{n-t-1}\d z\\
 & \quad\quad-x^{t}(1-x)^{n-t-1}\Big).
\end{align*}
Since the derivative $\frac{\d}{\d x}x^{t}(1-x)^{n-t-1}$ is given by 
\begin{align*}
 & =tx^{t-1}(1-x)^{n-t-1}-(n-t-1)x^{t}(1-x)^{n-t-2}\\
 & =(t(1-x)-(n-t-1)x)x^{t-1}(1-x)^{n-t-2}\\
 & =(t-(n-1)x)x^{t-1}(1-x)^{n-t-2},
\end{align*}
and the expression $x^{t}(1-x)^{n-t-1}$ can be written as 
\[
\int_{0}^{x}\left(t-(n-1)z\right)z^{t-1}(1-z)^{n-t-2}\d z,
\]
we find that $\hV_{n}'(x)$ can be expressed as 
\begin{align}
\hV_{n}'(x) & =-\frac{1}{2B(t,n-t)}\int_{0}^{x}\bigg(z^{t-1}(1-z)^{n-t-1}\nonumber \\
 & \quad\quad-\left(t-(n-1)z\right)z^{t-1}(1-z)^{n-t-2}\bigg)\d z\nonumber \\
 & =-\frac{1}{2B(t,n-t)}\int_{0}^{x}\bigg((1-z)\nonumber \\
 & \quad\quad-\left(t-(n-1)z\right)z^{t-1}(1-z)^{n-t-2}\bigg)\d z\nonumber \\
 & =-\frac{1}{2B(t,n-t)}\int_{0}^{x}\big(1-t\nonumber \\
 & \quad\quad+(n-2)z\big)z^{t-1}(1-z)^{n-t-2}\d z.\label{eq: P'x-1}
\end{align}
It follows that $\hV_{n}'(x)>0$ for all $0<x<\frac{t-1}{n-2}$ because $z^{t-1}(1-z)^{n-t-2}>0$ and $1-t+(n-2)z<0$ for $0<z<\frac{t-1}{n-2}$. Applying fundamental theorem of calculus to \eqref{eq: P'x-1}, we also find that
\begin{align*}
\hV_{n}''(x) & =\frac{-1}{2B(t,n\!-\!t)}\!\left(1\!-\!t\!+\!(n\!-\!2)x\right)x^{t-1}(1\!-\!x)^{n-t-2},
\end{align*}
and, hence, $\hV_{n}''(x)<0$ for all $\frac{t-1}{n-2}<x<1$. 

Since $\hV_{n}(0)=0$, the bound on $\hV_{n}'(x)$ implies that $\hV_{n}(x)$ is positive for all $0<x<\frac{t-1}{n-2}$. From~(\ref{eq: fixed_point_pot_I}), we see that $\hV_{n}(1)<0$. Thus, $\hV_{n}(x)$ must have a root at some $x^{*}\in[\frac{t-1}{n-2},1]$ and $\hV_{n}'(x^{*})\leq0$ at that root. Since $\hV_{n}''(x)<0$ for all $\frac{t-1}{n-2}\leq x\leq1$, we see that $\hV_{n}'(x)<0$ for all $x>x^{*}$. Thus, the root must be unique. 
\end{IEEEproof}

\section{Proof of Lemma \ref{lem: Vnxn=00003D0}\label{app: pf_Vnx=00003D0}}
\begin{IEEEproof}
Since $\hV_{n}(x)$ is a continuous function on $(0,1]$, we prove that $\hx_{n}^{**}\geq\frac{2t-2}{n}$ for sufficiently large $t$ and $n$ by showing that $\hV_{n}\left(\frac{2t}{n}\right)<0$ and $\hV_{n}\left(\frac{2t-2}{n}\right)>0$ when $t\geq t_{0}$ and $n\geq\min(t+2,n_{0})$.

From (\ref{eq: fixed_point_pot_I}), we observe that
\begin{align}
\hV_{n}\left(\frac{2t}{n}\right) & =\frac{-1}{nB(t,n-t)}\left(\frac{2t}{n}\right)^{t}\left(1-\frac{2t}{n}\right)^{n-t}<0.\label{eq: Vn2tn < 0}
\end{align}
Next, we simplify $V_{n}\left(\frac{2t-2}{n}\right)$ with 
\begin{align*}
\\
 & \hV_{n}\left(\frac{2t-2}{n}\right)=\frac{1}{n}I_{\frac{2t-2}{n}}(t,n-t)\\
 & \qquad-\frac{1}{nB(t,n-t)}\left(\frac{2t-2}{n}\right)^{t}\left(1-\frac{2t-2}{n}\right)^{n-t}\\
 & \ \overset{\mbox{(a)}}{=}\frac{1}{n}I_{\frac{2t-2}{n}}(t,n-t)\\
 & \qquad-\!\frac{t}{n}\!\left(\!1\!-\!\frac{2t\!-\!2}{n}\right)\!\!\binom{n\!-\!1}{t}\!\!\left(\frac{2t\!-\!2}{n}\right)^{\!t\!}\!\!\left(1\!-\!\frac{2t\!-\!2}{n}\right)^{n-t-1}\\
 & \ \geq\frac{1}{n}I_{\frac{2t-2}{n}}(t,n-t)\\
 & \qquad-\frac{t}{n}\binom{n-1}{t}\left(\frac{2t-2}{n}\right)^{t}\left(1-\frac{2t-2}{n}\right)^{n-t-1},
\end{align*}
where (a) follows from the definition of $B(t,n-t)$ in (\ref{eq: Beta}). Assuming $t\geq3$, one can apply the Chernoff bound to the lower tail of the binomial distribution to get 
\begin{align*}
1-I_{\frac{2t-2}{n}}(t,n-t) & \leq\left(2-\frac{2}{t}\right)^{t}e^{-(t-2)}.
\end{align*}
Thus, we know
\begin{align*}
 & \hV_{n}\left(\frac{2t-2}{n}\right)\\
 & \,\geq\!\frac{1}{n}\!\!\left(\!\!1\!-\!\left(\!2\!-\!\frac{2}{t}\!\right)^{\!\!t\!}\!e^{-(t-2)}\!-\!t\binom{\!n\!-\!1\!}{t}\!\!\left(\frac{\!2t\!-\!2\!}{n}\right)^{\!\!t\!}\!\!\left(\!1\!-\!\frac{2t\!-\!2}{n}\!\right)^{\!n-t-1}\!\right)\\
 & \,=\frac{1}{n}\Psi(n;t),
\end{align*}
where $\Psi(n;t)\triangleq1-e^{-(t-2)}\left(2-\frac{2}{t}\right)^{t}-t\binom{n-1}{t}\left(\frac{2t-2}{n}\right)^{t}\left(1-\frac{2t-2}{n}\right)^{n-t-1}$. By the Poisson theorem \cite[pp. 113]{Papoulis-2002} and the fact that $t!\leq t^{t}e^{-t}$ \cite[pp. 30]{RU-2008}, one can show
\begin{align*}
 & \lim_{n\rightarrow\infty}\binom{n-1}{t}\left(\frac{2t-2}{n}\right)^{t}\left(1-\frac{2t-2}{n}\right)^{n-t-1}\\
 & \quad=\frac{(2t-2)^{t}}{t!}e^{-(2t-2)}\\
 & \quad\geq\frac{(2t-2)^{t}}{t^{t}e^{-t}}e^{-(2t-2)}.
\end{align*}
In the limit as $n\rightarrow\infty$ followed by $t\rightarrow\infty$, the function $\Psi(n;t)$ can be lower bounded by 
\begin{align*}
 & \lim_{t\rightarrow\infty}\lim_{n\rightarrow\infty}\Psi(n;t)\\
 & \quad\geq\lim_{t\rightarrow\infty}\left(1-\left(2-\frac{2}{t}\right)^{t}e^{-(t-2)}-t\left(2-\frac{2}{t}\right)^{t}e^{-(t-2)}\right)\\
 & \quad=1-\lim_{t\rightarrow\infty}(1+t)\left(2-\frac{2}{t}\right)^{t}e^{-(t-2)}\\
 & \quad=1.
\end{align*}
Therefore, there exists a $t_{0}\geq3$ and a function $n_{0}(t)\geq t+2$ such that such that $\Psi(n;t)>0$ for all $t\geq t_{0}$ and $n\geq n_{0}(t)$. This implies that $\hV_{n}\left(\frac{2t-2}{n}\right)>0$ for all $t\geq t_{0}$ and $n\geq n_{0}(t)$. Since (\ref{eq: Vn2tn < 0}) holds for any $t>0$ and $n>0$, we conclude that $\hx_{n}^{**}$ exists and $\frac{2t-2}{n}\leq\hat{x}_{n}^{**}\leq\frac{2t}{n}$ for all $t\geq t_{0}$ and $n\geq n_{0}(t)$. 
\end{IEEEproof}

\section{Proof of Lemma \ref{lem: EnQnFinite_n}\label{app: pf_Finite_n_Rec}}
\begin{IEEEproof}
We first rewrite $E\left[\smash{nQ_{n}(X_{n})}\right]$ by
\begin{align*}
 & E\left[nQ_{n}(X_{n})\right]\\
 & \quad=\sum_{i=t+1}^{\lfloor\sqrt{n}\rfloor}\!\binom{n\!-\!1}{i}\!\left(\frac{\lambda}{n\!-\!1}\right)^{i}\!\left(1\!-\!\frac{\lambda}{n\!-\!1}\right)^{n-i-1}nQ_{n}(i)\\
 & \qquad+\!\!\sum_{i=\lfloor\sqrt{n}\rfloor+1}^{n-t-2}\!\!\binom{n\!-\!1}{i}\!\left(\frac{\lambda}{n\!-\!1}\right)^{i}\!\left(1\!-\!\frac{\lambda}{n\!-\!1}\right)^{n-i-1}\!nQ_{n}(i)\\
 & \qquad\qquad+n\!\!\sum_{i=n-t-2}^{n-1}\!\!\binom{n\!-\!1}{i}\!\left(\frac{\lambda}{n\!-\!1}\right)^{i}\!\left(1\!-\!\frac{\lambda}{n\!-\!1}\right)^{n-i-1}.
\end{align*}
From Lemma \ref{lem:LimitOfPn(i)andnQn(i)} and (\ref{eq: finite_n}), we can upper bound $E\left[nQ_{n}\left(X_{n}\right)\right]$ by 
\begin{align}
 & E\left[nQ_{n}(X_{n})\right]\nonumber \\
 & \quad\leq\left(\frac{1}{(t-1)!}+O\left(n^{-0.1}\right)\right)I_{\frac{\lambda}{n-1}}(t+1,n-t-1)\nonumber \\
 & \qquad+C_{1}I_{\frac{\lambda}{n-1}}\left(\lfloor\sqrt{n}\rfloor+1,n-\lfloor\sqrt{n}\rfloor-1\right)\nonumber \\
 & \quad\qquad+nI_{\frac{\lambda}{n-1}}\left(n-t-2,t-2\right),\label{eq: EnQnFinite_n-1}
\end{align}
where $C_{1}$ is a constant. By applying Chernoff bound, the second term of (\ref{eq: EnQnFinite_n-1}) is upper bounded by 
\[
\left(\frac{e\lambda}{\lfloor\sqrt{n}\rfloor}\right)^{^{\lfloor\sqrt{n}\rfloor}}e^{-\lambda}.
\]
Thus, with the upper bound (\ref{eq: BinomTailBound1}) for the last term of (\ref{eq: EnQnFinite_n-1}), we have 
\begin{align*}
 & E\left[nQ_{n}(X_{n})\right]\\
 & \quad\leq\left(\frac{1}{(t-1)!}+O\left(n^{-0.1}\right)\right)I_{\frac{\lambda}{n-1}}(t+1,n-t-1)\\
 & \qquad+C_{1}\left(\frac{e\lambda}{\lfloor\sqrt{n}\rfloor}\right)^{\lfloor\sqrt{n}\rfloor}e^{-\lambda}+n\left(\frac{e\lambda}{n-1-t}\right)^{n-t-1}e^{-\lambda}
\end{align*}
Next, we observe that, for any $\lambda>0$, 
\begin{align*}
 & \lim_{n\rightarrow\infty}\frac{\left(\frac{e\lambda}{\lfloor\sqrt{n}\rfloor}\right)^{\lfloor\sqrt{n}\rfloor}}{I_{\frac{\lambda}{n-1}}(t+1,n-t-1)}\\
 & \qquad\leq\lim_{n\rightarrow\infty}\frac{\left(\frac{e\lambda}{\lfloor\sqrt{n}\rfloor}\right)^{\lfloor\sqrt{n}\rfloor}}{\left(\frac{\lambda}{n-1}\right)^{t+1}\left(1-\frac{\lambda}{n-1}\right)^{n-t-2}}=0,
\end{align*}
and 
\begin{align*}
 & \lim_{n\rightarrow\infty}\lim_{\lambda\rightarrow0}\frac{\left(\frac{e\lambda}{\lfloor\sqrt{n}\rfloor}\right)^{\lfloor\sqrt{n}\rfloor}}{I_{\frac{\lambda}{n-1}}(t+1,n-t-1)}\\
 & \quad\leq\lim_{n\rightarrow\infty}\lim_{\lambda\rightarrow0}\frac{\left(\frac{e\lambda}{\lfloor\sqrt{n}\rfloor}\right)^{\lfloor\sqrt{n}\rfloor}}{\left(\frac{\lambda}{n-1}\right)^{t+1}\left(1-\frac{\lambda}{n-1}\right)^{n-t-2}}\\
 & \quad=\lim_{n\rightarrow\infty}\lim_{\lambda\rightarrow0}\frac{(n-1)^{t+1}\left(\frac{e}{\lfloor\sqrt{n}\rfloor}\right)^{\lfloor\sqrt{n}\rfloor}\lambda^{\lfloor\sqrt{n}\rfloor-t-1}}{\left(1-\frac{\lambda}{n-1}\right)^{n-t-2}}\\
 & \quad=\lim_{\lambda\rightarrow0}\lim_{n\rightarrow\infty}\frac{(n-1)^{t+1}\left(\frac{e}{\lfloor\sqrt{n}\rfloor}\right)^{\lfloor\sqrt{n}\rfloor}\lambda^{\lfloor\sqrt{n}\rfloor-t-1}}{\left(1-\frac{\lambda}{n-1}\right)^{n-t-2}}\\
 & \quad=0.
\end{align*}
Thus, we have $\left(\left(\frac{e\lambda}{\lfloor\sqrt{n}\rfloor}\right)^{\lfloor\sqrt{n}\rfloor}+n\left(\frac{e\lambda}{n-1-t}\right)^{n-t-1}\right)I_{\frac{\lambda}{n-1}}^{-1}(t+1,n-t-1)=O(n^{-1})$ and 
\begin{align*}
 & E\left[nQ_{n}(X_{n})\right]\\
 & \quad\leq\left(\frac{1}{(t-1)!}+O\left(n^{-0.1}\right)\right)I_{\frac{\lambda}{n-1}}(t+1,n-t-1).
\end{align*}
This concludes the proof of the lemma.
\end{IEEEproof}


\begin{thebibliography}{10}

\bibitem{Jian-isit12}
Y.-Y. Jian, H.~D. Pfister, and K.~R. Narayanan, ``Approaching capacity at high
  rates with iterative hard-decision decoding,'' in {\em Proc.\ IEEE Int.\
  Symp.\ Inform.\ Theory}, pp.~2696--2700, July 2012.

\bibitem{Jian-globe13}
Y.-Y. Jian, H.~D. Pfister, K.~R. Narayanan, R.~Rao, and R.~Mazahreh,
  ``Iterative hard-decision decoding of braided {BCH} codes for high-speed
  optical communication,'' in {\em Proc.\ IEEE Global Telecom.\ Conf.},
  (Atlanta, GA, USA), 2013.

\bibitem{Shannon-bell48}
C.~E. Shannon, ``A mathematical theory of communication,'' {\em The Bell Syst.\
  Techn.\ J.}, vol.~27, pp.~379--423, 623--656, July / Oct. 1948.

\bibitem{Berrou-icc93}
C.~Berrou, A.~Glavieux, and P.~Thitimajshima, ``Near {Shannon} limit
  error-correcting coding and decoding: Turbo-codes,'' in {\em Proc.\ IEEE
  Int.\ Conf.\ Commun.}, vol.~2, (Geneva, Switzerland), pp.~1064--1070, IEEE,
  May 1993.

\bibitem{Luby-it01}
M.~G. Luby, M.~Mitzenmacher, M.~A. Shokrollahi, and D.~A. Spielman, ``Efficient
  erasure correcting codes,'' {\em IEEE Trans.\ Inform.\ Theory}, vol.~47,
  pp.~569--584, Feb. 2001.

\bibitem{Richardson-it01*2}
T.~J. Richardson, M.~A. Shokrollahi, and R.~L. Urbanke, ``Design of
  capacity-approaching irregular low-density parity-check codes,'' {\em IEEE
  Trans.\ Inform.\ Theory}, vol.~47, pp.~619--637, Feb. 2001.

\bibitem{Bender-commag00}
P.~Bender, P.~Black, M.~Grob, R.~Padovani, N.~Sindhushayana, and A.~J. Viterbi,
  ``{CDMA/HDR}: A bandwidth-efficient high-speed wireless data service for
  nomadic users,'' {\em IEEE Commun.\ Magazine}, vol.~38, pp.~70--77, July
  2000.

\bibitem{Douillard-brest00}
C.~Douillard, M.~Jezequel, C.~Berrou, N.~Brengarth, J.~Tousch, and N.~Pham,
  ``The turbo code standard for {DVB-RCS},'' in {\em Proc.\ Int.\ Symp.\ on
  Turbo Codes \& Related Topics}, pp.~535--538, 2000.

\bibitem{Gallager-60}
R.~G. Gallager, {\em Low-Density Parity-Check Codes}.
\newblock PhD thesis, M.I.T., Cambridge, MA, USA, 1960.

\bibitem{Tanner-it81}
R.~M. Tanner, ``A recursive approach to low complexity codes,'' {\em IEEE
  Trans.\ Inform.\ Theory}, vol.~27, pp.~533--547, Sept. 1981.

\bibitem{Boutros-icc99}
J.~Boutros, O.~Pothier, and G.~Zemor, ``Generalized low density ({T}anner)
  codes,'' in {\em Proc.\ IEEE Int.\ Conf.\ Commun.}, vol.~1, pp.~441--445,
  June 1999.

\bibitem{Lentmaier-comlett99}
M.~Lentmaier and K.~S. Zigangirov, ``On generalized low-density parity-check
  codes based on {H}amming component codes,'' {\em IEEE Commun.\ Letters},
  vol.~3, pp.~248--250, Aug. 1999.

\bibitem{Miladinovic-com08}
N.~Miladinovic and M.~Fossorier, ``Generalized {LDPC} codes and generalized
  stopping sets,'' {\em IEEE Trans.\ Commun.}, vol.~56, no.~2, pp.~201--212,
  2008.

\bibitem{itug9751}
``{F}orward error correction for high bit-rate {DWDM} submarine systems,''
  Tech. Rep. G.975.1, ITU-T, 2004.

\bibitem{Djordjevic-jlt05}
I.~B. Djordjevic, B.~Vasic, M.~Ivkovic, and I.~Gabitov, ``Achievable
  information rates for high-speed long-haul optical transmission,'' {\em J.\
  Lightwave Technol.}, vol.~23, p.~3755, Nov. 2005.

\bibitem{Abramson-comtech68}
N.~Abramson, ``Cascade decoding of cyclic product codes,'' {\em IEEE Trans.\
  Commun.\ Tech.}, vol.~16, no.~3, pp.~398--402, 1968.

\bibitem{Lodge-icc93}
J.~Lodge, R.~Young, P.~Hoeher, and J.~Hagenauer, ``Separable {MAP} ``filters''
  for the decoding of product and concatenated codes,'' in {\em Proc.\ IEEE
  Int.\ Conf.\ Commun.}, vol.~3, pp.~1740--1745, 1993.

\bibitem{Pyndiah-com98}
R.~M. Pyndiah, ``Near-optimum decoding of product codes: Block turbo codes,''
  {\em IEEE Trans.\ Commun.}, vol.~46, pp.~1003--1010, Aug. 1998.

\bibitem{Schwartz-isit05}
M.~Schwartz, P.~H. Siegel, and A.~Vardy, ``On the asymptotic performance of
  iterative decoders for product codes,'' in {\em Proc.\ IEEE Int.\ Symp.\
  Inform.\ Theory}, pp.~1758--1762, Sept. 2005.

\bibitem{Justesen-itw07}
J.~Justesen and T.~H{\o}holdt, ``Analysis of iterated hard decision decoding of
  product codes with {R}eed-{S}olomon component codes,'' in {\em Proc.\ IEEE
  Inform.\ Theory Workshop}, pp.~174--177, Sept. 2007.

\bibitem{Pittel-jctb96}
B.~Pittel, J.~Spencer, and N.~Wormald, ``Sudden emergence of a giant k-core in
  a random graph,'' {\em Journal of Combinatorial Theory-Series B}, vol.~67,
  no.~1, 1996.

\bibitem{Barg-it11}
A.~Barg and A.~Mazumdar, ``On the number of errors correctable with codes on
  graphs,'' {\em IEEE Trans.\ Inform.\ Theory}, vol.~57, no.~2, pp.~910--919,
  2011.

\bibitem{Felstrom-it99}
J.~Felstrom and K.~S. Zigangirov, ``Time-varying periodic convolutional codes
  with low-density parity-check matrix,'' {\em IEEE Trans.\ Inform.\ Theory},
  vol.~45, pp.~2181--2191, Sept. 1999.

\bibitem{Lentmaier-it10}
M.~Lentmaier, A.~Sridharan, D.~J. Costello, and K.~S. Zigangirov, ``Iterative
  decoding threshold analysis for {LDPC} convolutional codes,'' {\em IEEE
  Trans.\ Inform.\ Theory}, vol.~56, pp.~5274--5289, Oct. 2010.

\bibitem{Kudekar-it11}
S.~Kudekar, T.~J. Richardson, and R.~L. Urbanke, ``Threshold saturation via
  spatial coupling: {W}hy convolutional {LDPC} ensembles perform so well over
  the {BEC},'' {\em IEEE Trans.\ Inform.\ Theory}, vol.~57, pp.~803--834, Feb.
  2011.

\bibitem{Lentmaier-isit10}
M.~Lentmaier and G.~P. Fettweis, ``On the thresholds of generalized {LDPC}
  convolutional codes based on protographs,'' in {\em Proc.\ IEEE Int.\ Symp.\
  Inform.\ Theory}, (Austin, TX, USA), pp.~709--713, June 2010.

\bibitem{Truhachev-isit03}
D.~Truhachev, M.~Lentmaier, and K.~Zigangirov, ``On braided block codes,'' in
  {\em Proc.\ IEEE Int.\ Symp.\ Inform.\ Theory}, p.~32, 2003.

\bibitem{Feltstrom-it09}
A.~J. Feltstrom, D.~Truhachev, M.~Lentmaier, and K.~S. Zigangirov, ``Braided
  block codes,'' {\em IEEE Trans.\ Inform.\ Theory}, vol.~55, no.~6,
  pp.~2640--2658, 2009.

\bibitem{Justesen-commag10}
J.~Justesen, K.~J. Larsen, and L.~A. Pedersen, ``Error correcting coding for
  {OTN},'' {\em IEEE Commun.\ Magazine}, vol.~48, no.~9, pp.~70--75, 2010.

\bibitem{Justesen-toc11}
J.~Justesen, ``Performance of product codes and related structures with
  iterated decoding,'' {\em IEEE Trans.\ Commun.}, vol.~59, no.~2,
  pp.~407--415, 2011.

\bibitem{Smith-jlt12}
B.~P. Smith, A.~Farhood, A.~Hunt, F.~R. Kschischang, and J.~Lodge, ``{Staircase
  Codes: FEC for 100 Gb/s OTN},'' {\em J.\ Lightwave Technol.}, vol.~30, no.~1,
  pp.~110--117, 2012.

\bibitem{Zhang-isit15}
L.~M. Zhang, D.~Truhachev, and F.~R. Kschischang, ``Spatially-coupled
  split-component codes with bounded-distance component decoding,'' in {\em
  Proc.\ IEEE Int.\ Symp.\ Inform.\ Theory}, pp.~56--60, 2015.

\bibitem{Truhachev-isit16}
D.~Truhachev, A.~Karami, L.~M. Zhang, and F.~R. Kschischang, ``Decoding
  analysis accounting for mis-corrections for spatially-coupled split-component
  codes,'' in {\em Proc.\ IEEE Int.\ Symp.\ Inform.\ Theory}, pp.~2124--2128,
  2016.

\bibitem{Hager-isit16}
C.~H{\"a}ger, H.~D. Pfister, A.~{Graell i Amat}, and F.~Br{\"a}nnstr{\"o}m,
  ``Deterministic and ensemble-based spatially-coupled product codes,'' in {\em
  Proc.\ IEEE Int.\ Symp.\ Inform.\ Theory}, (Barcelona, Spain),
  pp.~2114--2118, 2016.

\bibitem{Hager-it17}
C.~H{\"a}ger, H.~D. Pfister, A.~{Graell i Amat}, and F.~Br{\"a}nnstr{\"o}m,
  ``Density evolution for deterministic generalized product codes on the binary
  erasure channel at high rates,'' {\em IEEE Trans.\ Inform.\ Theory}, 2017.

\bibitem{RU-2008}
T.~J. Richardson and R.~L. Urbanke, {\em Modern Coding Theory}.
\newblock New York, NY: Cambridge University Press, 2008.

\bibitem{Yedla-istc12}
A.~Yedla, Y.-Y. Jian, P.~S. Nguyen, and H.~D. Pfister, ``A simple proof of
  threshold saturation for coupled scalar recursions,'' in {\em Proc.\ Int.\
  Symp.\ on Turbo Codes \& Iterative Inform.\ Proc.}, pp.~51--55, Aug. 2012.

\bibitem{Yedla-it14}
A.~Yedla, Y.-Y. Jian, P.~S. Nguyen, and H.~D. Pfister, ``A simple proof of
  {M}axwell saturation for coupled scalar recursions,'' {\em IEEE Trans.\
  Inform.\ Theory}, vol.~60, pp.~6943--6965, Nov. 2014.

\bibitem{Kim-com96}
M.~G. Kim and J.~H. Lee, ``Undetected error probabilities of binary primitive
  {BCH} codes for both error correction and detection,'' {\em IEEE Trans.\
  Commun.}, vol.~44, no.~5, pp.~575--580, 1996.

\bibitem{Sidelnokiv-ppi71}
V.~M. Sidel'nikov, ``Weight spectrum of binary {B}ose-{C}haudhuri-{H}oquinghem
  codes,'' {\em Problems of Inform.\ Transm.}, vol.~7, no.~1, pp.~14--22, 1971.

\bibitem{Papoulis-2002}
A.~Papoulis and S.~U. Pillai, {\em Probability, Random Variables, and
  Stochastic Processes}.
\newblock New York, NY, USA: McGraw-Hill, 4th~ed., 2002.

\bibitem{Hassani-jsp13}
S.~H. Hassani, N.~Macris, and R.~Urbanke, ``Threshold saturation in spatially
  coupled constraint satisfaction problems,'' {\em J. Stat. Phys.}, vol.~150,
  no.~5, pp.~807--850, 2013.

\bibitem{Kudekar-unpub13}
S.~Kudekar, T.~Richardson, and R.~Urbanke, ``Wave-like solutions of general
  one-dimensional spatially coupled systems.'' submitted to {\em IEEE Trans. on
  Inform. Theory}, [Online]. Available: http://arxiv.org/abs/1208.5273v2, Aug.
  2013.

\bibitem{Olver-2010}
F.~W.~J. Olver, D.~W. Lozier, R.~F. Boisvert, and C.~W. Clark, eds., {\em NIST
  Handbook of Mathematical Functions}.
\newblock New York, NY: Cambridge University Press, 2010.
\newblock [Online]. Available: http://dlmf.nist.gov/.

\bibitem{Hwang-tpa11}
H.-K. Hwang and V.~Zacharovas, ``Uniform asymptotics of {P}oisson approximation
  to the {P}oisson-binomial distribution,'' {\em Theory of Probab.\ \& its
  Appl.}, vol.~55, no.~2, pp.~198--224, 2011.

\end{thebibliography}

\end{document}